\documentclass[11pt]{article}

\setlength{\textwidth}{6.5in}
\setlength{\oddsidemargin}{0in}
\topmargin -0.7in
\textheight 9in

\usepackage{graphicx}
\usepackage{color}
\usepackage{rotating}
\usepackage{amsmath}
\usepackage{amssymb}
\usepackage{amsthm}
\usepackage{float}
 \usepackage{url}
\usepackage{subfigure}

\usepackage{natbib} 

\usepackage{enumerate}
\usepackage{enumitem}


\usepackage{dsfont}
 
 \usepackage{algorithm}
\usepackage{algorithmic}
 
\usepackage{array,arydshln}

\newtheorem{theorem}{Theorem}[section]
\newtheorem{lemma}[theorem]{Lemma}

\newtheorem{remark}{Remark}[section]

\usepackage{soul} 

 
 \usepackage{xcolor}

\usepackage{hyperref} 

\hypersetup{
    colorlinks,
    linkcolor= {blue!90!black}, 
    citecolor={blue!60!black},
    urlcolor={blue!80!black}
}



\newcommand\bb {\mathbf b}
\newcommand\bc {\mathbf c}

\newcommand\bg {\mathbf g}
\newcommand\bh {\mathbf h}

\newcommand\bese {\mathbf s}

\newcommand\bu {\mathbf u}

\newcommand\bx {\mathbf x}

\newcommand\bz {\mathbf z}

\newcommand\bA {\mathbf A}
\newcommand\bB {\mathbf B}

\newcommand\bV {\mathbf V}
\newcommand\bW {\mathbf W}
\newcommand\bX {\mathbf X}

\newcommand\bZ {\mathbf Z}

\newcommand{\ELE}{\mathbb{L}}

\newcommand{\wELE}{\widehat{\ELE}}


\newcommand\wbc {\widehat{\bc}}

\newcommand\wbu {\widehat{\bu}}

\newcommand\wA {\widehat{A}}

\newcommand\wB {\widehat{B}}
\newcommand\wC {\widehat{C}}

\newcommand\wW {\widehat{W}}
\newcommand\wbW {\widehat{\bW}}
\newcommand\wY {\widehat{Y}}


\newcommand\wtbb {\widetilde{\bb}}

\newcommand\wtf {\widetilde{f}}


\newcommand\itA {{\mathcal{A}}}
\newcommand\itB {{\mathcal{B}}}
\newcommand\itC {{\mathcal{C}}}
\newcommand\itD {{\mathcal{D}}}
\newcommand\itE {{\mathcal{E}}}
\newcommand\itF {{\mathcal{F}}}
\newcommand\itG {{\mathcal{G}}}

\newcommand\itI {{\mathcal{I}}}
\newcommand\itJ {{\mathcal{J}}}

\newcommand\itL {{\mathcal{L}}}

\newcommand\itS {{\mathcal{S}}}

\newcommand\itV {{\mathcal{V}}}


\newcommand\bbe {\mbox{\boldmath $\beta$}}

\newcommand\bbech {\mbox{\scriptsize${\bbe}$}}

\newcommand\bla {\mbox{\boldmath $\lambda$}}

\newcommand\blach {\mbox{\footnotesize$\bla$}}

\newcommand\bSi {\mbox{\boldmath $\Sigma$}}
\newcommand\bUpsi {\mbox{\boldmath $\Upsilon$}}

\newcommand\wbeta {\widehat{\beta}}
\newcommand\wbbe {\widehat{\bbe}}

\newcommand\weta {\widehat{\eta}}

\newcommand\wbla {\widehat{\bla}}
\newcommand\wblach {\mbox{\scriptsize\boldmath $\widehat{\lambda}$}}

\newcommand\wmu {\widehat{\mu}}

\newcommand\wsigma {\widehat{\sigma}}


\newcommand\wtbbe {\widetilde{\bbe}}

\newcommand\wtbeta {\widetilde{\beta}}




\newcommand{\tuk}{\mbox{\scriptsize \sc t}}

\newcommand{\ini}{\mbox{\footnotesize \sc ini}}


\def\real{\mathbb{R}}
\def\natu{\mathbb{N}}
\def\qu{\mathbb{Q}}


\newcommand{\esp}{\mathbb{E}}
\newcommand{\prob}{\mathbb{P}}

\newcommand{\var}{\mbox{\sc Var}}

\newcommand{\convpp}{ \buildrel{a.s.}\over\longrightarrow}

\newcommand{\convprob  }{ \buildrel{p}\over\longrightarrow}


\newcommand{\trasp}{^{\mbox{\footnotesize \sc t}}}

\newcommand\bcero {{\bf{0}}}
\newcommand\buno {{\bf{1}}}


\def\dst{\displaystyle}

\def\median{\mathop{\mbox{median}}}

\def\argmin{\mathop{\mbox{argmin}}}

\newcommand{\cero}{\textbf{0}}

\newcommand\noi{\noindent}
\def\dst{\displaystyle}

\parskip 2mm

\def\square{\ifmmode\sqr\else{$\sqr$}\fi}
\def\sqr{\vcenter{
         \hrule height.1mm
         \hbox{\vrule width.1mm height2.2mm\kern2.18mm
\vrule width.1mm}
         \hrule height.1mm}}


\newcommand{\act}{\mbox{\footnotesize\sc i}}
\newcommand{\noact}{\mbox{\footnotesize\sc ii}}

\begin{document}
 
\title{Robust variable selection for partially linear additive models}
\author{Graciela Boente$^1$ and  Alejandra Mercedes Mart\'{\i}nez$^2$\\ 
\small $^1$  CONICET and Universidad de Buenos Aires,  Argentina\\
\small $^2$   CONICET and Universidad Nacional de Luj\'an,   Argentina 
}
\date{}

\maketitle
\begin{abstract}
Among semiparametric regression models, partially linear additive models provide a useful tool to include additive nonparametric
components as well as a parametric component, when explaining   the  relationship between the response  and a set of explanatory variables. This paper concerns such models under sparsity assumptions for the covariates included in the linear component. Sparse covariates are frequent in  regression problems where the task of variable selection is usually of interest. As in other settings, outliers either in the residuals or in the covariates involved in the linear component have a  harmful effect. To simultaneously achieve  model selection for the parametric component of the model and resistance to outliers, we combine  preliminary robust estimators of the additive component, robust linear $MM-$regression estimators with a penalty such as SCAD on the coefficients in the parametric part. Under mild assumptions,  consistency results and rates of convergence for the proposed estimators are derived.  A Monte Carlo study is carried out to compare, under different models and contamination schemes, the performance of  the robust proposal  with its classical counterpart. The obtained results show the advantage of using the robust approach.  Through the analysis of a real data set, we also illustrate   the benefits of the proposed procedure.  
\end{abstract}

\noi \textbf{Keywords:} Partially Linear Additive Models; Penalties;  Robust Estimation; Sparse Regression Models 

\noi\textbf{AMS Subject Classification:}  62F35; 62G25

\normalsize
\newpage

\section{Introduction}\label{sec:intro}
Partial linear model and partially linear additive regression models (\textsc{plam}) were introduced to   deal with  the  \lq \lq curse of dimensionality\rq \rq present in fully nonparametric regression models. In both cases, we intend to model a response variable $Y$ using some covariates which are split in two subsets of variables where one subset enters into the model through a linear regression and the other   is included through unknown smooth functions.  Partial linear models are useful when the dimension of the latter subset is one or smaller than 3, since otherwise, it still suffers from the \lq \lq curse of dimensionality\rq \rq. Partially linear additive regression models   provide an   attempt to solve this problem by assuming an additive structure in the nonparametric component. More precisely, partially linear additive models   assume  that $(Y_i,\bZ_i\trasp,\bX_i\trasp)\trasp\in\real^{1+q+p}$, $1\leq i\leq n$, are independent and identically distributed vectors with the same distribution as $(Y,\bZ\trasp,\bX\trasp)\trasp$ such that 
\begin{equation}\label{eq:plam}
Y= \mu+\bbe\trasp\bZ+\sum_{j=1}^p \eta_j(X_j)+u\,=\, \mu+\bbe\trasp\bZ+\sum_{j=1}^p \eta_j(X_j)+ \sigma\varepsilon\,,
\end{equation}
 where the constant $\mu\in\real$, the vector $\bbe\in\real^q$, the univariate functions $\eta_j:\itI_j\to\real$, $1\leq j\leq p$, and the scale parameter $\sigma>0$ are the quantities to be estimated and the errors $\varepsilon$ are   independent from the  covariates $(\bZ\trasp,\bX\trasp)\trasp$. When second moments exist,  it is  assumed that $\esp(\varepsilon) =0$ and $\var(\varepsilon)=1$, so $\sigma>0$ stands for the unknown   scale parameter. In the context of robust regression,  these  two requirements are avoided by requiring that the error $\varepsilon$ has a symmetric distribution $F(\cdot)$ where the scale parameter of $F(\cdot)$ equals 1 to identify $\sigma$.

In order to ensure the identifiability of the additive components $\eta_j$,   we require  that   $\int_{\itI_j}\, \eta_j(x)\,dx=0$, for $1\leq j\leq p$.  Furthermore, without loss of generality we assume that $\itI_j=[0,1]$, for $1\le j\le p$.
 
The partially linear additive regression model \eqref{eq:plam} includes as particular cases  the partial lineal model when $p=1$  and  the additive one  when the regression parameter equals zero and the linear regression model when $\eta_j \equiv 0$, for $1\le j\le p$.  It is worth mentioning that partially linear additive models  allow to include in the linear  component  covariates which are discrete, dummy or unbounded. Besides, as mentioned in \citet{Ma:2012} and \citet{Opsomer:Ruppert:1999}, they provide a parsimonious model which can be easily interpreted and they are suitable   when the user is   quite  certain of  the linear relation  between the response and some subset of covariates, but not about the shape of the relationship with the other ones.  

In contrast to kernel methods, splines and methods based on series allow to provide simultaneous estimators of the parameter $\bbe$ and the  nonparametric
components. Among others, we can mention the papers by  \citet{Stone:1985}  for additive models, by \citet{Li:2000} who considered general series  and by \citet{Ma:Yang:2011} who combined   spline estimation and kernel methods. Robust procedures based on splines were studied in    \citet{He:Shi:1996} and \citet{he:zhu:fung:2002} for  the particular case of partial linear models, that is, when $p=1$, while \citet{boente:martinez:2023} proposed $MM-$estimators for partial linear additive models.

In practice, in a first step of modelling, researchers frequently introduce all possible variables  in the model based on their own experience. In this sense, some variables that have a null impact on the response variable will reduce the prediction capability of the model.   As it is well known, sparse statistical models correspond to situations where there are only a small number of non--zero parameters and for that reason, they are much easier to interpret than dense ones, see  \citet{Hastie:etal:book:2015}.  In this way, variable selection plays an important role during modelling.

Sparse models have raised a paradigm shift in statistical modelling, since the traditional estimating approaches to regression do not impose any restrictions on the parameters.  It is worth mentioning that one of the main goals under a sparse setting is variable selection, that is, to identify variables related to non--null coefficients. A possible and useful way to perform automatic variable selection is by including a penalty term in the optimization problem that defines the estimators. Some of the advantages of these methods are their strong interpretability and the low computing cost, see \citet{efron:hastie:2016} for an overview on penalized methods. One extended procedure to perform variable selection is to consider LASSO estimators  introduced in \citet{tibshirani:1996}. These estimators, which add to the least squares loss an   $\ell_1$ regularization,   are effective for variable selection,   but tend to choose too many features. \citet{zou:hastie:2005} and \citet{zou:2006}  considered  alternative regularizations. The first authors include a penalty which combines both $\ell_1$ and $\ell_2$ norms and is known as  the Elastic Net penalty. In contrast to these deterministic penalties, \citet{zou:2006} considered  a random penalty,  the adaptive LASSO denoted from now on ADALASSO, which is defined from an initial consistent estimator. For a regression model $Y=  \bbe\trasp\bZ+ \sigma\varepsilon$ with $\bbe\in\real^q$, the  ADALASSO penalty is defined by $ \iota \,  \sum_{j=1}^q  {|\beta_j|}/{|\wtbeta_j|^{\gamma}}$,
for some $\gamma>0$, where we understand that $|\beta_j|/|\wtbeta_j|^{\gamma}=\infty$ if $|\wtbeta_j|=0$ but $|\beta_j|\ne 0$, while $|\beta_j|/|\wtbeta_j|^{\gamma}=0$ if $|\wtbeta_j|=|\beta_j|= 0$.  Elastic Net preserves the sparsity of LASSO and  maintains  some of the desirable predictive properties of Ridge regression, while   ADALASSO gives a more realistic scenario.     \citet{fan:li:2001} and \citet{zhang:2010} proposed alternative penalties, the  SCAD and MCP penalties, respectively, which are bounded and lead to sparse estimators.

In partially linear additive regression models, sparse models for the regression component were considered in \citet{liu:etal:2011}, \citet{du:etal:2012} and  \citet{lian:2012} who  developed   variable selection procedures   based on least squares regression, spline approximations for the nonparametric components and  SCAD or  ADALASSO penalizations on the regression parameter $\bbe$.     
However, these estimators are based on a least squares approach  and assume that  the error   has finite variance, so, as in partial linear models, a small proportion  of atypical data  may seriously  affect the estimations. A more resistant approach based on quantile regression and spline approximation was suggested in \citet{Guo:etal:2013} and  \citet{sherwood:wang:2016} who also considered variable selection in sparse models. An   approach based on penalized splines was discussed in \citet{Koenker:2011}. As mentioned in \citet{boente:martinez:2023} quantile estimators are related to an unbounded loss  function and, for that reason, as in linear regression models, they may be affected  by high--leverage outliers.  

Our motivating example corresponds to  the plasma beta-carotene level data set, collected by \citet{Nierenberg:etal:1989} and available at \url{http://lib.stat.cmu.edu/datasets/Plasma_Retinol}.  This data set, that consists of 315 observations, corresponds to a cross--sectional study developed to investigate the relationship between personal characteristics, dietary factors, ,plasma concentrations of retinol, beta--carotene and other carotenoids. The interest on this data set  arises from the fact that some studies     suggested that low levels of beta--carotene may be associated with an increased risk of cancers such as lung, colon, breast, and prostate cancer, see \citet{harrell:2002}. 

The data  set was also considered in \citet{liu:etal:2011} and \citet{Guo:etal:2013} who proposed a partially linear additive model.  \citet{liu:etal:2011}  estimated the  parameters using a least squares approach combined with  the SCAD penalty, while  \citet{Guo:etal:2013}  used   composite quantile regression and adaptive LASSO. Another difference  between these two papers is the chosen response, which is the  logarithm of the plasma beta--carotene, labelled \textsc{betaplasma},   in  \citet{liu:etal:2011}, and the   \textsc{betaplasma} without any transformation in   \citet{Guo:etal:2013}.  Both authors modelled the response using a \textsc{plam} with  covariates associated to the linear component being the sex,  smoking type,  body mass index, the ingestion or not of vitamin complements, the number of calories consumed per day, the grams of fat consumed per day, the grams of fiber consumed per day,  the dietary  beta-carotene consumed and the number of alcoholic drinks consumed per week, while the age and the cholesterol consumed were  included in the model through additive nonparametric components. In \citet{liu:etal:2011}, the covariate  the grams of fiber consumed per day was included in the linear component, while \citet{Guo:etal:2013} included it in the additive component. Both  \citet{liu:etal:2011} and \citet{Guo:etal:2013} observed the presence of one extremely high leverage point in alcohol consumption  which was deleted prior to the analysis, then only 314 observations were used. 

The effect of vertical and high--leverage outliers in sparse linear regression models when considering unbounded loss functions was discussed in \citet{smucler:yohai:2017} and it is expected that the distortion created by atypical data when considering penalized  least squares or penalized quantile estimators will appear also in  partially linear additive models. This motivates the need of robust procedures that combine a bounded loss function with a penalization to   select significant variables without any prior analysis to discard observations. The procedure may also be useful to identify atypical observations.

The rest of the paper is organized as follows. Section \ref{sec:estimadores} introduces the robust penalized estimators,  a robust procedure to select the penalty parameter is described in Section \ref{sec:RBIC}. The algorithm used to compute the estimators and   possible robust initial estimators of the scale and additive functions are described in Sections \ref{sec:algo} and \ref{sec:Prelim-est}, respectively. The asymptotic properties of the proposed estimators including  consistency results and variable selection properties are stated in Section  \ref{sec:asympt}. Section \ref{sec:monte} reports the results of a Monte Carlo study  conducted to examine the small sample properties of the proposed procedure under different contamination schemes. The usefulness of the proposed methodology is illustrated in Section 
\ref{sec:realdata} on the   real data set  described above, while some final comments are presented in Section \ref{sec:coments}. All proofs are relegated to the Appendix.

\section{The robust penalized estimators}{\label{sec:estimadores}}
As mentioned in the Introduction, variable selection is an important issue when too many variables are introduced in the model even when only a small group of them are relevant. Penalized regression procedures bet  on the sparsity principle and have shown to be effective in variable selection, when considering appropriate penalties.  Among them, the SCAD penalty, proposed by \citet{fan:li:2001}, has advantages over the $\ell_q$ and the hard thresholding penalties due to the sparsity and continuity properties of the resulting estimators. For $\lambda>0$, the SCAD penalty function is defined as $ \sum_{s=1}^q p_{\lambda}(|b_s|)$ where $p_\lambda:\real_{\ge 0} \to \real_{\ge 0}$ is given by 
\begin{equation}
\label{eq:SCADpenalidad}
p_\lambda(b)=
\lambda b \buno_{\{ b \leq \lambda\}} - \frac{1}{2(a-1)}
\left(   b^2-2\, a \,\lambda\, b + \lambda^2 \right)  \buno_{ \{  \lambda  < b\leq a \lambda\}} + 
\frac{\lambda ^2 (a+1)}{2 } \buno_{\{ b > a \lambda\}}\,,
\end{equation}
  with $\buno_A$  the indicator of the set $A$.  Another well known penalty is the minimax concave penalty (MCP) proposed by \citet{zhang:2010}, which corresponds to the choice
\begin{equation}
p_\lambda(b)  =   
\left(\lambda  b -  \frac{b^2}{2a}\right) \buno_{\{b \leq a\lambda \}}+
  \frac{a \lambda ^2}{2}  \buno_{\{b > a\lambda\}} \,.   
\label{eq:MCPpenalidad}
  \end{equation}
For both penalties, the  positive constant $a$, which is larger than 2 for SCAD, is selected by the user. 

In the non--sparse setting, $MM-$estimators for partially linear additive regression models were defined in  \citet{boente:martinez:2023}. Their approach uses $B-$spline approximations to provide a set of candidates for the estimators of the additive components. As in linear regression models,  to obtain a  scale estimator, they first compute an $S-$estimator using a $\rho-$function $\rho_0$. In a second step, using $\rho-$function  $\rho_{1}$ such that $\rho_{1} \le \rho_{0}$ and the scale estimator, they  define the final $M-$estimator.  As for the least squares or the quantile estimators, the $MM-$estimators do not lead to sparse estimators.
This entails that they do not allow to make variable selection on covariates related to the linear component and may have a bad performance regarding robustness and efficiency. Hence, to improve the behaviour of the robust estimators of $\bbe$, we will to include a regularization term that penalizes candidates without few non--zero components.

To define the penalized estimators, let $\wsigma$, $\wmu$ and $\weta_j$, for $1\leq j\leq p$,  be preliminary strong consistent estimators of $\sigma$, $\mu$ and $\eta_j$, respectively. In Section \ref{sec:Prelim-est}, we discuss a possible choice for the initial estimators.

The penalized estimators will be defined using  a bounded $\rho-$function  $\rho_{1}$ as defined in \citet{maronna:etal:2019libro}. One possible choice for $\rho_1$ is the bisquare Tukey's function defined as $\rho_{1}=\rho_{\,\tuk,\,c_1}$, where $\rho_{\,\tuk,\,c}(t) =\min\left(1 - (1-(t/c)^2)^3, 1\right)$. It is worth mentioning that   the scale estimator $\wsigma$ corresponds to an $S-$estimator obtained using a  $\rho-$function $\rho_0$, then to ensure good robustness properties,   $\rho_1$ must satisfy $\rho_1\leq \rho_0$,  a property which holds when considering the Tukey's loss function if $c_1\geq c_0$. 

To simplify the notation denote as
\begin{align}{\label{eq:funcionLn}}
L_n(a,g_1,\dots,g_p,\varsigma,\bb)&=\frac{1}{n}\sum_{i=1}^n \rho_1\left(\frac{Y_i-a-\bb\trasp\bZ_i-\sum_{j=1}^p g_j(X_{ji})}{\varsigma}\right)\,,
\end{align}
and as 
\begin{equation}{\label{eq:funcionL}}
L(a,g_1,\dots,g_p,\varsigma,\bb)=\esp\rho_1\left(\frac{Y_1-a-\bb\trasp\bZ-\sum_{j=1}^p g_j(X_{j})}{\varsigma}\right)\,,
\end{equation}
its population counterpart. Furthermore,   $\itJ_{\blach}(\bb)$ will stand for a penalty function, chosen by the user,
depending on a tuning parameter $\bla=(\lambda_1,\dots,\lambda_q)\trasp\in \real^q$, which measures the   model complexity, for instance, $\itJ_{\blach}(\bb)=\sum_{s=1}^q p_{\lambda_{s}}(|b_s|)$ with $p_\lambda$ a univariate penalty such as those defined in \eqref{eq:SCADpenalidad} and \eqref{eq:MCPpenalidad}. Note that we allow for different parameters  $\lambda_{s}$  controlling the sparsity of the parametric component for each component of $\bb$.

We define  the penalized robust estimators of $\bbe$ as
\begin{equation}\label{eq:opt-plam}
\wbbe=\argmin_{\bb\in\real^q}PL_{n,\blach}(\bb)\,,
\end{equation}
where
\begin{equation}\label{eq:pl}
PL_{n,\blach}(\bb)=L_n\left(\wmu,\weta_1,\dots,\weta_p,\wsigma, \bb\right)+\itJ_{\blach}(\bb)\,.
\end{equation}

\subsection{Selection of the penalty parameter}{\label{sec:RBIC}}
As discussed for instance in \citet{Efron:etal:2004} and \citet{Meinshausen}, the selection of the penalty parameter plays an important role  when fitting sparse models, since  it tunes the complexity of the model. In this paper, we propose a robust $BIC$ criterion  used to select the penalty parameter. To be more precise, let $\Lambda\subset \real^q$  be the set of possible values for  $\bla$ to be considered. From now on,   $\wsigma$, $\wmu$, $\weta_j$ are the preliminary estimators of $\sigma$, $\mu$ and $\eta_j$, respectively, that do not depend on the penalty parameter. The robust  criterion selects the penalty parameter by minimizing over $\Lambda$  the following $RBIC$ criteria 
\begin{equation}
\label{eq:RBIC}
RBIC(\bla)= \log\left(\wsigma^2\sum_{i=1}^n \rho\left(\frac{r_i(\wbbe_{\blach})}{\wsigma}\right)\right)+ df_{\blach} \frac{\log(n)}{n}\,,
\end{equation}
where $r_i(\bb)=Y_i-\wmu-\sum_{j=1}^p \weta_j(X_{ij})-\bZ_i\trasp\bb$,  $\wbbe_{\blach}$ is the estimator obtained when considering the penalty parameter $\bla$ and $df_{\blach}$ is the number of non--zero  components in $\wbbe_{\blach}$.

\subsection{Algorithm}{\label{sec:algo}}
In this section, we present an algorithm to numerically obtain the estimators defined through  \eqref{eq:opt-plam}. We present a general algorithm that uses a bounded loss function and the $RBIC$ criterion to select $\bla$. The particular case of the penalized least squares estimators combined with  the $BIC$ criterion to select the penalty parameters is easily obtained taking the square loss function.

\begin{algorithm}[ht!]
\caption{General algorithm}
\label{alg1}
\begin{algorithmic}[1]
\STATE Obtain preliminary estimators $\wmu$, $\weta_j$, for $1\leq j\leq p$, and a scale estimator $\wsigma$. A possible choice is to compute the robust estimators obtained with the non-penalized procedure proposed in \citet{boente:martinez:2023}. Another one is to use the preliminary estimators of $\mu$ and $\weta_j$ (and $\wbbe$) described in Section \ref{sec:Prelim-est} and then to consider a robust scale estimator, such as the MAD, over the initial residuals $r_{\ini,i}=Y_i-\wmu-\wbbe\trasp\bZ_i-\sum_{j=1}^p \weta_j(X_{ij})$ to obtain an estimator $\wsigma$ of $\sigma$ or the $S-$scale defined through \eqref{eq:s-est}. 
\STATE Consider a grid for $\bla$ of $N$ elements: $\bla_1,\dots,\bla_N$. 
\FOR{$j = 1$ \TO $N$}  
\STATE Compute the regression estimator defined through  \eqref{eq:opt-plam} using the penalty parameter $\bla_j$, that is, 
\begin{equation*}
\wbbe_{\blach_j}=\argmin_{\bb\in\real^q}PL_{n,\blach_j}(\bb)\,,
\end{equation*}
\STATE Obtain $RBIC(\bla_j)$ defined in  \eqref{eq:RBIC} with $\wbbe_{\blach}=\wbbe_{\blach_j}$.
\ENDFOR
\STATE Select $\widehat{j}$ as
$$\widehat{j}=\argmin_{j=1,\dots,N}\, RBIC(\bla_j) \,.$$
\STATE Set $\wbla=\bla_{\widehat{j}}$ and $\wbbe_{\wblach}$ as the penalized estimator of $\bbe$.
\end{algorithmic}
\end{algorithm}
 
We  still have to describe a procedure to compute the estimators defined through   \eqref{eq:opt-plam} for each fixed value of the penalty parameter.
To simplify the notation, let  $Y_i^{\star}=Y_i-\wmu-\sum_{j=1}^p \weta_j(X_{ij})$, where $\wmu$, $\weta_j$, for $1\leq j\leq p$ are the preliminary estimators of $\mu$ and $\eta_j$, respectively.  We will derive the algorithm for the case where $\itJ(\bb)=\sum_{s=1}^q p_{\lambda_s}(|b_s|)$ and since $\bla$ is fixed, we will simply write $p_{\lambda_s}(\cdot)=p_s(\cdot)$ to avoid burden notation.

In the sequel, we will consider penalties, such as SCAD or MCP, which are such that $p_{\lambda}$ is twice continuously differentiable at $(0, +\infty)$ and $p_{\lambda}(0)=0$. This ensures, that the penalty  can be locally approximated by a quadratic function as done, for instance, in \citet{fan:li:2001}. Let  $t_0$ be an initial point close to $0$, then there exists a quadratic function $q(t)$ such that, $q$ is even,  $q(t_0)=p(|t_0|)$, and $q^{\prime}(t_0)=p^\prime(|t_0|)$.
Using the first condition, we have that $q(t)=a+bt^2$. Then, the two last conditions imply that
$  a+b|t_0|^2= p(|t_0|)$ and $2b|t_0|= p^\prime(|t_0|)$. Using the latter equation, we get that $b=p^\prime(|t_0|)/(2|t_0|)$ and replacing in the former one, we conclude that $a=p(|t_0|)-p^\prime(|t_0|)\,t_0^2/(2|t_0|)$. Therefore, the quadratic approximation of the penalty function equals
$$
q(t)=p(|t_0|)-\frac{p^\prime(|t_0|)}{2|t_0|}t_0^2+\frac{p^\prime(|t_0|)}{2|t_0|}t^2=p(|t_0|)+\frac{p^\prime(|t_0|)}{2|t_0|}(t^2-t_0^2)
$$
that is, for $t$ close to $t_0$, we have that 
$$p(|t|)\approx p(|t_0|)+\frac{p^\prime(|t_0|)}{2|t_0|}(t^2-t_0^2)\,.$$
In this way, taking $\bb_0$ close to the minimizer of \eqref{eq:opt-plam} and $\bb$ close to $\bb_0$ with $|b_{0s}|>0$ for $s=1,\dots, q$, we have that
$$p_s(|b_s|)\approx p_s(|b_{0s}|)+\frac{p_s^\prime(|b_{0s} )}{2|b_{0s}|}(b_s^2-b_{0s}^2)\,.$$
Define the diagonal matrix $\bUpsi_{\bb_0}\in\real^{q\times q}$   as  
$$\bUpsi_{\bb_0}=\mbox{diag}\left\{\frac{p_1^\prime(|b_{01}|)}{2|b_{01}|},\dots,\frac{p_q^\prime(|b_{0q}|)}{2|b_{0q}|}\right\}\,.$$ 
The objective function given in the right hand side of \eqref{eq:opt-plam} can then be approximated as
\begin{equation}\label{eq:quadratic-pl}
\frac{1}{n}\sum_{i=1}^n\rho\left(\frac{r_i(\bb)}{\wsigma}\right)+\bb\trasp \bUpsi_{\bb_0} \bb\,,
\end{equation}
with $r_i(\bb)=Y_i^{\star}- \bb\trasp \bZ_i$ and $\wsigma$ is a preliminary estimator of $\sigma$. Hence, the estimators obtained by minimizing \eqref{eq:quadratic-pl} will be close to the minimizers of \eqref{eq:opt-plam} for $\bla$ fixed.

Finally, the minimization in \eqref{eq:quadratic-pl} can be obtained by a reweighted procedure as it is usual for $M-$estimators. 
Denote $\bb^{(m)}$ the estimator obtained in the $m-$step, $w(t)=\psi(t)/t$ and $r_i(\bb)=Y_i^{\star}-\bb\trasp\bZ_i$. Differentiating \eqref{eq:quadratic-pl} with respect to $\bb$ and then multiplying and dividing by $(Y_i^{\star}-\bZ_i\trasp\bb)/\wsigma$, we easily obtain
\begin{align*}
0&= \frac{1}{n}\sum_{i=1}^n \psi\left(\frac{Y_i^{\star}-\bb\trasp\bZ_i}{\wsigma}\right)\left(-\frac{\bZ_i}{\wsigma}\right)+2\bUpsi_{\bb^{(m)}} \bb  \\
&=\frac{1}{n}\sum_{i=1}^n w\left(\frac{r_i(\bb)}{\wsigma}\right)\frac{Y_i-\bZ_i\trasp\bb}{\wsigma}\left(-\frac{\bZ_i}{\wsigma}\right)+2\,\bUpsi_{\bb^{(m)}}  \bb  \,.
\end{align*}
The estimator  $\bb^{(m+1)}$ of the $(m+1)-$step may be defined as the solution of
$$\frac{1}{n}\sum_{i=1}^n w_{i,m}\frac{Y_i^{\star}-\bb\trasp\bZ_i}{\wsigma}\left(-\frac{\bZ_i\trasp}{\wsigma}\right)+2\,\bUpsi_{\bb^{(m)}}  \bb =0$$
where $w_{i,m}=w(r_i(\bb^{(m)})/\wsigma)$ are the weights obtained with the estimator computed in the $m-$step. Noticing that the latter equation is equivalent to
$$-\frac{1}{n}\sum_{i=1}^n \frac{w_{i,m}}{\wsigma^2}  Y_i^{\star}\, \bZ_i+\left(\frac{1}{n}\sum_{i=1}^n \frac{w_{i,m}}{\wsigma^2}\bZ_i\bZ_i\trasp+2 \bUpsi_{\bb^{(m)}} \right)\bb=0\,,$$
we get that
 $$\bb^{(m+1)}=\left(\frac{1}{n}\sum_{i=1}^n \frac{w_{i,m}}{\wsigma^2}\bZ_i\bZ_i\trasp+2\, \bUpsi_{\bb^{(m)}} \right)^{-1}\frac{1}{n}\sum_{i=1}^n \frac{w_{i,m}}{\wsigma^2} Y_i^{\star} \bZ_i\,.$$
The following Algorithm \ref{alg2} summarizes the described procedure to compute the regression estimators.

\begin{algorithm}[ht!]
\caption{Optimization problem}
\label{alg2}
\begin{algorithmic}[1]
\STATE Let $m = 0$ and $\bb^{(0)}$ be an initial estimator of $\bbe$ and $\wsigma$ an estimator of the scale parameter $\sigma$ as before.
\REPEAT
\STATE $m  \leftarrow m + 1$
\STATE Compute 
$$\bUpsi_{\bb^{(m)}} =\mbox{diag}\left\{\frac{p_1^\prime(|b_{m1}|)}{2|b_{m1}|},\dots,\frac{p_q^\prime(|b_{mq}|)}{2|b_{mq}|}\right\}\,,$$
with $\bb^{(m)}=(b_{m1},\dots,b_{mq})\trasp$.
\STATE Compute $w_{i,m}= w(r_i(\bb^{(m)})/\wsigma)$.
\STATE Define 
$$\bb^{(m+1)}=\left(\frac{1}{n}\sum_{i=1}^n \frac{w_{i,m}}{\wsigma^2}\bZ_i\bZ_i\trasp+2\, \bUpsi_{\bb^{(m)}} \right)^{-1}\frac{1}{n}\sum_{i=1}^n \frac{w_{i,m}}{\wsigma^2} Y_i^{\star} \bZ_i \,.$$
\UNTIL convergence
\end{algorithmic}
\end{algorithm}

\subsection{Preliminary estimates of $\mu$ and $\eta_j$}{\label{sec:Prelim-est}}
In order to obtain a scale estimator $\wsigma$ and preliminary estimators $\wmu$ and $\weta_j$ of $\mu$ and $\eta_j$, for $j=1,\dots,p$,   we will first consider penalized  $S-$estimators. This initial estimators will use a $\rho-$function $\rho_0$ and $B-$splines to approximate each additive function. As in \citet{boente:martinez:2023}, for each $j=1,\dots,p$, once fixed the spline order $\ell_j$, the number of internal knots $N_{n,j}$ and their location, a basis of centered $B-$splines $\{B_s^{(j)}\,:\,1\leq s\leq k_{j}\}$ with $k_j=k_{n,j}=N_{n,j}+\ell_j$ can be used to approximate $\eta_j(x)$ as $\sum_{s=1}^{k_j}\lambda_s^{(j)}B_s^{(j)}(x)$. Taking into account that $\sum_{s=1}^{k_j}B_s^{(j)}(x)=0$, the approximation may be rewritten as
$$\sum_{s=1}^{k_j}\lambda_{s}^{(j)} B_s^{(j)}(x)=\sum_{s=1}^{k_j-1} \left(\lambda_{s}^{(j)} -\lambda_{k_j}^{(j)}\right) B_s^{(j)}(x)\,.$$
Denote $K=\sum_{j=1}^p k_j -p$ the effective dimension of the considered space used to approximate the nonparametric additive components. If we define $\bc^{(j)}=(c_1^{(j)}, \dots, c_{k_j-1}^{(j)})\trasp \in \real^{k_j-1}$ with $c_s^{(j)}=\lambda_{s}^{(j)}-\lambda_{k_j}^{(j)}$ and, for   $1\le i\le n$,  the residuals of the partially linear additive model are
  \begin{equation} \label{eq:residuals}
r_i(a,{\bb},  \bc)=  \, Y_i -a-\bb\trasp\bZ_i-\sum_{j=1}^p \sum_{s=1}^{k_j-1} c_{s}^{(j)} B_s^{(j)}(X_{ij})=  \, Y_i -a-\bb\trasp\bZ_i-\bc\trasp\bV_i\, ,
\end{equation}
where   $\bV_i=(\bV^{(1)}(X_{i1})\trasp,\dots,\bV^{(p)}(X_{ip})\trasp)\trasp$, $\bV^{(j)}(t)=(B_1^{(j)}(t),\dots,B_{k_j-1}^{(j)}(t))\trasp$, for $1\le j\le p$ and $\bc=( \bc^{(1)\mbox{\footnotesize{\sc t}}},\dots,\bc^{(p)\mbox{\footnotesize{\sc t}}})\trasp\in \real^K$. From now on, $\eta_{\bc}$ stands for
 $\eta_{j,\bc}(t)=\sum_{s=1}^{k_j-1}c_{s}  B_s^{(j)}(x)$.

Let now define the optimization problem that will define the initial $S-$estimators. Let $s_n(a,\bb, \bc)$  be the $M-$scale estimator of the residuals related to $\rho_0$, that is,  $s_n(a,\bb, \bc)$ is the solution on $s$ of the implicit equation
$
(1/n)\sum_{i=1}^n \rho_{0}\left( {r_i(a, \bb,\bc)}/{s}\right) \, = \, b $,
where $0<b<1$. Hence,  $s_n(a,\bb, \bc)$  satisfies
\begin{equation*} \label{eq:s-est}
\frac{1}{n}\sum_{i=1}^n \rho_{0}\left(\frac{r_i(a, \bb,\bc)}{s_n(a,\bb, \bc)}\right) \, = \, b\,  .
\end{equation*}
Then, the \lq\lq ridge\rq\rq  ~$S-$estimators are  defined as
$$(\wmu_{\ini}, \wbbe_{\ini},\wbc_{\ini})=   \ \argmin_{a\in \real, \bb\in \real^q, \bc\in \real^K}  \, s^2_n(a,\bb, \bc)+\lambda_1\|\bb\|^2+\lambda_2\sum_{j=1}^p \|\bc^{(j)}\|_{H_j}^2\,,$$
where $\lambda_1=\lambda_{1,n}$ and $\lambda_2=\lambda_{2,n}$ are regularization parameters,  $\|\cdot\|$ is the Euclidean norm in $\real^q$, $\|\bc^{(j)}\|_{H_j}=(\bc^{(j)\mbox{\footnotesize\sc t}}H_j\bc^{(j)})^{1/2}$ with $H_j$ of dimension $(k_j-1)\times (k_j-1)$ and its $(s,s')$ element given by
$\int_0^1 B_s^{(j)}(t)B_{s'}^{(j)}(t)\,dt$, meaning that $\|\bc^{(j)}\|_{H_j}^2= \int_0^1  \eta_{j,\bc^{(j)}}^2(t)dt$. Then, the residual scale estimator is defined as $\wsigma=s_n(\wmu_{\ini}, \wbbe_{\ini},\wbc_{\ini})$ and the estimators of the regression functions $\eta_j$ are given by $\weta_j=\eta_{j,\wbc_{\ini}^{(j)}}$.
It is worth mentioning that these estimators do not select variables, but allow for the estimators computation when $q+K$ is large.
 
\section{Asymptotic results}{\label{sec:asympt}}
\subsection{Consistency results}{\label{sec:consist}}
From now on,  for $1\le j\le p$,  $\itS_{j}$ stands for some finite--dimensional space of dimension $k_j$, that is,  
\begin{equation}
\label{eq:itSj}
\itS_{j}=\left\{  \sum_{s=1}^{k_j} c_s \,   B_s^{(j)}(x) \, ,\, \bc \in \real^{k_j}\right\} \, ,
\end{equation}
where $\{B_s^{(j)}: 1\leq s\leq k_{j}\}$ stands for the linear space basis. Assumption \ref{ass:weta}  below states that, for $1\le j\le p$, $\weta_j\in \itS_j$. As mentioned in  \citet{boente:martinez:2023},  besides the  centered $B-$splines bases of order $\ell_j$ and size $k_j+1$, other basis may be considered. In the sequel, $\|\cdot\|$ refers to the Euclidean norm in $\real^q$ and for any continuous function $v\,:\,\real\to\real$,  $\|v\|_{\infty}=\sup_t|v(t)|$.

In order to derive the asymptotic properties of the penalized estimator, we will assume that
\begin{enumerate}[label=\textbf{C\arabic*}]
\setcounter{enumi}{0}
\item\label{ass:rho_bounded_derivable}
 \begin{itemize}
\item[\textbf{(a)}] The function $\rho_1 \,:\,\real\to [0;+\infty)$ is bounded, continuous, even, non-decreasing in $[0;+\infty)$ and such that $\rho_1(0)=0$. Furthermore, $\lim_{t\to+\infty}\rho_1(t)\neq 0$ and if $0\leq u<v$ with $\rho_1(v)<\sup_t\rho_1(t)$ then $\rho_1(u)<\rho_1(v)$. Without loss of generality, since $\rho_1$ is bounded, we assume that $\sup_t \rho_1(t)=1$.
\item[\textbf{(b)}]  $\rho_1$ is continuously differentiable with bounded derivative $\psi_1$. Moreover, the function $\zeta_1\,:\,\real\to\real$ defined as $\zeta_1(t)=t\psi_1(t)$  is bounded.
 \end{itemize}
\item\label{ass:densidadepsilon} The random variable $\varepsilon$ is distributed as $F_0$ and has density function $f_0(t)$ that is even, monotone non-decreasing in $|t|$, and strictly decreasing for $|t|$ in a neighbourhood of $0$.
\item\label{ass:probaZ}   $\prob(\bZ\trasp\bbe=0)<c $ for all non-zero $\bbe\in\real^q$, for some constant $0<c\le 1-b_{\rho_1}$,  where $b_{\rho_1}=\esp \rho_1(\varepsilon)$.
\item\label{ass:wsigma} $\wsigma$ is a strong consistent estimator of $\sigma$. 
\item\label{ass:weta} For $1\leq j\leq p$, $\weta_j\in\itS_j$, where $\itS_j$ is defined in \eqref{eq:itSj}. Furthermore, $ \sum_{j=1}^p \|\weta_j- \eta_j\|_{\infty}\convpp 0$ and  $\wmu\convpp\mu$.
\end{enumerate}
 
\begin{remark}[\textbf{Comments on assumptions}]{\label{remark:comentarios}}
Assumptions \ref{ass:rho_bounded_derivable} and \ref{ass:densidadepsilon}  are standard conditions in regression models, respectively. The latter is a condition usually required jointly with  \ref{ass:probaZ}  to ensure Fisher--consistency.
 It is worth mentioning that  assumptions \ref{ass:rho_bounded_derivable}, \ref{ass:densidadepsilon} and \ref{ass:probaZ} together with Lemma 3.1 in \citet{yohai:1985} imply that, for any $\varsigma>0$,  the function $\ELE_{\varsigma}:\real^q\to \real$ defined as
 $$\ELE_{\varsigma}(\bb)=L(\mu,\eta_1,\dots,\eta_p,\varsigma,\bb)=\esp\rho_1\left(\frac{Y-\mu-\sum_{j=1}^p \eta_j(X_j)-\bZ\trasp \bb}{\varsigma}\right)\,,$$ 
 has a unique minimum at $\bb=\bbe$.
Strong consistency of the preliminary scale  estimator is stated in \ref{ass:wsigma}, while  in \ref{ass:weta} we require consistency to the estimators of $\mu$ and $\eta_j$, $1\le j\le p$.
 \end{remark}
 
 It is worth noticing that, in Theorem \ref{teo:1} below, the parameter $\bla=\bla_n$ may be deterministic or random and in the
latter situation, the only requirement is that $\itJ_{\blach}(\bbe)\convpp 0$. In particular, when $\itJ_{\blach}(\bb)=\sum_{s=1}^q p_{\lambda_{s}}(|b_s|)$ and $p_{\lambda_{s}}$ are the functions related to the penalties   SCAD or MCP defined through \eqref{eq:SCADpenalidad} or \eqref{eq:MCPpenalidad}, respectively,  this condition holds when $\lambda_{s}=\lambda_{n,s}\convpp 0$, for $1 \le s \le q$. Furthermore, since different penalty parameters are allowed for each coordinate, our results include the  ADALASSO, $\itJ_{\blach}(\bb)=\iota_n \sum_{s=1}^q |b_s|/ |\wbeta_{\ini,s}|$, where $\wbbe_{\ini}$ a preliminary consistent estimator of $\bbe$. Taking $\lambda_{s}=\lambda_{n,s}= \iota_n/|\wbeta_{\ini,s}|$ in Theorem \ref{teo:1}, we get consistency of the  ADALASSO estimator when $ \iota_n\to 0$ as $n\to \infty$ as well as that of the adaptive SCAD.

\begin{theorem}\label{teo:1} 
Let $(Y_i,\bZ_i\trasp,\bX_i\trasp)\trasp$ be i.i.d. observations satisfying \eqref{eq:plam} with the errors $\varepsilon_i$ independent from the vector of covariates $(\bZ_i\trasp,\bX_i\trasp)\trasp$. Let $\rho_1$ be a function satisfying \ref{ass:rho_bounded_derivable}. Let $\wbbe$ be the penalized estimator defined in  \eqref{eq:opt-plam}. Asumme \ref{ass:densidadepsilon} to \ref{ass:weta} hold and that $\itJ_{\blach}(\bbe)\convpp 0$. Then, $\wbbe\convpp \bbe$.
\end{theorem}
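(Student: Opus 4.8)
The plan is to run a standard argmin-consistency argument for $M$-type estimators, the regularization term entering only through the two harmless facts $\itJ_{\blach}\ge0$ and $\itJ_{\blach}(\bbe)\convpp0$. Write the true errors as $\sigma\varepsilon_i=Y_i-\mu-\sum_{j=1}^p\eta_j(X_{ij})$ and set $\Delta_n=|\wmu-\mu|+\sum_{j=1}^p\|\weta_j-\eta_j\|_\infty$, so that $\Delta_n\convpp0$ by \ref{ass:weta} and $\wsigma\convpp\sigma$ by \ref{ass:wsigma}. Since $\rho_1$ is bounded, continuous and non-decreasing on $[0,\infty)$ it is uniformly continuous on $\real$; hence replacing $(\wmu,\weta_1,\dots,\weta_p,\wsigma)$ by $(\mu,\eta_1,\dots,\eta_p,\sigma)$ inside $L_n$ alters its value by at most a quantity tending to $0$ a.s., uniformly over $\bb\in\real^q$ --- the one delicate point being the passage from division by $\wsigma$ to division by $\sigma$ when the residual is large in absolute value, which is handled by splitting at a threshold and using that $\rho_1$ lies within $o(1)$ of $\sup_t\rho_1(t)=1$ beyond it. After this reduction it is enough to compare the i.i.d.\ process $\bb\mapsto n^{-1}\sum_{i=1}^n\rho_1\big((\sigma\varepsilon_i-(\bb-\bbe)\trasp\bZ_i)/\sigma\big)$ with its mean $\ELE_\sigma(\bb)=\esp\rho_1\big((Y-\mu-\sum_j\eta_j(X_j)-\bZ\trasp\bb)/\sigma\big)$, which by the Remark is minimized uniquely at $\bb=\bbe$, with $\ELE_\sigma(\bbe)=b_{\rho_1}$.

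For the empirical part, the class $\{(\varepsilon,\bZ)\mapsto\rho_1(\varepsilon-\bgama\trasp\bZ):\bgama\in\real^q\}$ is uniformly bounded by $1$ and, being the composition of the monotone $\rho_1$ with the finite-dimensional family of affine maps $(\varepsilon,\bZ)\mapsto\varepsilon-\bgama\trasp\bZ$, has polynomial discrimination and is Glivenko--Cantelli; together with the reduction above this gives $\sup_{\bb\in\real^q}|L_n(\wmu,\weta_1,\dots,\weta_p,\wsigma,\bb)-\ELE_\sigma(\bb)|\convpp0$. The step I expect to be the main obstacle is to show that $\bbe$ is not merely the unique minimizer but a well-separated one over all of $\real^q$, namely $\inf_{\|\bb-\bbe\|\ge\epsilon}\ELE_\sigma(\bb)>b_{\rho_1}$ for every $\epsilon>0$. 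On a compact annulus $\epsilon\le\|\bb-\bbe\|\le R$ this is immediate from continuity and uniqueness; for $\|\bb-\bbe\|$ large, write $\bb=\bbe+\bv$ and condition on $\bZ$ to get $\ELE_\sigma(\bbe+\bv)=\esp\,h(\bv\trasp\bZ/\sigma)$, where $h(s)=\esp\rho_1(\varepsilon-s)$ is continuous, non-negative, $h(0)=b_{\rho_1}$ and $h(s)\to1$ as $|s|\to\infty$ by \ref{ass:rho_bounded_derivable}. If $\|\bv_k\|\to\infty$ and $\bv_k/\|\bv_k\|\to\bu$, then $h(\bv_k\trasp\bZ/\sigma)\to1$ on $\{\bu\trasp\bZ\ne0\}$, so Fatou yields $\liminf_k\ELE_\sigma(\bbe+\bv_k)\ge\prob(\bu\trasp\bZ\ne0)=1-\prob(\bu\trasp\bZ=0)$, which by \ref{ass:probaZ} exceeds $b_{\rho_1}$ because $\prob(\bu\trasp\bZ=0)<c\le1-b_{\rho_1}$; since the unit sphere is compact, this forces $\ELE_\sigma\ge b_{\rho_1}+\delta$ outside some ball for a $\delta>0$. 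This is exactly where assumption \ref{ass:probaZ} is indispensable, and where the nonparametric plug-in errors must be absorbed into the bounded loss with some care.

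Finally I would assemble the pieces in the usual way. Given $\epsilon>0$, let $2\tau=\big(\inf_{\|\bb-\bbe\|\ge\epsilon}\ELE_\sigma(\bb)\big)-b_{\rho_1}>0$. On an event of probability one, for all large $n$ both $\sup_{\bb\in\real^q}|L_n(\wmu,\weta_1,\dots,\weta_p,\wsigma,\bb)-\ELE_\sigma(\bb)|<\tau/2$ and $\itJ_{\blach}(\bbe)<\tau/2$ hold, whence $PL_{n,\blach}(\bbe)<b_{\rho_1}+\tau$, while for every $\bb$ with $\|\bb-\bbe\|\ge\epsilon$, using $\itJ_{\blach}\ge0$, $PL_{n,\blach}(\bb)\ge L_n(\wmu,\weta_1,\dots,\weta_p,\wsigma,\bb)>\ELE_\sigma(\bb)-\tau/2\ge b_{\rho_1}+2\tau-\tau/2>b_{\rho_1}+\tau$. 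Hence the minimizer $\wbbe$ must satisfy $\|\wbbe-\bbe\|<\epsilon$ for all large $n$ on that event, and since $\epsilon$ was arbitrary, $\wbbe\convpp\bbe$.
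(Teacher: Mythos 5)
Your proof is correct, and its skeleton (argmin consistency: the penalty enters only through $\itJ_{\blach}\ge 0$ and $\itJ_{\blach}(\bbe)\convpp 0$; a well-separated minimum of $\ELE_\sigma$ established via Fatou's lemma and \ref{ass:probaZ} along directions $\bu$ on the unit sphere) is the same as the paper's. Where you genuinely diverge is in how the nuisance plug-ins and the uniform law of large numbers are handled. The paper writes $L_n(\wmu,\weta_1,\dots,\weta_p,\wsigma,\bb)$ as $\wA_n(\bb)+\wB_n(\bb)+\wC_n(\bb)+L(\mu,\eta_1,\dots,\eta_p,\sigma,\bb)$, controls $\wA_n$ by invoking a uniform LLN over the spline spaces $\itS_j$ (Lemma A.3 of \citet{boente:martinez:2023} --- this is exactly why \ref{ass:weta} demands $\weta_j\in\itS_j$), and bounds $\wB_n,\wC_n$ by first-order Taylor expansions that use the bounded derivative $\psi_1$ and the bounded $\zeta_1(t)=t\psi_1(t)$ from \ref{ass:rho_bounded_derivable}(b). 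You instead strip out $(\wmu,\weta_j,\wsigma)$ \emph{inside the empirical sum} by a direct uniform-in-$\bb$ substitution, splitting at a residual threshold and exploiting that $\rho_1$ is uniformly continuous and within $o(1)$ of $1$ in the tails; this uses only \ref{ass:rho_bounded_derivable}(a), needs neither the differentiability of $\rho_1$ nor the structural requirement $\weta_j\in\itS_j$ (only $\|\weta_j-\eta_j\|_\infty\to 0$), and reduces the empirical-process step to a Glivenko--Cantelli statement for the fixed-nuisance class indexed by $\bb$ alone. That is a cleaner and slightly more general route for this consistency result (the paper's stronger machinery is of course reused later for the rates). Two small points to tighten: $\rho_1$ is even rather than monotone on $\real$, so your VC/polynomial-discrimination claim should go through $\rho_1=\tilde\rho\circ|\cdot|$ with $\tilde\rho$ monotone (the class $\{|\varepsilon-\bgama\trasp\bz|\}$ is VC-subgraph as a maximum of two affine classes); and in the separation step for large $\|\bv\|$ you should spell out the contradiction-with-a-sequence argument that converts the directional Fatou bound into a uniform bound $\ELE_\sigma\ge b_{\rho_1}+\delta$ outside a ball, exactly as the paper does with its minimizing sequence $\bb_m$.
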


 In order to obtain   rates of convergence for the estimator of $\bbe$, some additional conditions will be needed. 
 
\begin{enumerate}[resume,label=\textbf{C\arabic*}]
\item\label{ass:psi-dos-der} $\rho_1$ is twice continuously differentiable with second derivative  $\psi_1^{\prime}$   Lipschitz. Furthermore, $\varphi_1(t)=t\,\psi_1^{\prime}(t)$ is bounded and $\esp \psi_1^{\prime}(\epsilon)>0$.
\item\label{ass:zmom2} $\esp\|\bZ\|^2<\infty$  and the matrix $\bV_{\bz}=\esp\bZ\bZ\trasp$ is non-singular.
\item  \label{ass:zmean0} $\esp(\bZ|\bX)=\bcero_q$. 
\item \label{ass:wetatasa} There exists $M>0$ such that $\lim_{n\to \infty} \prob\left(\max_{1\le j\le p} \|\weta_j-\eta_j\|_{\itL_1}\le M\right)=1$, where for brevity,   $\itL_1=\itC^{1}[0,1]$ stands for the space of continuously differentiable functions on $[0,1]$ with   norm
$\|\eta\|_{\itL_1}=\max(\|\eta\|_{\infty},\|\eta^{\prime}\|_{\infty})$.  
\end{enumerate}

\begin{remark}{\label{remark:comentarios2}}
The condition $\esp \psi_1^{\prime}(\epsilon)>0$ in assumption \ref{ass:psi-dos-der} and the non--singularity of $\bV_{\bz}$ required in assumption \ref{ass:zmom2} ensure that a root-$n$ rate may be achieved. 
Regarding assumption \ref{ass:wetatasa}, the requirement  that $\lim_{n\to \infty} \prob\left(\max_{1\le j\le p} \|\weta_j-\eta_j\|_{\itL_1}\le M\right)=1$, for some positive constant $M$, is fulfilled by the estimators proposed in \citet{boente:martinez:2023} (see Lemma A.5 therein).
\end{remark}

\begin{theorem}\label{teo:rate} 
Let $(Y_i,\bZ_i\trasp,\bX_i\trasp)\trasp$ be i.i.d. observations satisfying \eqref{eq:plam} with the errors $\varepsilon_i$ independent from the vector of covariates $(\bZ_i\trasp,\bX_i\trasp)\trasp$. Let $\rho_1$ be a function satisfying \ref{ass:rho_bounded_derivable} and \ref{ass:psi-dos-der} and assume that  \ref{ass:densidadepsilon} to \ref{ass:weta},  \ref{ass:zmom2}, \ref{ass:zmean0} and \ref{ass:wetatasa} hold.

Let $\wbbe$ be the penalized estimator defined in \eqref{eq:opt-plam} with  $\itJ_{\blach}(\bb)=\itJ_{\blach_n}(\bb)=\sum_{s=1}^q p_{\lambda_{n,s}}(|b_s|)$. 
\begin{enumerate}[label={(\alph*)}]
\item Assume that   $p_{\lambda}(\cdot)$ is twice continuously differentiable in $(0, \infty)$, $p_{\lambda}(s)\ge 0$, for any $s\ge 0$, $p_{\lambda}^{\prime} (|\beta_{\ell}|)\ge 0 $ and $p_{\lambda}(0) = 0$. Let  
\begin{align*}
a_n   =\max_{1\le s\le q: \beta_s\neq 0}\{p_{\lambda_{n,s}}^{\prime}(|\beta_{s}|) \}\quad \mbox{and} \quad
 b_n(\nu)  =\mathop{\sup_{1\le s\le q: \beta_s\neq 0}}_{ \tau \in [-1,1]}\{|p_{\lambda_n,s}^{\prime\,\prime}(|\beta_{s}| + \tau \nu)|   \}\,.
 \end{align*}
If for some $\nu$, $b_n(\nu)\convprob 0$ then $\|\wbbe-\bbe\|=O_{\prob}(n^{-1/2}+a_n)$. 
\item Assume that $\itJ_{\blach_n}(\bb)=\iota_n \sum_{s=1}^q |b_s|/ |\wbeta_{\ini,s}|$,  where $\wbbe_{\ini}$ is a preliminary consistent estimator of $\bbe$. Then if $\iota_n\convprob 0$ and $\sqrt{n}\iota_n=O_{\prob}(1)$, we  have that $\|\wbbe-\bbe\|=O_{\prob}(n^{-1/2}).$
\end{enumerate}
 \end{theorem}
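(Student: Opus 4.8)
The plan is to adapt the argument of \citet{fan:li:2001} to this robust semiparametric setting, the extra difficulty being that the preliminary estimators $\wmu,\weta_1,\dots,\weta_p$ and the merely consistent scale $\wsigma$ enter the loss. Write $\Psi_n(\bb)=L_n(\wmu,\weta_1,\dots,\weta_p,\wsigma,\bb)$, so that $\wbbe=\argmin_{\bb\in\real^q}\{\Psi_n(\bb)+\itJ_{\blach_n}(\bb)\}$, and set $\alpha_n=n^{-1/2}+a_n$ in part (a) and $\alpha_n=n^{-1/2}$ in part (b). The hypotheses here contain those of Theorem \ref{teo:1} (the remaining requirement $\itJ_{\blach_n}(\bbe)\convpp 0$ is easily checked: for SCAD and MCP it holds because $\lambda_{n,s}\convpp 0$, and in part (b) because $\iota_n\convprob 0$ with $\wbbe_{\ini}$ consistent), so we already have $\wbbe\convpp\bbe$. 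It therefore suffices to fix a small $\delta_0>0$, use $\prob(\|\wbbe-\bbe\|\le\delta_0)\to 1$, and establish a uniform quadratic lower bound: there is $c>0$ such that, with probability tending to one,
\begin{equation*}
\bigl(\Psi_n(\bb)+\itJ_{\blach_n}(\bb)\bigr)-\bigl(\Psi_n(\bbe)+\itJ_{\blach_n}(\bbe)\bigr)\ \ge\ c\,\|\bb-\bbe\|^2-O_{\prob}(\alpha_n)\,\|\bb-\bbe\|-o_{\prob}(1)\,\|\bb-\bbe\|^2
\end{equation*}
uniformly over $\|\bb-\bbe\|\le\delta_0$; this right-hand side is positive whenever $K\alpha_n\le\|\bb-\bbe\|\le\delta_0$ for $K$ large enough, which together with consistency forces $\|\wbbe-\bbe\|=O_{\prob}(\alpha_n)$.

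To produce this bound I would Taylor-expand the two pieces. For the loss, assumption \ref{ass:psi-dos-der} ($\rho_1$ twice continuously differentiable, $\psi_1^{\prime}$ Lipschitz, $\varphi_1(t)=t\psi_1^{\prime}(t)$ bounded) gives
\begin{equation*}
\Psi_n(\bb)-\Psi_n(\bbe)=\nabla\Psi_n(\bbe)\trasp(\bb-\bbe)+\frac{1}{2}(\bb-\bbe)\trasp\nabla^2\Psi_n(\wtbb)(\bb-\bbe)\,,
\end{equation*}
with $\wtbb$ between $\bbe$ and $\bb$; a uniform law of large numbers together with $\wsigma\convpp\sigma$ (assumption \ref{ass:wsigma}), $\sum_j\|\weta_j-\eta_j\|_{\infty}\convpp 0$ (assumption \ref{ass:weta}), $\varepsilon\perp\bZ$ and dominated convergence yields $\nabla^2\Psi_n(\wtbb)\to\sigma^{-2}\,\esp\{\psi_1^{\prime}(\varepsilon)\}\,\bV_{\bz}$ uniformly over $\wtbb$ in a small ball around $\bbe$, a matrix positive definite by \ref{ass:psi-dos-der} and \ref{ass:zmom2}; this supplies the $c\,\|\bb-\bbe\|^2$ term. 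For the penalty, write $\itJ_{\blach_n}(\bb)-\itJ_{\blach_n}(\bbe)=\sum_{s:\,\beta_s\ne 0}\{p_{\lambda_{n,s}}(|b_s|)-p_{\lambda_{n,s}}(|\beta_s|)\}+\sum_{s:\,\beta_s=0}p_{\lambda_{n,s}}(|b_s|)$; the last sum is nonnegative (since $p_\lambda\ge 0$, $p_\lambda(0)=0$) and is discarded, while a second-order expansion of the first, using the definitions of $a_n$ and $b_n(\cdot)$, bounds it below by $-\sqrt{q}\,a_n\|\bb-\bbe\|-\frac{1}{2}\,b_n(\delta_0)\|\bb-\bbe\|^2=-O(\alpha_n)\|\bb-\bbe\|-o_{\prob}(1)\|\bb-\bbe\|^2$, using $a_n\le\alpha_n$ and $b_n(\delta_0)\convprob 0$ (taking $\delta_0$ no larger than the $\nu$ in the hypothesis, as $b_n$ is nondecreasing in its argument).

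The crux is showing $\nabla\Psi_n(\bbe)=O_{\prob}(n^{-1/2})=O_{\prob}(\alpha_n)$, that is, that plugging the slowly converging nuisance estimates into the estimating equation does not spoil the parametric rate; this is where assumption \ref{ass:zmean0} is essential. Expanding
\begin{equation*}
\nabla\Psi_n(\bbe)=-\frac{1}{n\wsigma}\sum_{i=1}^n\psi_1\!\left(\frac{\sigma\varepsilon_i+\wg(X_i)}{\wsigma}\right)\bZ_i\,,\qquad \wg(X_i)=(\mu-\wmu)+\sum_{j=1}^p\{\eta_j-\weta_j\}(X_{ij})\,,
\end{equation*}
around $\sigma\varepsilon_i$ and $\sigma$, the leading term $-(n\sigma)^{-1}\sum_i\psi_1(\varepsilon_i)\bZ_i$ is a mean-zero i.i.d. average --- mean zero because $\psi_1$ is odd, $\varepsilon$ is symmetric by \ref{ass:densidadepsilon} and $\varepsilon\perp\bZ$ --- hence $O_{\prob}(n^{-1/2})$ by the central limit theorem, using $\psi_1$ bounded and $\esp\|\bZ\|^2<\infty$. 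Every remaining term --- the first-order contribution $(n\sigma^2)^{-1}\sum_i\psi_1^{\prime}(\varepsilon_i)\wg(X_i)\bZ_i$, the $(\wsigma-\sigma)$-contribution, and the remainders controlled through the Lipschitz continuity of $\psi_1^{\prime}$ --- has, conditionally on the covariates and to leading order, a vanishing mean because $\esp(\bZ\mid\bX)=\bcero_q$; a stochastic equicontinuity argument then upgrades these terms from merely $o_{\prob}(1)$ to $o_{\prob}(n^{-1/2})$, the relevant function class being under control since \ref{ass:wetatasa} confines $\weta_j-\eta_j$ to a fixed ball of the Donsker space $\itC^1[0,1]$, while the boundedness and Lipschitz conditions of \ref{ass:psi-dos-der} and $\wsigma\convpp\sigma$ dispatch the second-order and scale remainders. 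I expect this empirical-process step --- making rigorous the interplay between the data dependence of the nuisance estimates and the orthogonality $\esp(\bZ\mid\bX)=\bcero_q$ --- to be the main obstacle.

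Finally, part (b) follows from part (a) applied to the ADALASSO penalty, which here is $p_{\lambda_{n,s}}(|b_s|)=\iota_n|b_s|/|\wbeta_{\ini,s}|$ and satisfies the structural requirements of part (a) (twice differentiable and nonnegative on $(0,\infty)$, vanishing at $0$, nonnegative derivative). For $s$ with $\beta_s\ne 0$ one has $\lambda_{n,s}=\iota_n/|\wbeta_{\ini,s}|$, so, since $\wbeta_{\ini,s}\convprob\beta_s\ne 0$ and $\sqrt{n}\,\iota_n=O_{\prob}(1)$, $a_n=\max_{s:\,\beta_s\ne 0}p_{\lambda_{n,s}}^{\prime}(|\beta_s|)=\max_{s:\,\beta_s\ne 0}\lambda_{n,s}=O_{\prob}(n^{-1/2})$, while $b_n(\nu)\equiv 0$ because the penalty is linear on $(0,\infty)$; part (a) then yields $\|\wbbe-\bbe\|=O_{\prob}(n^{-1/2}+a_n)=O_{\prob}(n^{-1/2})$.
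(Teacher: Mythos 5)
Your proposal is correct in substance and, for part (a), follows essentially the same route as the paper: the basic inequality from the definition of the minimizer, a second-order expansion of the loss whose Hessian converges to the positive definite matrix $\sigma^{-2}\esp\{\psi_1^{\prime}(\varepsilon)\}\bV_{\bz}$, the bound $\nabla\Psi_n(\bbe)=O_{\prob}(n^{-1/2})$, and the lower bound on the penalty increment through $a_n$ and $b_n(\nu)$ after discarding the nonnegative terms over the inactive coordinates. The two cosmetic differences are that you argue via a uniform lower bound over a ball while the paper evaluates at $\wbbe$ and controls the Hessian only at the random intermediate point (its Lemma A.2), and that you decompose the gradient by Taylor-expanding $\psi_1$ around $\sigma\varepsilon_i$, whereas the paper's Lemma A.1 bounds the whole centered process $\frac{1}{\sqrt{n}}\sum_i\{\psi_1(\cdot)\bZ_i-\bV(\wmu,\weta_1,\dots,\weta_p,\wsigma)\}$ by a bracketing-entropy maximal inequality over the class indexed by $(a,\bg,\varsigma)$ in an $\itL_1$-ball (which is why \textbf{C9} is stated in the $\itC^1$ norm), and then observes that the centering $\bV$ is \emph{identically zero} by $\esp(\bZ\mid\bX)=\bcero_q$ --- no $o_{\prob}(n^{-1/2})$ upgrade of individual remainder terms is needed. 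You correctly flag this empirical-process step as the main unexecuted piece; your sketch would go through, but the cleanest execution is the paper's: center by the conditional expectation, show the centering vanishes exactly, and control the fluctuation uniformly. Where you genuinely diverge is part (b): you obtain it as a corollary of part (a) applied with the random slopes $\lambda_{n,s}=\iota_n/|\wbeta_{\ini,s}|$, noting $a_n=O_{\prob}(n^{-1/2})$ and $b_n(\nu)\equiv 0$, while the paper reproves the penalty lower bound directly on the event $\{|\wbeta_{\ini,s}|\ge A_{\bbech}/2\}$. Your reduction is legitimate (the proof of (a) tolerates random $a_n$ and $b_n$, and the inactive-coordinate penalties, which could have large slopes, are discarded anyway) and is more economical; the paper's inlined version buys nothing beyond avoiding any appeal to a random-$a_n$ reading of part (a).
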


\begin{remark}
 Theorem \ref{teo:rate} implies that when using the hard thresholding or the MCP or  SCAD penalty functions, the penalized estimator is root$-n$ consistent requiring only that $\lambda_{n,s}\to 0$. In contrast and as in regression models, when considering the  ADALASSO a rate for the penalty parameter is needed. 
\end{remark}

\subsection{Variable selection property}{\label{sec:seleccion}}
Let consider the last $q-k$ coordinates of $\bbe$ equal to zero, that is, $\bbe=(\bbe_{\act}\trasp,\cero_{q-k}\trasp)\trasp$ where $\bbe_{\act}\in\real^k$ and let  $\wbbe_{\act}$ stand for the first $k$ coordinates of $\wbbe$ and $\wbbe_{\noact}$ for the remaining $q-k$. The following theorem states that the robust estimator has the sparsity property. Again in Theorem \ref{teo:rate2}, we distinguish the case of the  ADALASSO from that of  penalties that achieve a root-$n$ rate requiring only that the penalty parameter converges to $0$.

\begin{theorem}\label{teo:rate2} 
Let $(Y_i,\bZ_i\trasp,\bX_i\trasp)\trasp$ be i.i.d. observations satisfying \eqref{eq:plam} with the errors $\epsilon_i$ independent from   $(\bZ_i\trasp,\bX_i\trasp)\trasp$.  Let $\wbbe$ be the penalized estimator defined in \eqref{eq:opt-plam}, where the function $\rho_1$  satisfies \ref{ass:rho_bounded_derivable} and \ref{ass:psi-dos-der} and that $\itJ_{\blach}(\bb)=\itJ_{\blach_n}(\bb)=\sum_{s=1}^q p_{\lambda_{n,s}}(|b_s|)$ with    $p_{\lambda}(s)\ge 0$, for any $s\ge 0$,  $p_{\lambda}^{\prime} (|\beta_{\ell}|)\ge 0 $ and $p_{\lambda}(0) = 0$.   Assume that $\sqrt{n}\|\wbbe-\bbe\|=O_{\prob}(1)$ and  \ref{ass:densidadepsilon} to \ref{ass:weta},  \ref{ass:zmom2}, \ref{ass:zmean0}  and   \ref{ass:wetatasa} hold.
\begin{enumerate}[label={(\alph*)}]
\item Assume that    $\lambda_{n,s}=\lambda_s\convprob 0$, $\sqrt{n}\min_{1\le s\le q}\lambda_{n,s} \convprob +\infty$ and that for each $C>0$, there exists a constant $K=K_C$ such that, for $ 1\leq s\leq q$,
\begin{equation}
\label{eq:condicionpl}
 \lim_{n\to \infty}\prob\left(p_{\lambda_s}\left(\frac{|u|}{\sqrt{n}}\right)\ge K \lambda_s \frac{|u|}{\sqrt{n}}\;\quad \mbox{for any $|u|\le C$}\right)=1\,.
\end{equation} 
Then, we have that $ \prob\left(\wbbe_{\noact}=\bcero_{q-k}\right)\to 1\,.$ 

In particular, if, for $1\le s\le q$, $\{\lambda_{n,s}\}_{n\ge 1}$ are deterministic sequences of penalty parameters such that $\lambda_{n,s}\to 0$, $\sqrt{n} \lambda_{n,s} \to +\infty$ and $p_{\lambda}(\cdot)$ is   continuously differentiable in $(0, \infty)$ and 
$$\liminf_{n\to\infty}\liminf_{\theta\to 0^+}p_{\lambda_{n,s}}^\prime(\theta)/\lambda_{n,s}>0\,,$$
for $1\leq s\leq q$, we also have $ \prob\left(\wbbe_{\noact}=\bcero_{q-k}\right)\to 1$.

\item Assume that $\itJ_{\blach_n}(\bb)=\iota_n \sum_{s=1}^q |b_s|/ |\wbeta_{\ini,s}|$,  where $\wbbe_{\ini}$ is a preliminary $\sqrt{n}-$consistent estimator of $\bbe$. Then if  $\sqrt{n}\iota_n=O_{\prob}(1)$ and ${n}\,\iota_n \convprob +\infty$ , we  have that $ \prob\left(\wbbe_{\noact}=\bcero_{q-k}\right)\to 1$.
\end{enumerate}
\end{theorem}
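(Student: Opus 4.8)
The plan is to use a standard perturbation argument around $\bbe$, exploiting that $\sqrt{n}\|\wbbe-\bbe\|=O_{\prob}(1)$, which forces the ``inactive'' coordinates of $\wbbe$ to lie in a shrinking $O_{\prob}(n^{-1/2})$ neighbourhood of $0$. The key observation is that on this scale, the penalty contribution from a genuinely non--zero inactive coordinate would dominate the decrease in the loss $L_n$, contradicting optimality. Concretely, I would argue by contradiction: suppose that with non--vanishing probability some coordinate $\wbeta_s$, $k<s\le q$, is non--zero. Write $\wbbe$ in the form $\bbe+n^{-1/2}\bu$ with $\|\bu\|=O_{\prob}(1)$. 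First I would Taylor--expand $PL_{n,\blach_n}(\wbbe)$ comparing it with $PL_{n,\blach_n}$ evaluated at the vector $\wbbe$ with its $s$--th coordinate set to zero. The loss difference is controlled by the gradient and Hessian of $L_n$: using assumptions \ref{ass:rho_bounded_derivable}, \ref{ass:psi-dos-der}, \ref{ass:zmom2}, \ref{ass:zmean0}, \ref{ass:wetatasa}, together with the preliminary consistency \ref{ass:weta} and \ref{ass:wsigma}, one shows that a uniform quadratic expansion of $L_n$ holds on $O_{\prob}(n^{-1/2})$ balls, so that setting one $O_{\prob}(n^{-1/2})$ coordinate to zero changes $L_n$ by at most $O_{\prob}(n^{-1})$. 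Meanwhile the penalty change is $p_{\lambda_s}(|\wbeta_s|)$, and for $|\wbeta_s|=|u_s|/\sqrt{n}$ with $|u_s|\le C$, condition \eqref{eq:condicionpl} gives $p_{\lambda_s}(|\wbeta_s|)\ge K\lambda_s |u_s|/\sqrt{n}$. Since $\sqrt{n}\,\lambda_s\convprob+\infty$, the penalty gain $\gtrsim \lambda_s |u_s|/\sqrt{n}$ is of strictly larger order than the $O_{\prob}(n^{-1})$ loss decrease whenever $|u_s|$ is bounded away from $0$; hence optimality of $\wbbe$ is violated unless $\wbeta_s=0$. Running this over the finitely many indices $s=k+1,\dots,q$ yields $\prob(\wbbe_{\noact}=\bcero_{q-k})\to 1$.

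For the ``in particular'' statement I would simply verify that the stated deterministic conditions imply \eqref{eq:condicionpl}: continuous differentiability of $p_\lambda$ on $(0,\infty)$ together with $\liminf_n\liminf_{\theta\to 0^+}p_{\lambda_{n,s}}^\prime(\theta)/\lambda_{n,s}>0$ means that for $\theta$ in a fixed right--neighbourhood of $0$ we have $p_{\lambda_{n,s}}^\prime(\theta)\ge K\lambda_{n,s}$ for some $K>0$ and all large $n$; integrating from $0$ to $|u|/\sqrt{n}$ (which lies in that neighbourhood once $n$ is large, uniformly in $|u|\le C$) gives $p_{\lambda_{n,s}}(|u|/\sqrt{n})\ge K\lambda_{n,s}|u|/\sqrt{n}$, which is \eqref{eq:condicionpl}.

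For part (b), the ADALASSO, the argument specializes with $\lambda_{n,s}=\iota_n/|\wbeta_{\ini,s}|$. For an inactive index $s$, $\beta_s=0$ and $\sqrt{n}$--consistency of $\wbbe_{\ini}$ gives $\sqrt{n}\,\wbeta_{\ini,s}=O_{\prob}(1)$, so $n\,\lambda_{n,s}= n\,\iota_n/|\wbeta_{\ini,s}| = (n\iota_n)\cdot(1/|\wbeta_{\ini,s}|)$, and since $n\iota_n\convprob+\infty$ while $\sqrt{n}|\wbeta_{\ini,s}|=O_{\prob}(1)$, one gets $\sqrt{n}\,\lambda_{n,s}\convprob+\infty$. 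The penalty is exactly linear in $|b_s|$, so the analogue of \eqref{eq:condicionpl} holds trivially with $K=1$; repeating the perturbation argument (setting the $s$--th coordinate to zero changes the loss by $O_{\prob}(n^{-1})$ but reduces the penalty by $\iota_n|\wbeta_s|/|\wbeta_{\ini,s}|$, which is of order $\lambda_{n,s}|u_s|/\sqrt{n}$ and hence dominant) gives $\prob(\wbbe_{\noact}=\bcero_{q-k})\to1$. Note $\sqrt{n}\iota_n=O_{\prob}(1)$ is what ensures, via Theorem \ref{teo:rate}(b), the hypothesis $\sqrt{n}\|\wbbe-\bbe\|=O_{\prob}(1)$ that the whole argument rests on.

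The main obstacle I expect is establishing the \emph{uniform} quadratic expansion of $L_n$ over $O_{\prob}(n^{-1/2})$ neighbourhoods of $\bbe$ while the nonparametric preliminary estimators $\wmu,\weta_j$ are plugged in; one must show that replacing $\mu,\eta_j$ by $\wmu,\weta_j$ perturbs the relevant empirical gradient and Hessian of $L_n$ by $o_{\prob}(1)$ (uniformly over the small ball), which is where assumptions \ref{ass:zmean0} (to kill the first--order bias term coming from $\weta_j-\eta_j$) and \ref{ass:wetatasa} (to control the approximation in $\itC^1$ norm), plus the boundedness of $\psi_1,\zeta_1,\varphi_1$ from \ref{ass:rho_bounded_derivable}--\ref{ass:psi-dos-der}, do the real work. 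Once that expansion is in hand, the comparison of orders ($n^{-1}$ loss change versus $\lambda_{n,s} n^{-1/2}$ penalty change) is immediate and the contradiction closes.
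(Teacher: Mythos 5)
Your strategy coincides with the paper's: on the event $\{\sqrt{n}\|\wbbe-\bbe\|\le C\}$ write $\wbbe=\bbe+\wbu/\sqrt{n}$, compare the penalized objective at this point with the competitor obtained by zeroing the inactive coordinates, control the loss difference by a second-order expansion (the paper's Lemmas A.1 and A.2 give $\sqrt{n}\,\nabla\wELE_n(\bbe)=O_{\prob}(1)$ and $\bA_n(\cdot)\convprob\bA$, and assumption \ref{ass:zmean0} is indeed what kills the first-order bias from $\weta_j-\eta_j$), bound the penalty difference from below via \eqref{eq:condicionpl}, and let $\sqrt{n}\min_s\lambda_{n,s}\convprob+\infty$ do the rest; part (b) and the ``in particular'' clause are handled as in the paper. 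Whether you zero one coordinate at a time or the whole inactive block at once is immaterial.

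There is, however, one genuine gap in how you close the comparison. You argue that the penalty gain $K\lambda_s|u_s|/\sqrt{n}$ beats the $O_{\prob}(n^{-1})$ loss change ``whenever $|u_s|$ is bounded away from $0$'' and then conclude ``optimality is violated unless $\wbeta_s=0$''. That inference is not valid: the theorem asserts \emph{exact} zeroes, so you must also rule out $0<|u_s|\ll 1$, and for such $u_s$ a loss change that is merely $O_{\prob}(n^{-1})$ can dominate $\lambda_s|u_s|/\sqrt{n}$. The missing observation --- and the way the paper argues --- is that the loss change is itself proportional to the perturbation being removed: each of the gradient and Hessian contributions to $S_{1,n}(\bu)$ is bounded by $W_n\|\bu_2\|/n$ with $W_n=O_{\prob}(1)$ uniformly over $\|\bu\|\le C$ (for instance the quadratic term satisfies $|(\bu_{n}^{(0)})\trasp\bA_n\bu_{n}^{(0)}|\le \|\bA_n\|\,\|\bu_2\|^2/n\le C\|\bA_n\|\,\|\bu_2\|/n$). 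This yields
\begin{equation*}
V_n(\bu_1,\bu_2)-V_n(\bu_1,\bcero_{q-k})\;\ge\;\frac{\|\bu_2\|}{n}\Bigl(K\,\sqrt{n}\min_{k+1\le s\le q}\lambda_{n,s}-W_n\Bigr)\,,
\end{equation*}
which is strictly positive for \emph{every} $\bu_2\neq\bcero_{q-k}$ with $\|\bu\|\le C$ once $\sqrt{n}\min_{s}\lambda_{n,s}>W_n/K$, an event of probability tending to one; the minimizer over the ball must therefore have $\wbu_2=\bcero_{q-k}$. The same proportionality is needed in your part (b), where the penalty gain $\iota_n|\wbeta_s|/|\wbeta_{\ini,s}|$ must be compared with a loss change bounded by $W_n|\wbeta_s|/\sqrt{n}$. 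With this one-line repair your argument is essentially the paper's proof.
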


\begin{remark}
In the proof of Theorem \ref{teo:rate2}, only the convergence $\sqrt{n}\min_{k+1\le s\le q}\lambda_{n,s} \convprob +\infty$  instead of $\sqrt{n}\min_{1\le s\le q}\lambda_{n,s} \convprob +\infty$  is needed. However, since the number of non--null components is unknown, we require the condition for all  the penalty parameters.

It is worth mentioning that SCAD and MCP penalties satisfy condition \eqref{eq:condicionpl}. Effectively, let us first consider the SCAD penalty. Given $C>0$, as in the proof of Corollary 1 in \citet{Bianco:Boente:Chebi:2022}, taking into account that  $\sqrt{n}\min_{1\le s\le q}\lambda_s \convprob +\infty$, we have that $\prob(\sqrt{n}\min_{1\le s\le q}\lambda_s  >C) \to 1$. Then, if $\itA_{n,C}$ stands for the set $\itA_{n,C}=\{\sqrt{n}\min_{1\le s\le q}\lambda_s  >C\}$, for any $|u|\le C$ we have that  in  $\itA_{n,C}$,  $|u|/\sqrt{n}\le C/\sqrt{n}<\lambda_s$, then
$p_{\lambda_s}\left( |u|/\sqrt{n} \right) =\lambda_s  |u|/\sqrt{n}$, so  \eqref{eq:condicionpl} holds with $K=1$.

When considering the MCP penalty, we use that   $\prob(\itA_{n,C/a}) \to 1$, where $a$ is the fixed constant used in the MCP penalty. Then,  for any $|u|\le C$, in  $\itA_{n,C/a}$,  $|u|/\sqrt{n}\le C/\sqrt{n}< a\, \lambda_s$, so 
$$p_{\lambda_s}\left( |u|/\sqrt{n} \right) =\lambda_s  \frac{|u|}{\sqrt{n}}- \frac{u^2}{2\, a\,n}= \lambda_s  \frac{|u|}{\sqrt{n}}\left(1- \frac{|u|}{2\, a\,\lambda_s\,\sqrt{n}}\right)\,.$$
Using that $C/\sqrt{n}< a\, \lambda_s$, we get that ${|u|}/({2\, a\,\lambda_s\,\sqrt{n}})<1/2$, implying that,  in  $\itA_{n,C/a}$, 
$$p_{\lambda_s}\left( |u|/\sqrt{n} \right) \ge \frac{1}{2}\lambda_s  \frac{|u|}{\sqrt{n}}\,,$$
which entails that \eqref{eq:condicionpl} holds with $K=1/2$.

 For the  ADALASSO penalty, $\itJ_{\blach_n}(\bb)=\iota_n \sum_{s=1}^q |b_s|/ |\wbeta_{\ini,s}|$,  where $\wbbe_{\ini}$ a preliminary consistent estimator of $\bbe$. Thus, with our notation $\lambda_{n,s}= \iota_n / |\wbeta_{\ini,s}|$. Note that
$$p_{\lambda_s}\left( \frac{|u|}{\sqrt{n}} \right) =\frac{\iota_n}{|\wbeta_{\ini,s}|} \frac{|u|}{\sqrt{n}}\,,$$
implies that  condition \eqref{eq:condicionpl}  holds with $K=1$. Besides, to get root-$n$ consistent estimators Theorem \ref{teo:rate} require that $\sqrt{n}\iota_n=O(1)$ in contrast to the requirement   $\sqrt{n}\lambda_{n,s}\to \infty$, for all $1\le s\le q$,  stated in part (a)  of Theorem \ref{teo:rate2}. However, as mentioned above, in the proof the condition $\sqrt{n}\min_{k+1\le s\le q}\lambda_{n,s} \convprob +\infty$ is only needed. For that reason, a root-$n$ consistent estimator is needed when considering the  ADALASSO. In that case, for any $k+1\le s\le q$, we will have that $\sqrt{n}|\wbeta_{\ini,s}|=O_{\prob}(1)$, so  $\sqrt{n}\lambda_{n,s}= n\,\iota_n / |\sqrt{n}\,\wbeta_{\ini,s}|$ will converge to $\infty$ if $n\,\iota_n\to \infty$ as required in (b).  Note also that our statements allow for a random parameter $\iota_n$.
\end{remark}

\section{Monte Carlo Study}{\label{sec:monte}}
This section contains the results of a numerical study designed to compare the robust proposal given in this paper with the corresponding estimator based on least squares, that is, when using $\rho(t)=t^2$ in \eqref{eq:opt-plam}.   For the robust estimator, we considered   the Tukey's bisquare loss function $\rho_c(t)=\min\{1-(1-(t/c)^2)^3,1\}$. The tuning constant $c>0$ balances the robustness and efficiency properties of the associated estimators. For the reported simulation study, we selected the tuning constant as $c=4.685$, which in regression models provides estimators with an 85\% efficiency for Gaussian errors. From now on, we denote the penalized robust procedure proposed in this paper as $\textsc{rob}$, while $\textsc{ls}$ will be used when referring to the approach based on least squares. All computations were carried out in {\texttt{R}} and the code used is available at {\url{https://github.com/alemermartinez/rplam-vs}}.
 
For the preliminary estimators, we considered the proposal of \citet{boente:martinez:2023}. Their approach is based on $B-$splines and as in that paper,  we used  cubic splines and the same number of terms to approximate each additive function. For the robust proposal, the initial estimators also used the Tukey's loss function while for the classical estimator  the squared loss function $\rho(t)=t^2$ is used to compute the initial estimators.

The samples $\{(Y_i,\bZ_i\trasp,\bX_i\trasp)\trasp\}_{i=1}^n$ are generated with the same distribution as $(Y,\bZ\trasp,\bX\trasp)\trasp$, $\bZ=(Z_1,\dots,Z_q)\in\real^q$, $\bX=(X_1,\dots,X_p)\trasp\in\real^p$, with $q=6$ and $p=3$ and sample size $n=400$.  The covariates $\bZ_i=(Z_{i1},\dots,Z_{iq})\trasp$ are generated from a multivariate normal distribution with zero mean and  covariance matrix $\bSi$ with diagonal elements equal to 1 and non--diagonal ones equal to $(\bSi)_{i,\ell}=\mbox{Corr}(Z_{ik},Z_{i\ell})=0.5^{|k-\ell|}$, for $1\leq k,\ell\leq q$. The covariates $X_{ij}$, for $j=1,\dots,p$,  involved in the nonparametric components are uniformly distributed on the interval $(0,1)$   independent from each other and from the vector $\bZ_i$. The distribution of the vector of covariates is similar to the one proposed in Example 2 of \citet{lv:yang:guo:2017}.  The response and the covariates satisfy the partially linear additive model \eqref{eq:plam} with $\mu=0$, $\sigma=1$, $\bbe=(3,-1.5,2,0,0,0)\in\real^6$ and the additive functions are
$$\eta_1(x)=5x-\frac{5}{2},\qquad \eta_2(x)=3(2x-1)^2-1 \quad \mbox{ and }\quad
\eta_3(x)= 60x^3-90x^2+30x\,.$$
In all cases the additive functions are such that $\int_0^1 \eta_j(x)\, dx=0$ for   $j=1,\dots,3$. For clean samples, denoted from now on as $C_0$, the error's distribution is $\varepsilon\sim N(0,1)$.

In order to study the effect of atypical data on the estimators, four different contamination schemes  were considered.  They are characterized  in terms  of the errors $ \epsilon$ and the regression covariates as follows:
\begin{itemize}
\item $C_1$\,:\, $\varepsilon \sim t_3$
\item $C_2$\,:\, $\varepsilon \sim 0.95 N(0,1)+0.05 N(0,100)$.  This contamination corresponds to inflating the error's variance and will only affect the variability of the estimators. 
\item $C_3$\,:\, $\varepsilon \sim 0.85 N(0,1)+0.15 N(15,1)$. This contamination corresponds to vertical outliers. 
\item $C_4$\,:\, In this contamination scheme, which corresponds to bad leverage points, 5\% of the covariates $\bZ_i$ were randomly replaced by $(20,\dots,20)\in\real^6$ without changing the responses obtained under $C_0$.
\end{itemize}
 Contamination $C_1$ corresponds to the case of heavy-tailed errors. As mentioned above, $C_2$ is a variance contamination setting, while $C_3$ is a bias contamination scheme where 15\% of the errors has another normal distribution with  center shifted giving raise to vertical outliers. In scenario $C_4$ the high-leverage points are introduced aiming to affect the estimation of the regression parameter.

Different measures were used for determining the effectiveness in the variable selection results. More precisely, for each sample, we calculated
\begin{itemize}
\item[] \textsc{CN}$_0$: the number of zero components correctly estimated to be zero,
\item[] \textsc{IN}$_0$: the number of non-zero components incorrectly estimated to be zero, and
\item[] \textsc{CF}: that takes 1 if the correct variables are selected and $0$ otherwise.
\end{itemize}
The averages over replications are reported in the Tables and Figures below. Let observe that, for each replication, the measure \textsc{CN}$_0$  belongs to the set $\{0,1,2,3\}$ since there are three zero components and so the closer  the average of this measure is to 3, the better. Similarly, for each replication, the \textsc{IN}$_0$ number also belongs to the set $\{0,1,2,3\}$, since there are also three non-zero components, but, in this case, the closer this measure is to 0, the better. Finally, the \textsc{CF} measure reveals the proportion of times the correct model is selected.

To evaluate the performance of the parametric components, as in \citet{lv:yang:guo:2017}, for each replication, we computed the generalized mean square error (GMSE) defined as $\mbox{GMSE}=(\wbbe-\bbe)\trasp \bSi (\wbbe-\bbe)$, where $\bSi\in\real^{q\times q}$ is the true  covariance matrix of $\bZ$. To avoid possible large values of this measure at some replications, instead of reporting the mean over replications, we report the median over the 500 replications.

This last measure was also considered for the oracle estimators. The oracle estimator is the estimator that \lq\lq knows which are the true non-zero components\rq\rq\, and so it does not need to select variables, meaning that we are considering  the  partially linear additive model \eqref{eq:plam} as above, but the regression covariates correspond to the first three covariates  $(Z_{i1},Z_{i2},Z_{i3})\in\real^3$ and the regression parameter equals $(3,-1.5,2)\trasp$. Hence, for this model the true  covariance matrix of the covariates  belong to $\real^{3\times 3}$ and corresponds to a block of the matrix $\bSi$ defined above. The robust oracle estimators were computed using the approach in \citet{boente:martinez:2023} with the Tukey's loss function with tuning constant $c=4.685$, while the classical ones correspond to choosing  $\rho(t)=t^2$. From now on, we denote OGMSE the results of the GMSE measure for the oracle estimators.

In this numerical study, we select as penalty function  the SCAD penalty taking $a=3.7$, which is the usual value for $a$. At each replication, the penalty parameters $\bla$ for the robust penalized estimator were obtained minimizing  the $RBIC$ criterion defined in Section \ref{sec:RBIC} over the grid $\Lambda=\{\bla=(\lambda_1,\dots,\lambda_6)\trasp\,:\, \lambda_j\in\{0,0.2,0.4,0.6\}, 1\leq j\leq 6\}$ that contains $4096$ vectors.  Over the same grid but with the square loss, the $BIC$ criteria was minimized to obtain the penalty parameters for the classical estimator.

\begin{table}[ht]
\centering
\begin{tabular}{ccccc}
  \hline
 & \textsc{Method} & \textsc{CN}$_0$ & \textsc{IN}$_0$ & \textsc{CF} \\ 
  \hline
$C_0$ & \textsc{ls} & 2.98 \textcolor{gray}{(0.13)} & 0.00 \textcolor{gray}{(0.00)} & 0.98 \textcolor{gray}{(0.13)}\\ 
 & \textsc{rob} & 2.99 \textcolor{gray}{(0.08)} & 0.00 \textcolor{gray}{(0.06)} & 0.99 \textcolor{gray}{(0.09)} \\ \hline
 $C_1$ & \textsc{ls} & 2.85 \textcolor{gray}{(0.35)} & 0.00 \textcolor{gray}{(0.00)} & 0.85 \textcolor{gray}{(0.35)} \\ 
& \textsc{rob} & 3.00 \textcolor{gray}{(0.06)} & 0.01 \textcolor{gray}{(0.12)} & 0.98 \textcolor{gray}{(0.13)} \\ \hline
$C_2$ & \textsc{ls} & 2.60 \textcolor{gray}{(0.53)} & 0.00 \textcolor{gray}{(0.00)} & 0.61 \textcolor{gray}{(0.49)}\\ 
 & \textsc{rob} & 2.99 \textcolor{gray}{(0.08)} & 0.01 \textcolor{gray}{(0.11)} & 0.99 \textcolor{gray}{(0.10)} \\ \hline
$C_3$ & \textsc{ls} & 2.37 \textcolor{gray}{(0.57)} & 0.00 \textcolor{gray}{(0.00)} & 0.41 \textcolor{gray}{(0.49)}\\ 
 & \textsc{rob} & 3.00 \textcolor{gray}{(0.04)} & 0.01 \textcolor{gray}{(0.13)} & 0.99 \textcolor{gray}{(0.11)} \\ \hline
$C_4$ & \textsc{ls} & 0.61 \textcolor{gray}{(0.50)} & 0.00 \textcolor{gray}{(0.00)} & 0.00 \textcolor{gray}{(0.00)} \\ 
& \textsc{rob} & 3.00 \textcolor{gray}{(0.06)} & 0.02 \textcolor{gray}{(0.13)} & 0.98 \textcolor{gray}{(0.13)} \\ 
   \hline
\end{tabular}
\caption{\label{tab:selection} Mean over replications of the measures for variable selection when considering  the   least squares ($\textsc{ls}$) estimators and its robust counterpart ($\textsc{rob}$).  Standard deviations are reported between brackets.} 
\end{table}
 
Table \ref{tab:selection} reports the mean over replications of the three measures  \textsc{CN}$_0$, \textsc{IN}$_0$ and \textsc{CF}, for both the robust and least-squares estimators. Standard deviations are reported between brackets. Recalling that \textsc{CN}$_0$ is the average number of zero components correctly estimated as zero and so the closer to $3$, the better, under $C_0$ both the robust and classical estimators have a good performance showing similar values close  to $3$. In contrast, for the  contamination schemes $C_1$ to $C_4$, the least squares approach leads to   smaller means of the \textsc{CN}$_0$  than  the robust proposal, which behaves similarly across all the contamination settings. It is worth mentioning that under $C_4$ the least squares estimators break  down and have a poor variable selection capability. The bad behaviour of the classical estimator is also reflected on the larger standard deviations obtained when contaminating data arise. 

Regarding the results for the \textsc{IN}$_0$ measure, which corresponds to the proportion of times the non-zero components are detected as zero,  in all scenarios the obtained results for both estimators are $0$  or close to $0$. It is worth mentioning that,  in a few samples, the robust procedure   detects as zero some components that correspond to the first three elements of $\bbe$. Finally, similar conclusions to those obtained with  \textsc{CN}$_0$ are valid when considering  the summary measure \textsc{CF}, which corresponds to the proportion of times that the true model is selected.  Effectively, for clean samples, both estimators  have values of \textsc{CF} close to $1$ and under all contaminations, the robust proposal still provides reliable results close to 1. In contrast, for the classical method  much smaller averages are obtained. In particular, under $C_4$, the least squares estimator never detects the true model, since the mean  of the \textsc{CF} value equal  $0$.  
This effect also becomes evident in   Figure  \ref{fig:C-measure-barplot} which  displays the barplots for the \textsc{CN}$_0$ and \textsc{CF} measures, in panels (a) and (b) respectively. Red bars correspond to the least-square estimator and blue bars to the robust proposal. Both plots have also an indication of the corresponding threshold. As it was already mentioned, while all the robust approaches show values close to the thresholds, the $\textsc{ls}-$ estimators perform poorly under the contamination settings, especially under $C_4$.

\begin{figure}[ht!]
\begin{center}
\begin{tabular}{cc}
(a) & (b) \\[-4ex]
\includegraphics[scale=0.4]{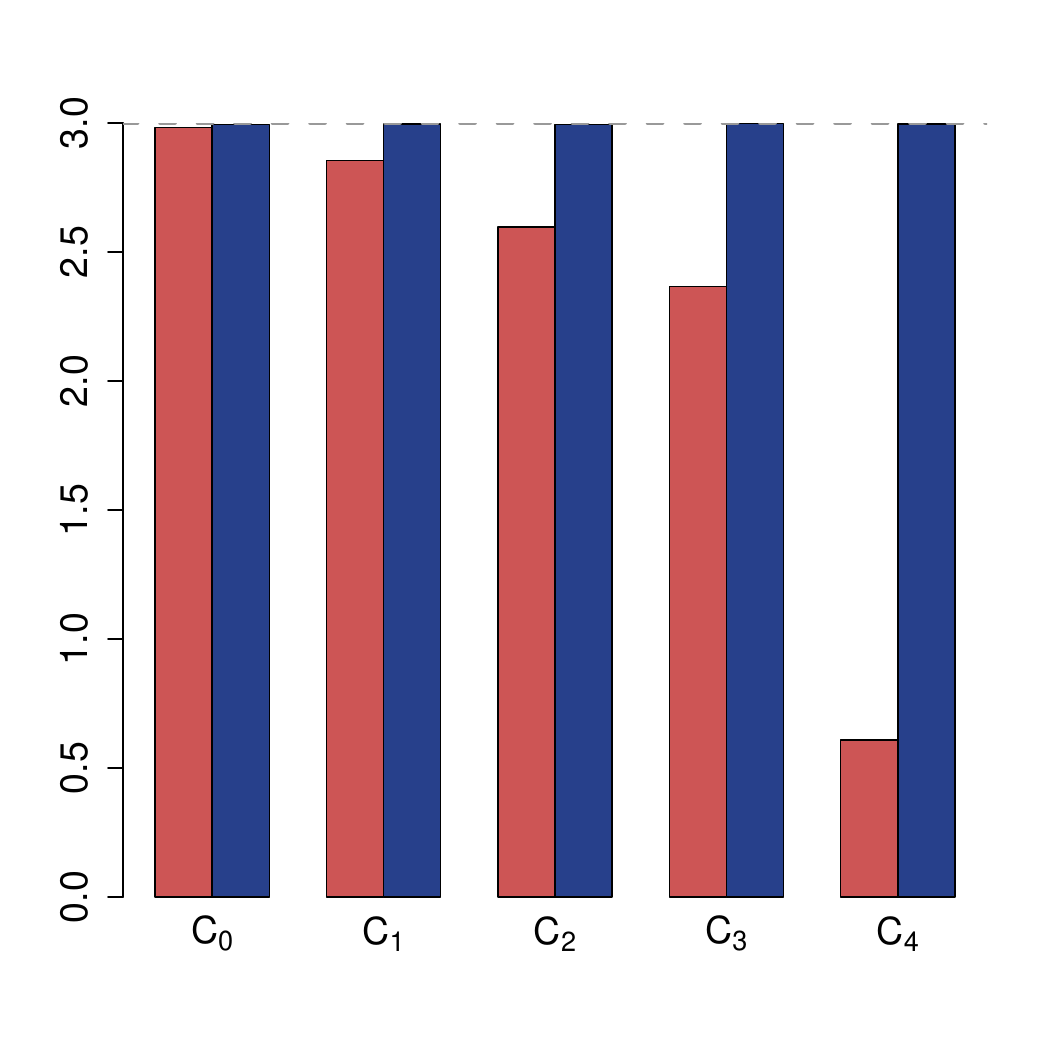} & \includegraphics[scale=0.4]{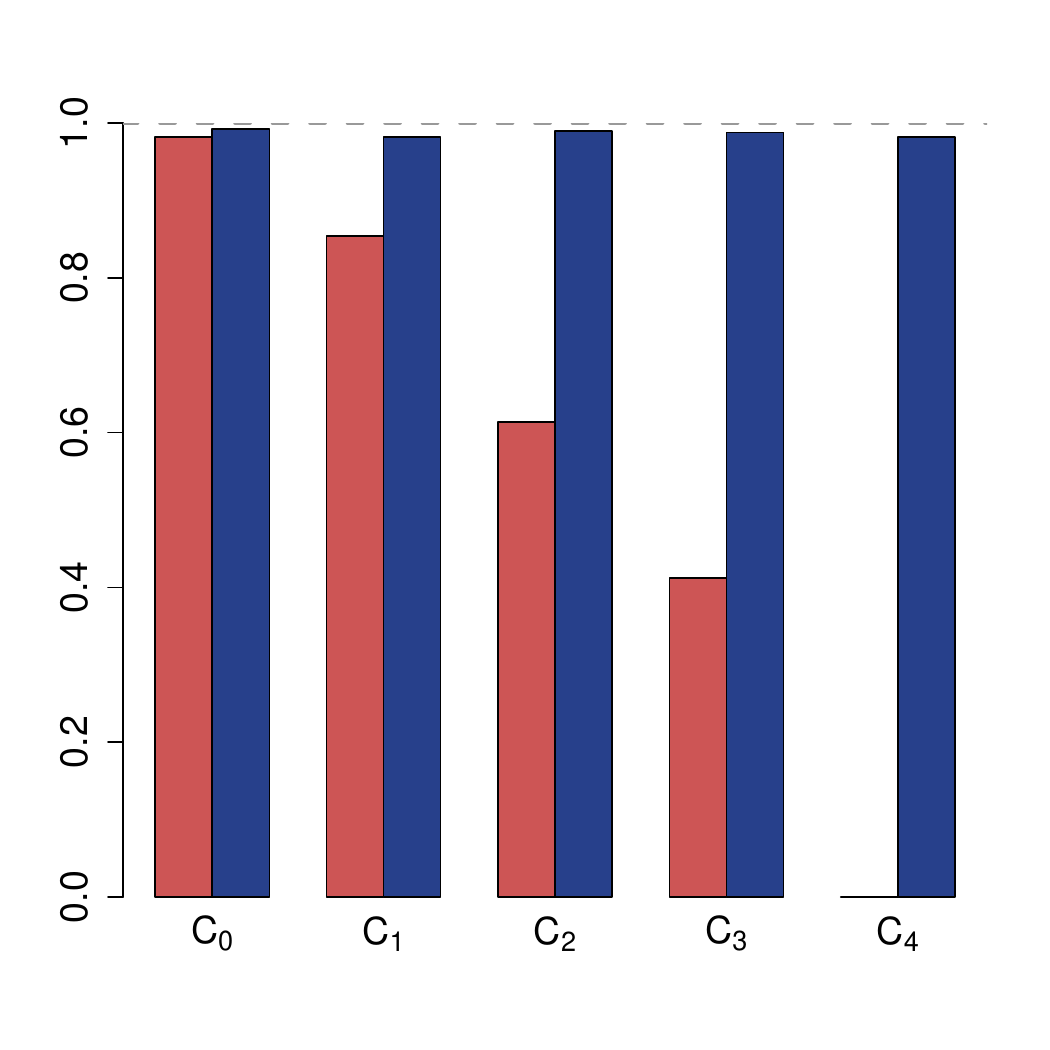}
\end{tabular}
\vskip-0.2in
\caption{\label{fig:C-measure-barplot} Panels (a) and (b) display the plot of the \textsc{CN}$_0$   and \textsc{CF} values,  across all the contamination settings, respectively. The blue bars correspond to the results obtained with the robust proposal and the red ones  to those of the $\textsc{ls}-$estimator.}
\end{center}
\end{figure}

\begin{table}[ht]
\centering
\begin{tabular}{cccc}
  \hline
 & \textsc{Method} & OGMSE  & GMSE \\ 
  \hline
$C_0$ & \textsc{ls} & 0.006 \textcolor{gray}{(0.00)} & 0.011 \textcolor{gray}{(0.01)}\\ 
 & \textsc{rob} & 0.006 \textcolor{gray}{(0.01)} & 0.012 \textcolor{gray}{(0.01)} \\ \hline
 $C_1$ & \textsc{ls} & 0.016 \textcolor{gray}{(0.01)} & 0.032 \textcolor{gray}{(0.03)} \\ 
& \textsc{rob} & 0.010 \textcolor{gray}{(0.01)} & 0.011 \textcolor{gray}{(0.01)} \\ \hline
$C_2$ & \textsc{ls} & 0.036 \textcolor{gray}{(0.03)} & 0.068 \textcolor{gray}{(0.06)} \\ 
 & \textsc{rob} & 0.007 \textcolor{gray}{(0.01)} & 0.011 \textcolor{gray}{(0.01)} \\ \hline
$C_3$ & \textsc{ls} & 0.064 \textcolor{gray}{(0.06)} & 0.132 \textcolor{gray}{(0.12)} \\ 
 & \textsc{rob} & 0.006 \textcolor{gray}{(0.01)} & 0.011 \textcolor{gray}{(0.01)} \\ \hline
$C_4$ & \textsc{ls} & 6.965 \textcolor{gray}{(0.16)} & 4.599 \textcolor{gray}{(0.15)}\\ 
& \textsc{rob} & 0.007 \textcolor{gray}{(0.01)} & 0.009 \textcolor{gray}{(0.01)} \\ 
   \hline
\end{tabular}
\caption{\label{tab:estimation} Median over replications  of the GMSE and OGMSE,  for the classical  ($\textsc{ls}$) and  robust procedures ($\textsc{rob}$). Between brackets,   the \textsc{mad} is reported.} 
\end{table}

Table \ref{tab:estimation} contains summary measures  for the  generalized mean square error (GMSE). More precisely, we report  the median over replications   of the GMSE and     the  \textsc{mad}  between brackets. Similar summary measures  are given for the OGMSE. For clean samples, both  approaches show similar results and the obtained results for the GMSE are approximately twice those obtained for   oracle estimators.  This effect which is due to the penalizing process, might be explained with the grid used for selecting the regularization parameters. It is also worth mentioning that the GMSE of the robust estimator is larger than that of the least squares one, due to its loss of efficiency which is related to the ratio $\esp \psi^2(\epsilon)\,\{\esp \psi^{\prime}(\epsilon)\}^{-2}$ and is not included in the expression of the GMSE.
For the contamination schemes $C_1$ to $C_4$,   similar values of GMSE than  those under $C_0$ are obtained for the the robust proposal. In contrast,  the least squares estimator increases the median over replications  in about 3, 6, 12 and 418 times, respectively. As it is expected, these disastrous performance is also observed for the oracle estimators.  In particular,   high leverage outliers have a damaging effect  on the classical regression estimators. 

\begin{figure}[ht!]
\begin{center}
\includegraphics[scale=0.4]{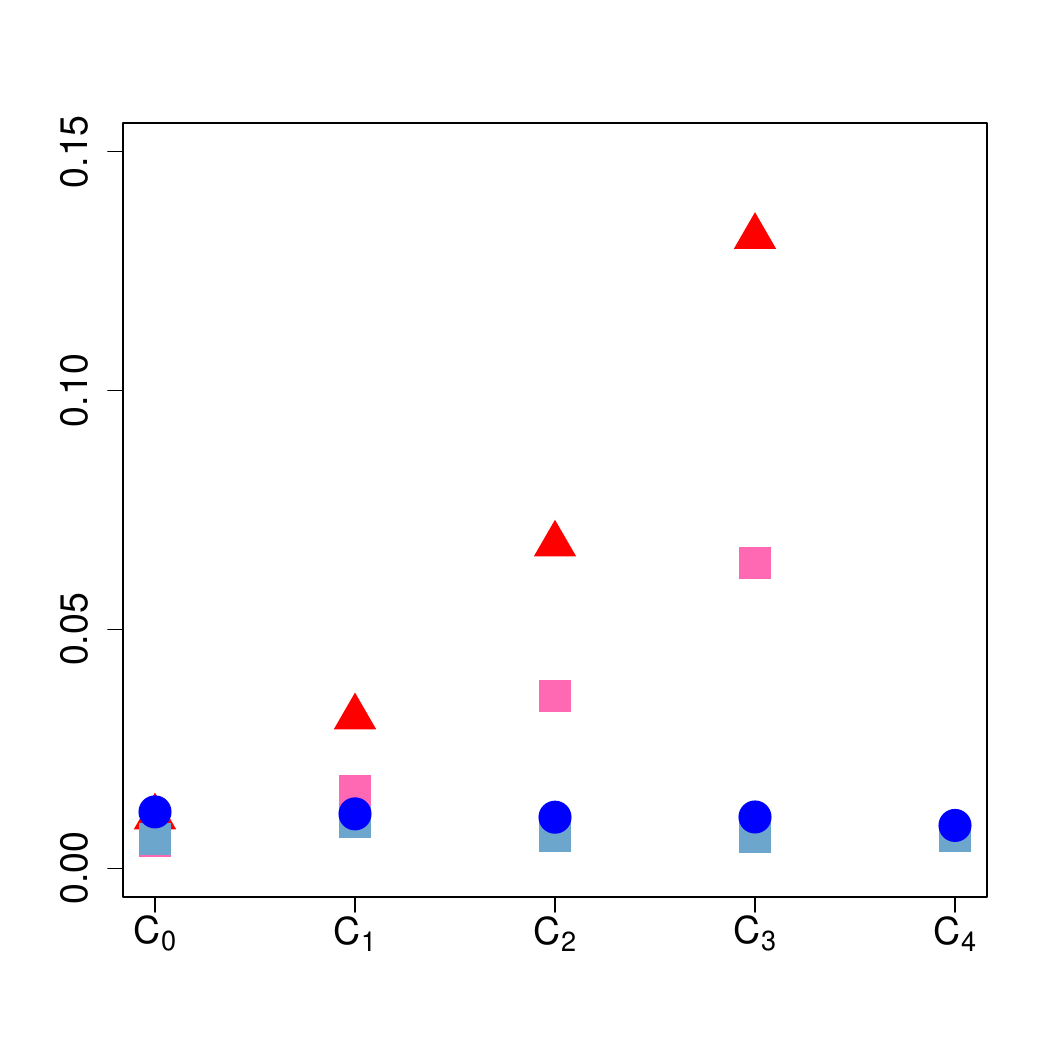}
\vskip-0.2in
\caption{\label{fig:GMSE-measure} Plot of the median over replications of the GMSE  for the penalized and oracle estimators across all the contamination cases. The red triangles and blue circles correspond to the penalized $\textsc{ls}-$estimator  and the robust counterpart, respectively,  while the pink and light blue squares   identify the medians of the  the oracle least-square and   oracle robust estimator, respectively.}
\end{center}
\end{figure}

To visualize the effect of contaminations, Figure \ref{fig:GMSE-measure} displays in blue circles and red triangles the results for the robust and least squares estimators, respectively, together with their corresponding oracle versions in light blue and pink squares. The stable behaviour of the robust approaches becomes evident, since the obtained values are  at the bottom of the plot. In contrast, the least squares estimator presents increased values of the GMSE  under $C_1$ to $C_3$, the values obtained under $C_4$ are beyond the limits of the plot.

\begin{figure}[tp]
\begin{center}
\begin{tabular}{cc}
$C_0$ & $C_1$\\[-2ex]
\includegraphics[scale=0.4]{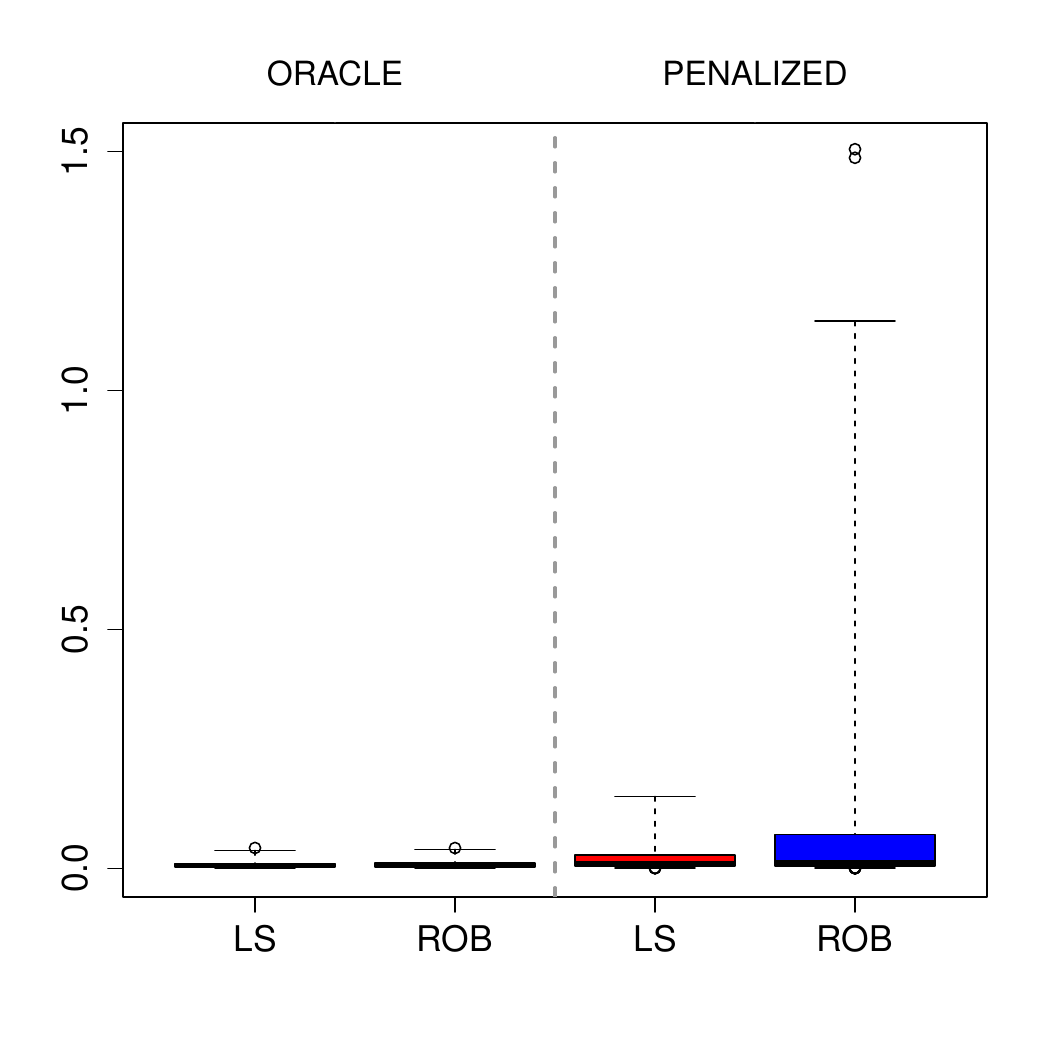} & \includegraphics[scale=0.4]{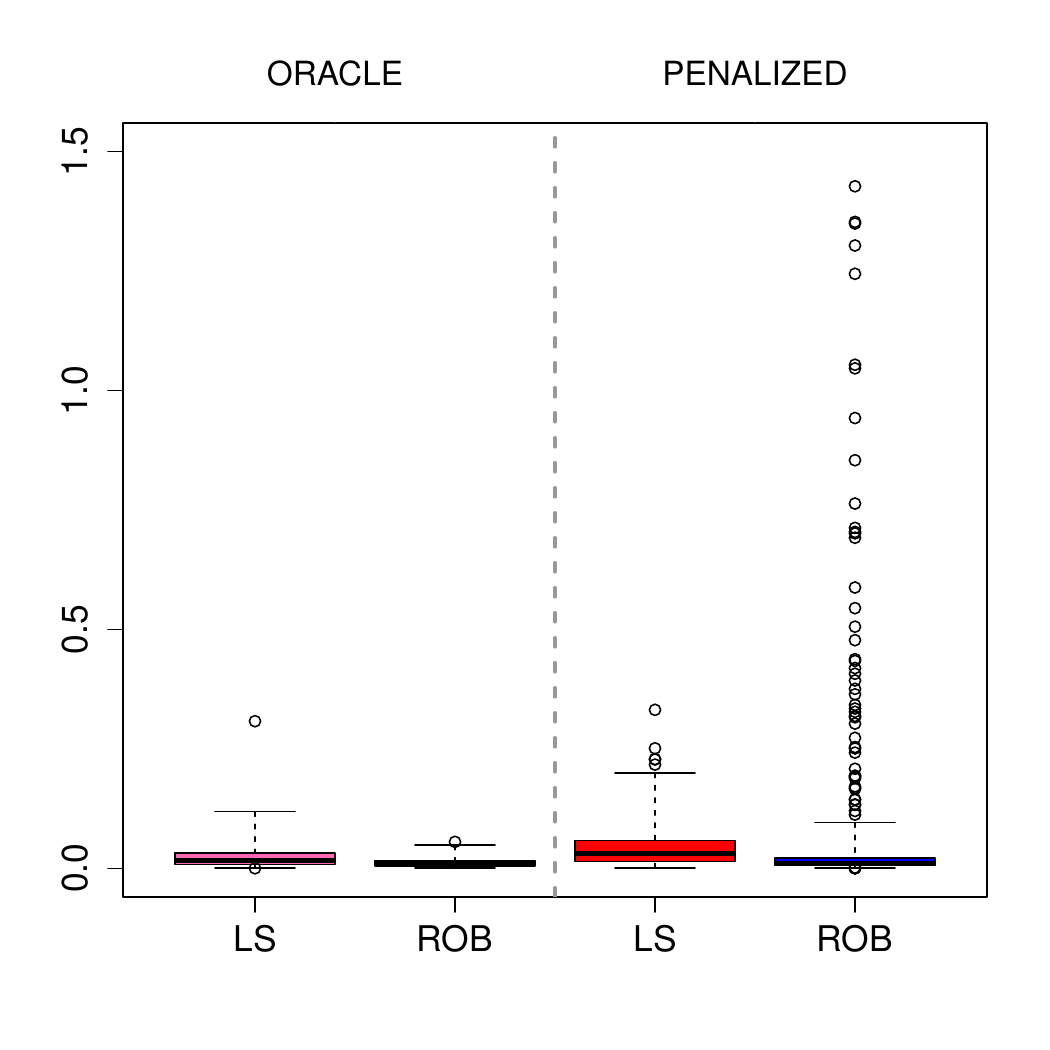}\\
$C_2$ & $C_3$\\[-2ex]
\includegraphics[scale=0.4]{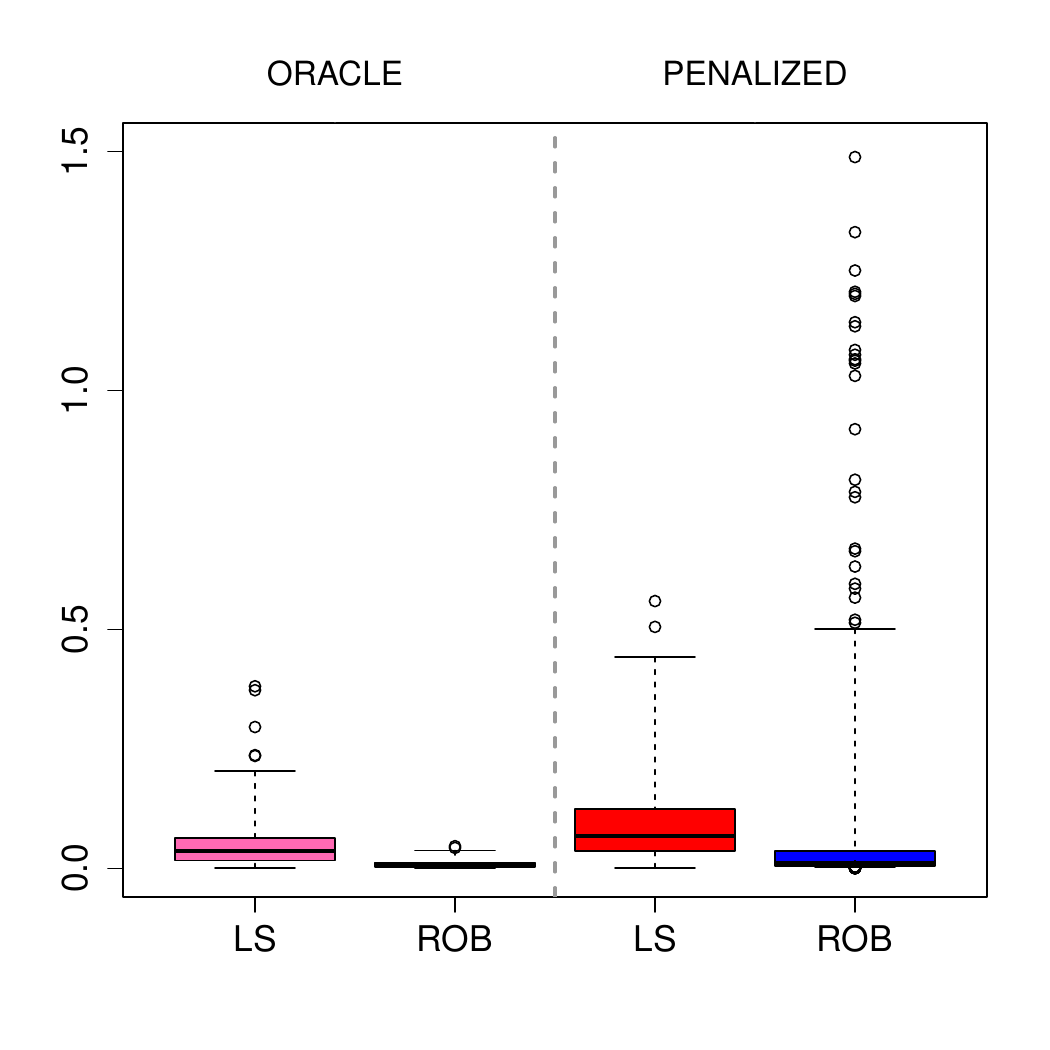} & \includegraphics[scale=0.4]{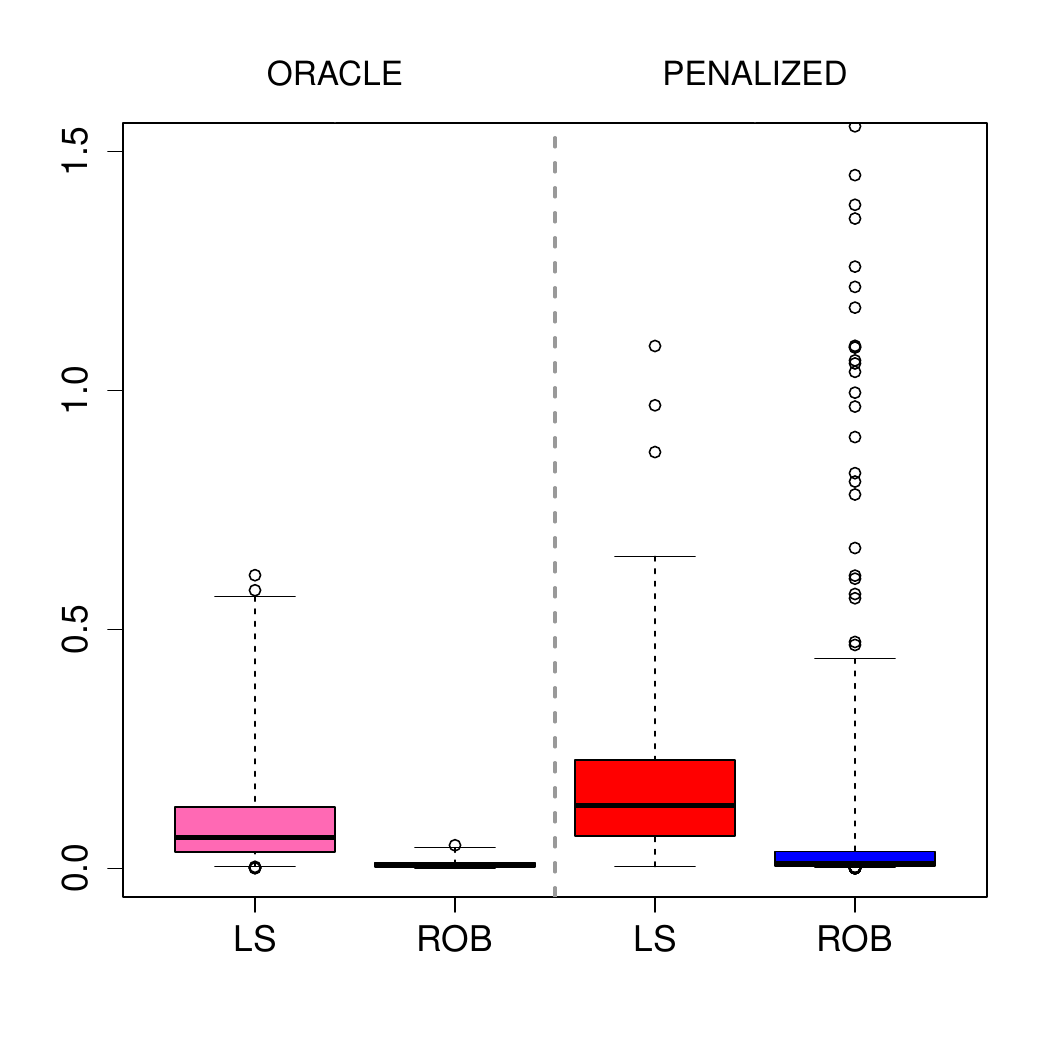}\\
$C_4$ \\[-2ex]
\includegraphics[scale=0.4]{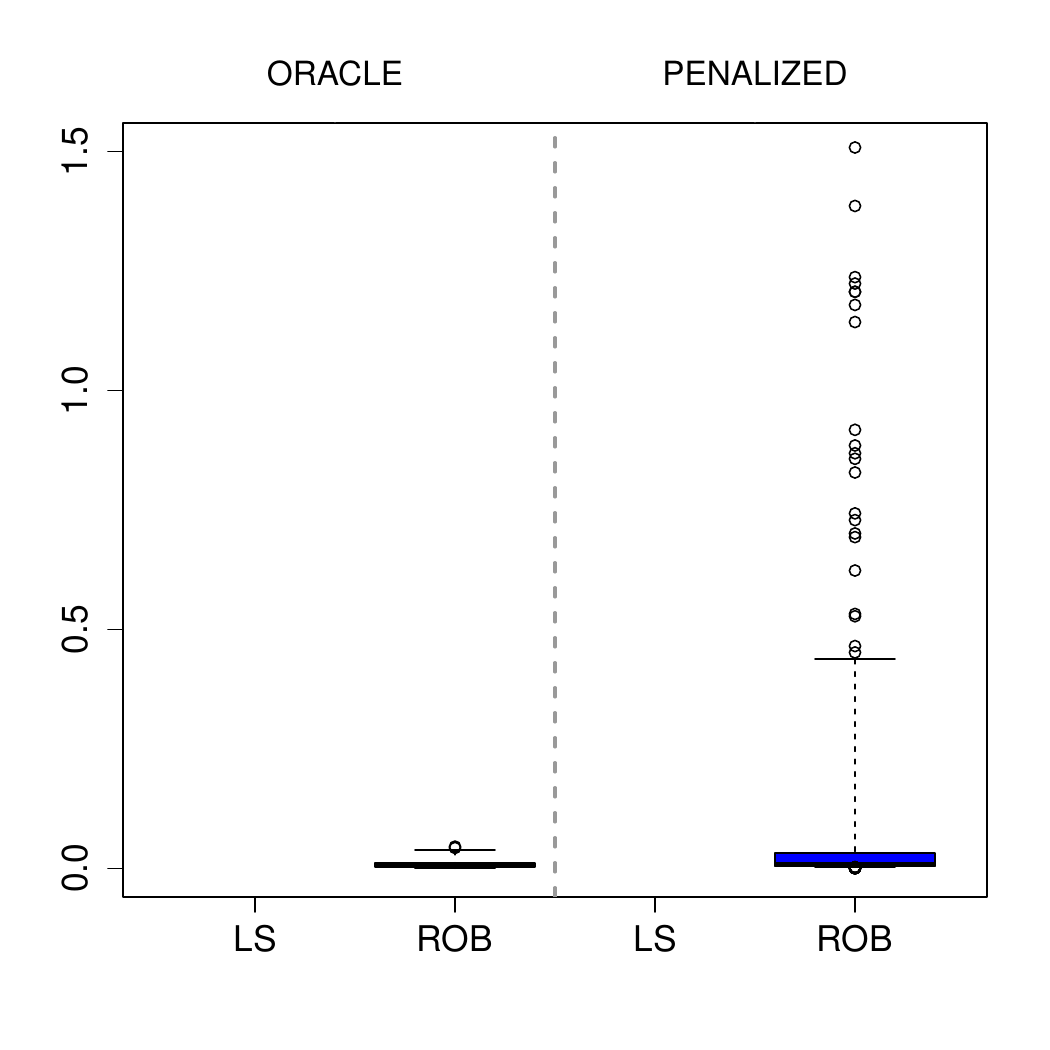} & 
\end{tabular}
\vskip-0.2in
\caption{\label{fig:GMSE-adjboxplots} Adjusted boxplots of the GMSE and OGMSE for the penalized and oracle estimators across all the contamination cases.}
\end{center}
\end{figure}

Figure \ref{fig:GMSE-adjboxplots}   presents the adjusted boxplots of the 500 GMSE values obtained for the penalized and oracle estimators under the contamination schemes. All plots have the same vertical axis to facilitate comparisons. Adjusted boxplots were introduced by \citet{Hubert:Vandervieren:2008} as a visualization tool similar to the boxplot but adapted to skewed data. As   seen in Table \ref{tab:estimation}, under for clean samples, both oracle estimators perform similarly, while the penalized estimators have larger dispersion than their oracle counterparts. Besides, the robust penalized estimator presents a few outliers and as expected, a    wider   box. Under $C_1$ to $C_4$, even though for some replications, large values of the GMSE are observed for the robust penalized estimator, the boxes still show lower values of the GMSE than for both $\textsc{ls}-$estimators. It is worth mentioning that, under $C_4$,   the GMSE obtained when considering   the $\textsc{ls}$   procedure explode leading to values so large that the boxplots cannot  be seen in the figure.

%
%

Finally, Table \ref{tab:betas} presents the proportion of times each component was detected as a zero component. Recall that the first three components are the non-zero ones. Both approaches behave similarly, never or almost never detecting them as zero under all the contamination schemes. In contrast, the last three components, which are the zero ones, should have proportions  close to $1$, but,  when high leverage points are present, the  results obtained for the least squares estimators  are poor, meaning that the classical procedure is not able to identify them as $0$. These results are in the same direction as those reported in Table \ref{tab:selection}. Only the robust estimator detects these variables as zero under all the contamination scenarios.  

\begin{table}[ht]
\centering
\begin{tabular}{cccccccc}
  \hline
& \textsc{Method} & $\beta_1$ & $\beta_2$ & $\beta_3$ & $\beta_4$ & $\beta_5$ & $\beta_6 $ \\ 
  \hline
 $C_0$ & \textsc{ls} & 0.00 & 0.00 & 0.00 & 1.00 & 0.99 & 0.99 \\ 
& \textsc{rob} & 0.00 & 0.00 & 0.00 & 0.99 & 1.00 & 1.00 \\ \hline
$C_1$ & \textsc{ls} & 0.00 & 0.00 & 0.00 & 0.97 & 0.94 & 0.94 \\ 
& \textsc{rob} & 0.00 & 0.01 & 0.00 & 1.00 & 1.00 & 1.00 \\ \hline
$C_2$ & \textsc{ls} & 0.00 & 0.00 & 0.00 & 0.90 & 0.85 & 0.84 \\ 
& \textsc{rob}  & 0.00 & 0.01 & 0.00 & 1.00 & 1.00 & 1.00 \\ \hline
$C_3$ & \textsc{ls} & 0.00 & 0.00 & 0.00 & 0.81 & 0.79 & 0.77 \\ 
& \textsc{rob} & 0.00 & 0.01 & 0.00 & 1.00 & 1.00 & 1.00 \\ \hline
$C_4$ & \textsc{ls} & 0.00 & 0.00 & 0.00 & 0.42 & 0.18 & 0.00 \\ 
& \textsc{rob} & 0.00 & 0.02 & 0.00 & 1.00 & 1.00 & 1.00 \\ 
   \hline
\end{tabular}
\caption{\label{tab:betas} Proportion of times each coefficient is estimated to be zero.} 
\end{table}

In order to have an insight of how the $RBIC$ and the $BIC$ select the regularization parameters over the grid $\Lambda$, for the two contaminations cases $C_0$ and $C_4$, Figure \ref{fig:lambdas-c0-c6} displays the pie charts of the proportion of times each value in the set $\Lambda$ was selected. The gray, purple, blue and pink  zones correspond to the values $0$, $0.2$, $0.4$ and   $0.6$, respectively. It can be appreciated that, under $C_0$, the least-square estimator shows larger areas for the value $0$ when considering the parameters $\lambda_1, \lambda_2$ and $\lambda_3$ which are those related to the non-zero components and larger areas in pink and blue corresponding to non-zero values of the penalty candidates,  for $\lambda_4,\lambda_5$ and $\lambda_6$. Note that these three parameters correspond to the null components.

\begin{figure}[ht!]
 \begin{center}
\small
	 \renewcommand{\arraystretch}{0.4}
 \newcolumntype{M}{>{\centering\arraybackslash}m{\dimexpr.1\linewidth-1\tabcolsep}}
   \newcolumntype{G}{>{\centering\arraybackslash}m{\dimexpr.2\linewidth-1\tabcolsep}}
\begin{tabular}{M  GG  GG}
\multicolumn{1}{c}{} & \multicolumn{2}{c}{$C_0$} & \multicolumn{2}{c}{$C_4$}\\ [2ex]
 & \multicolumn{1}{c}{\textsc{ls}} & \multicolumn{1}{c}{\textsc{rob}}  & \multicolumn{1}{c}{\textsc{ls}}  & \multicolumn{1}{c}{\textsc{rob}}\\ [-1ex]
$\lambda_1$ & 
\includegraphics[scale=0.2]{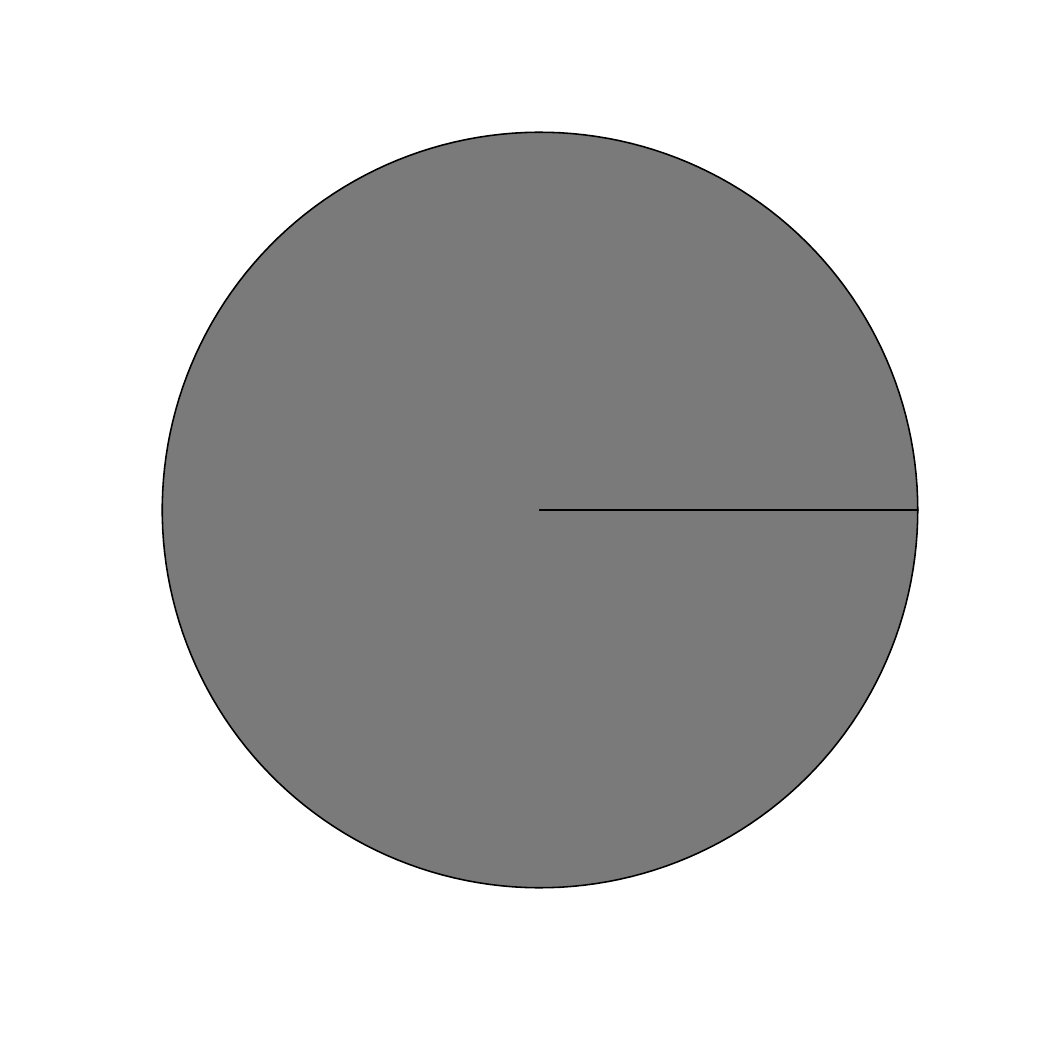} & 
\includegraphics[scale=0.2]{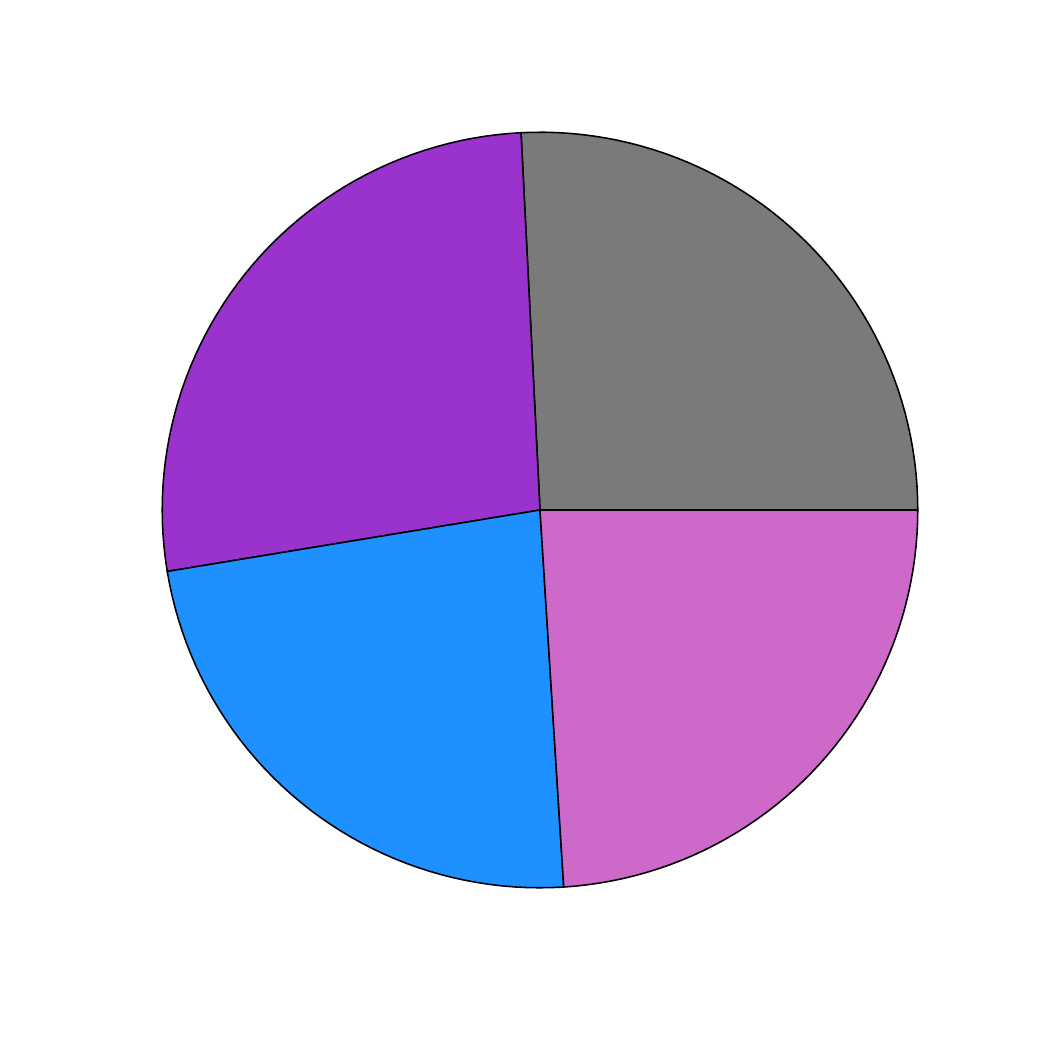} & 
\includegraphics[scale=0.2]{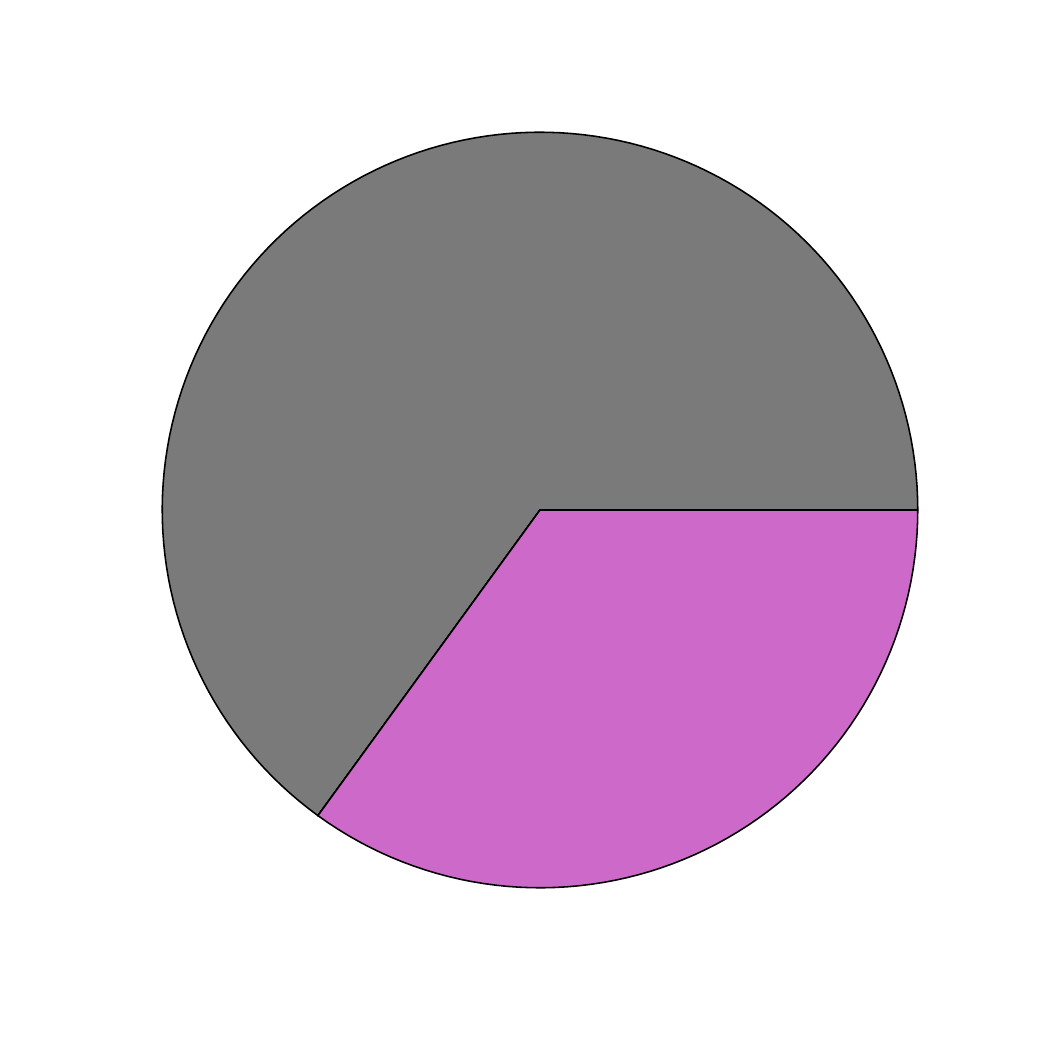} & 
\includegraphics[scale=0.2]{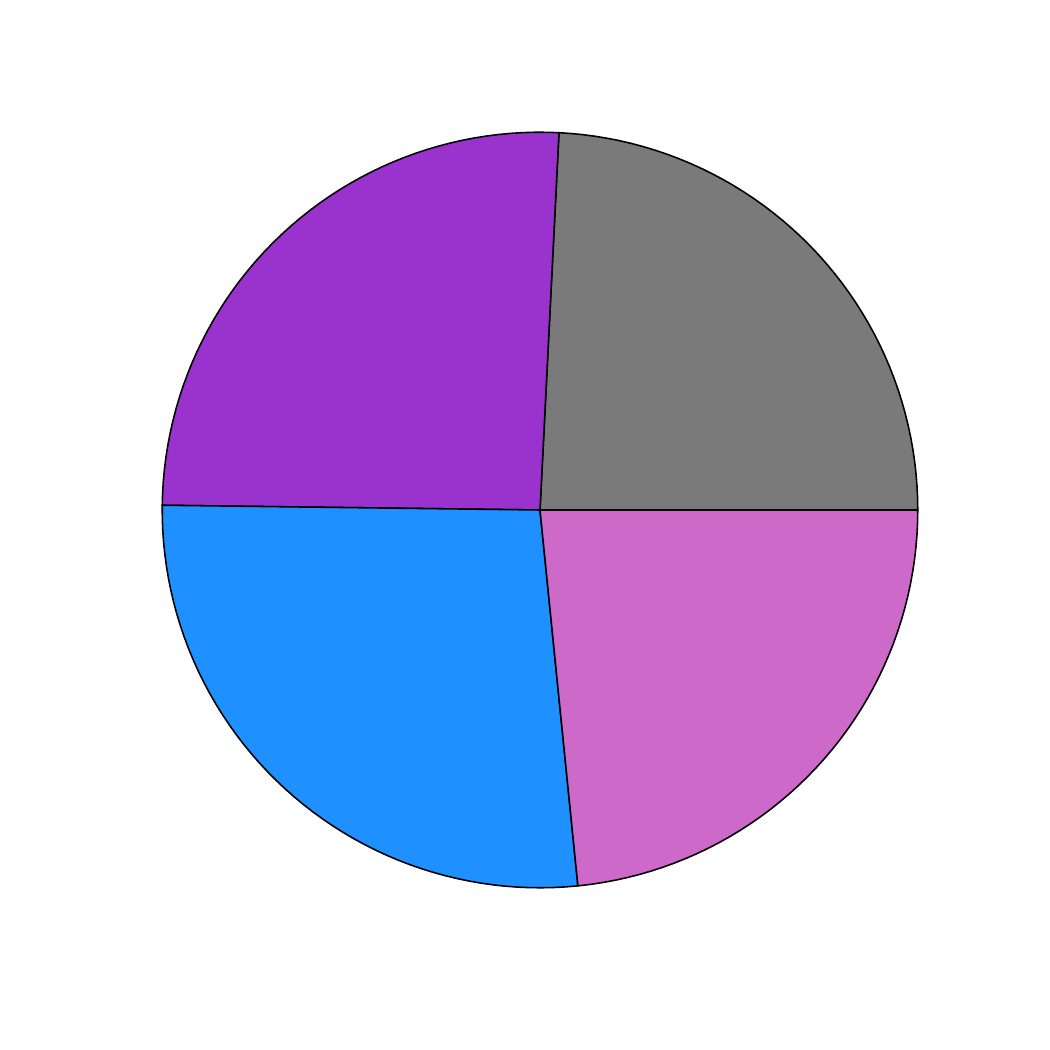} \\ [-4ex]
$\lambda_2$ & 
\includegraphics[scale=0.2]{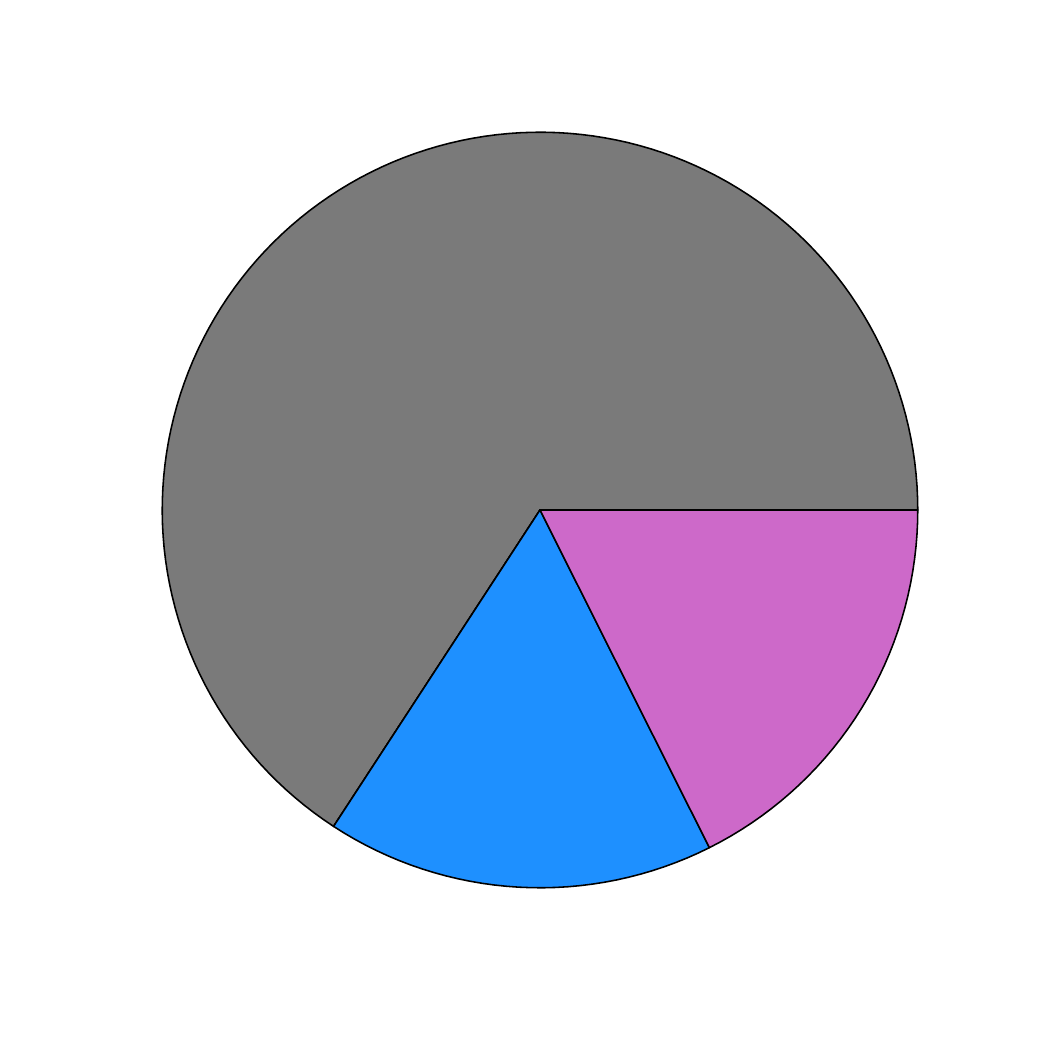} & 
\includegraphics[scale=0.2]{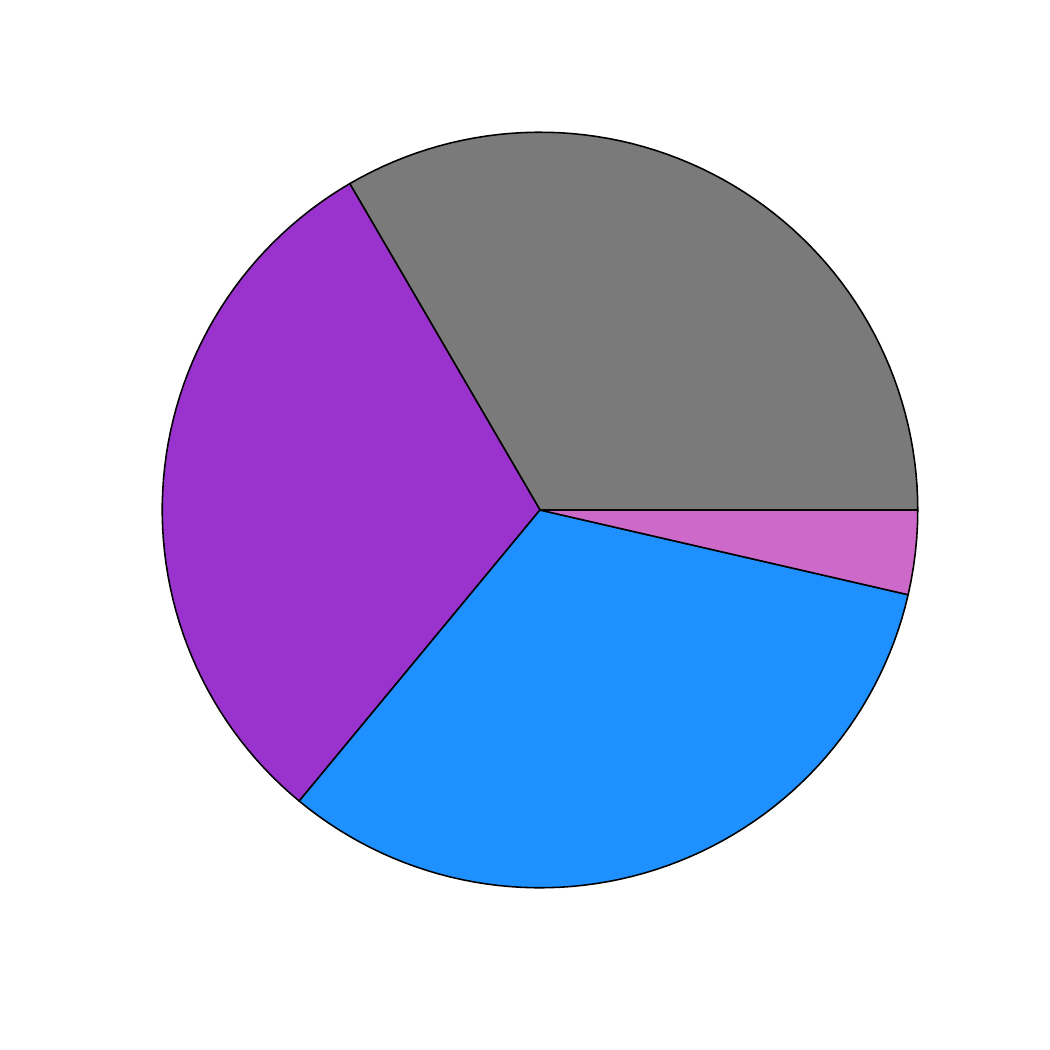} & 
\includegraphics[scale=0.2]{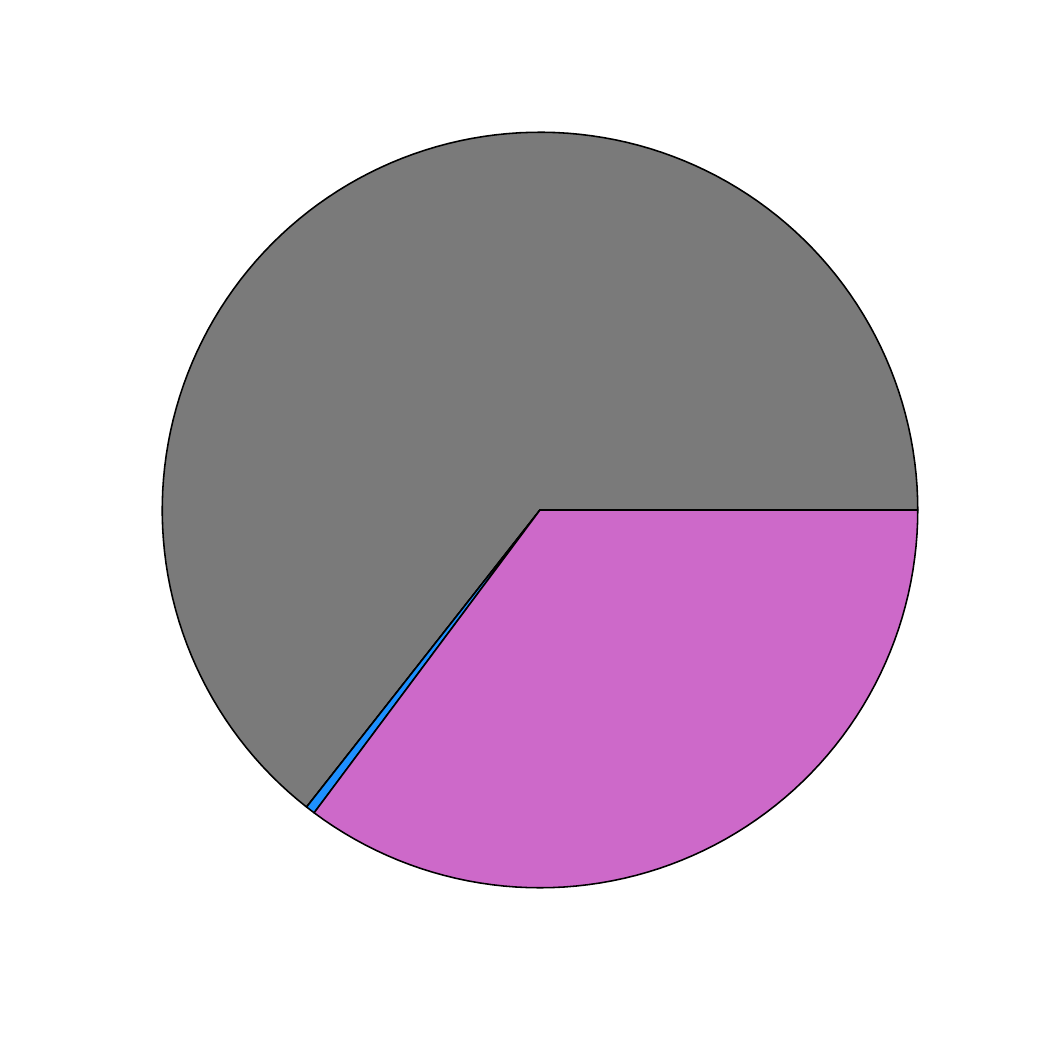} & 
\includegraphics[scale=0.2]{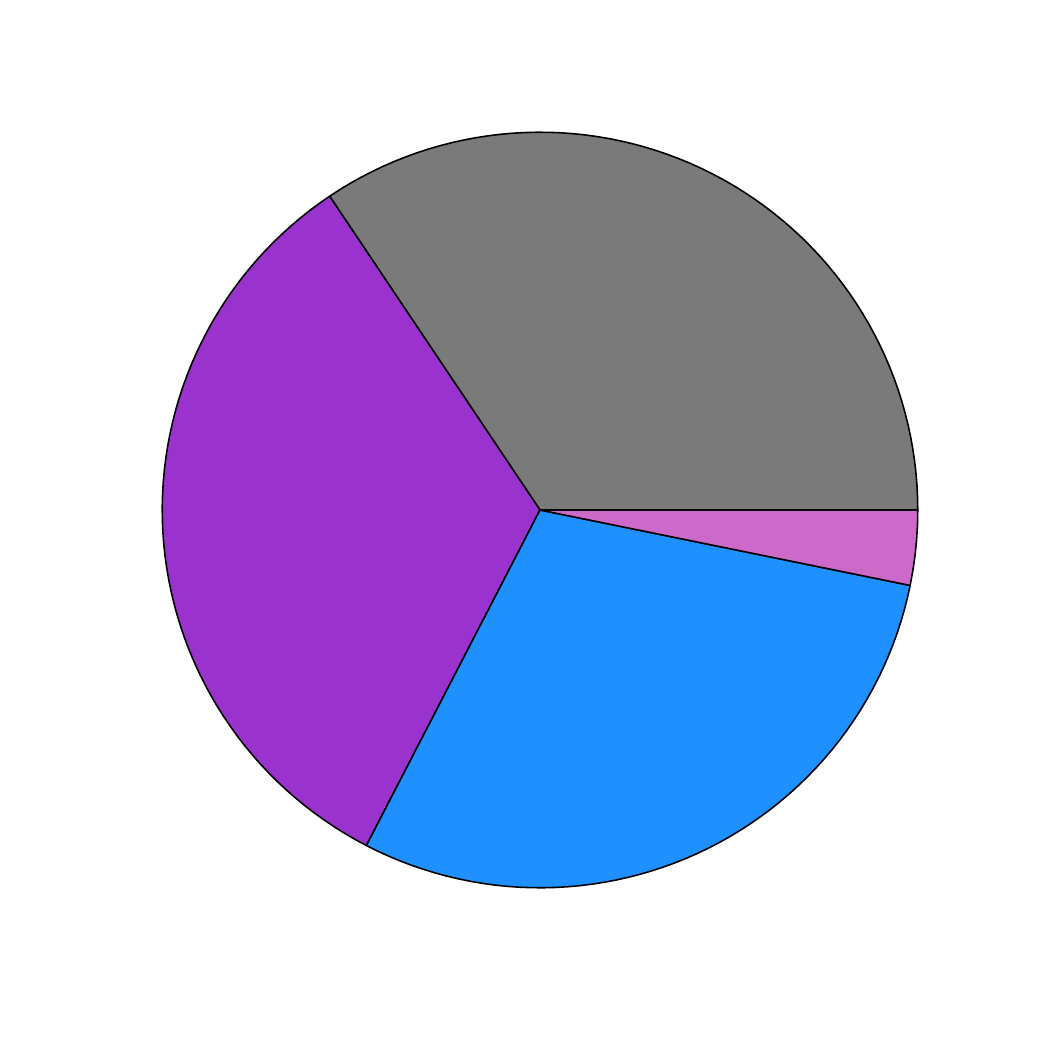}\\ [-4ex]
$\lambda_3$ & 
\includegraphics[scale=0.2]{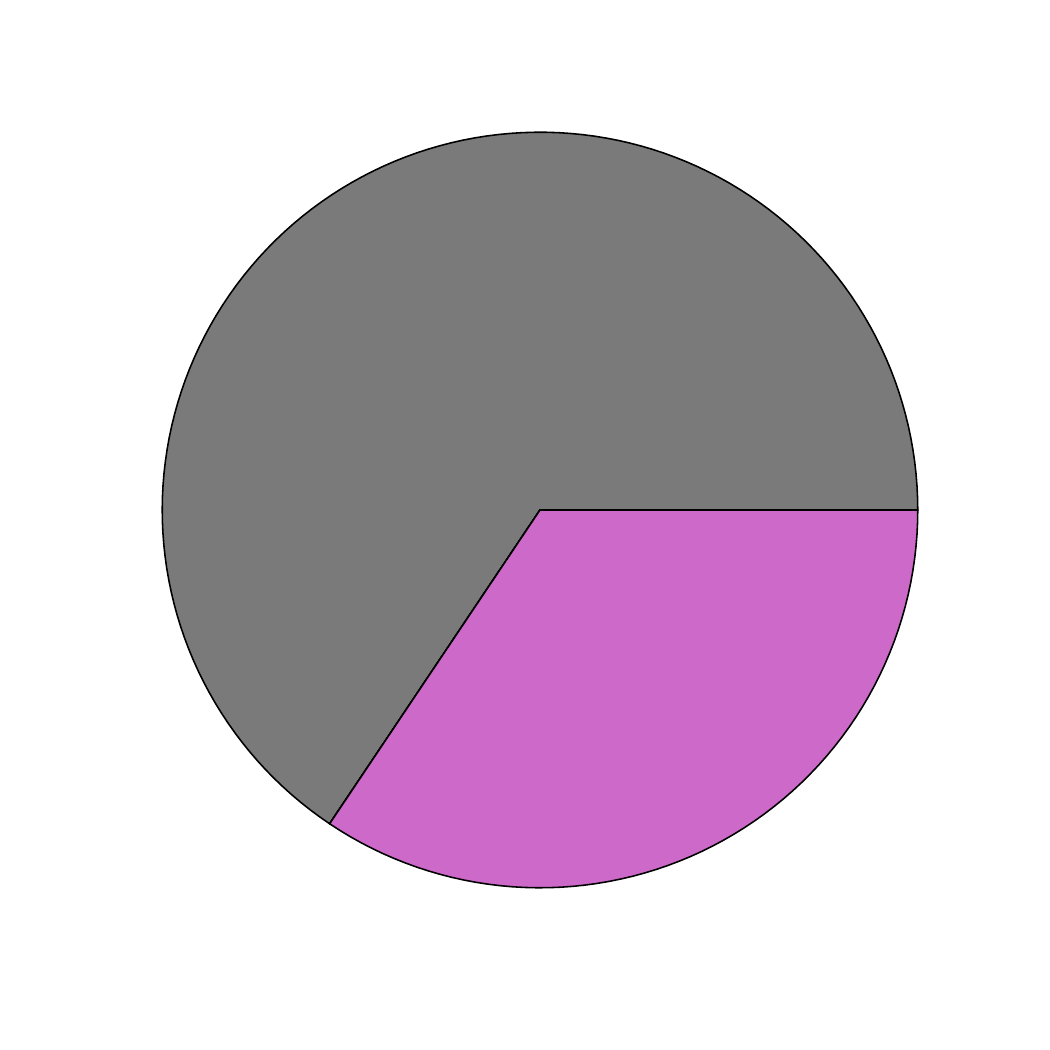} & 
\includegraphics[scale=0.2]{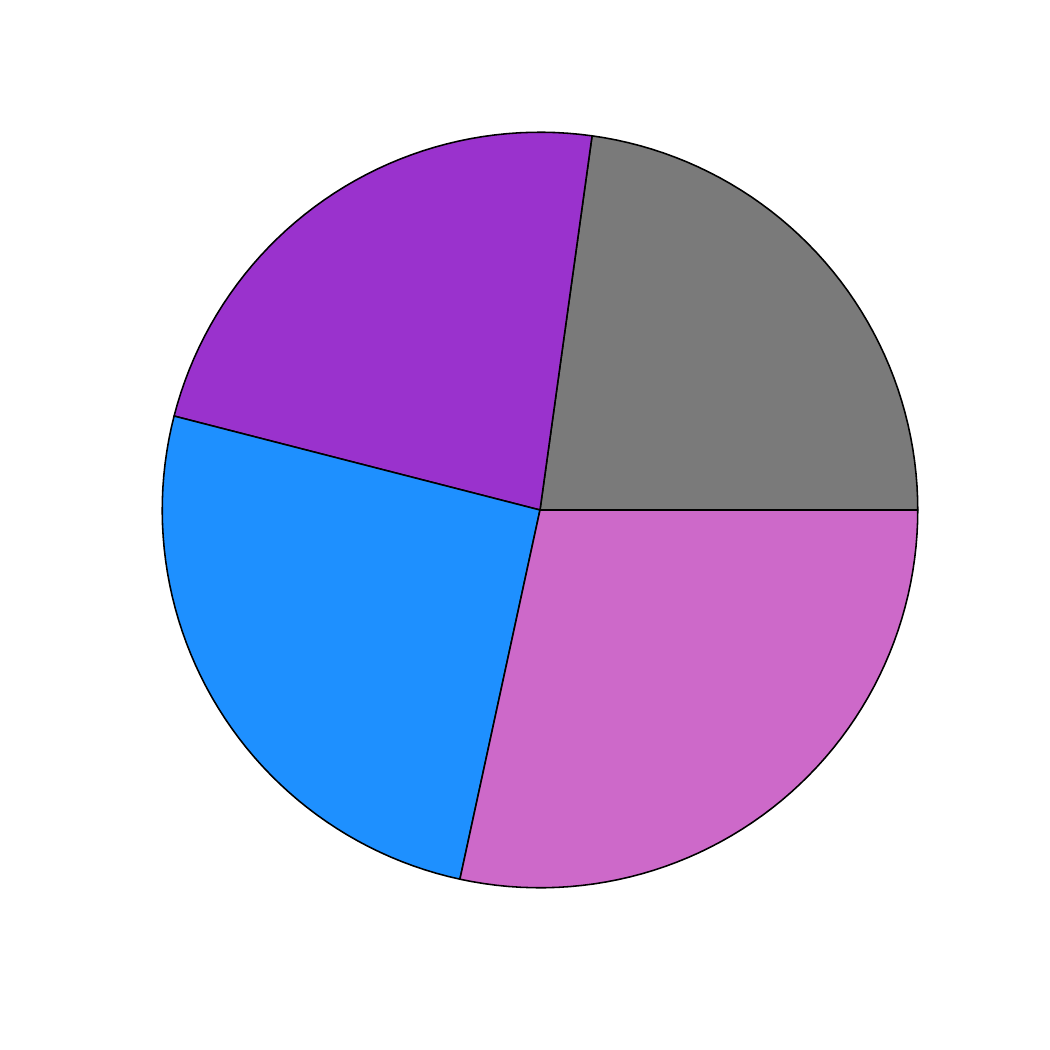} & 
\includegraphics[scale=0.2]{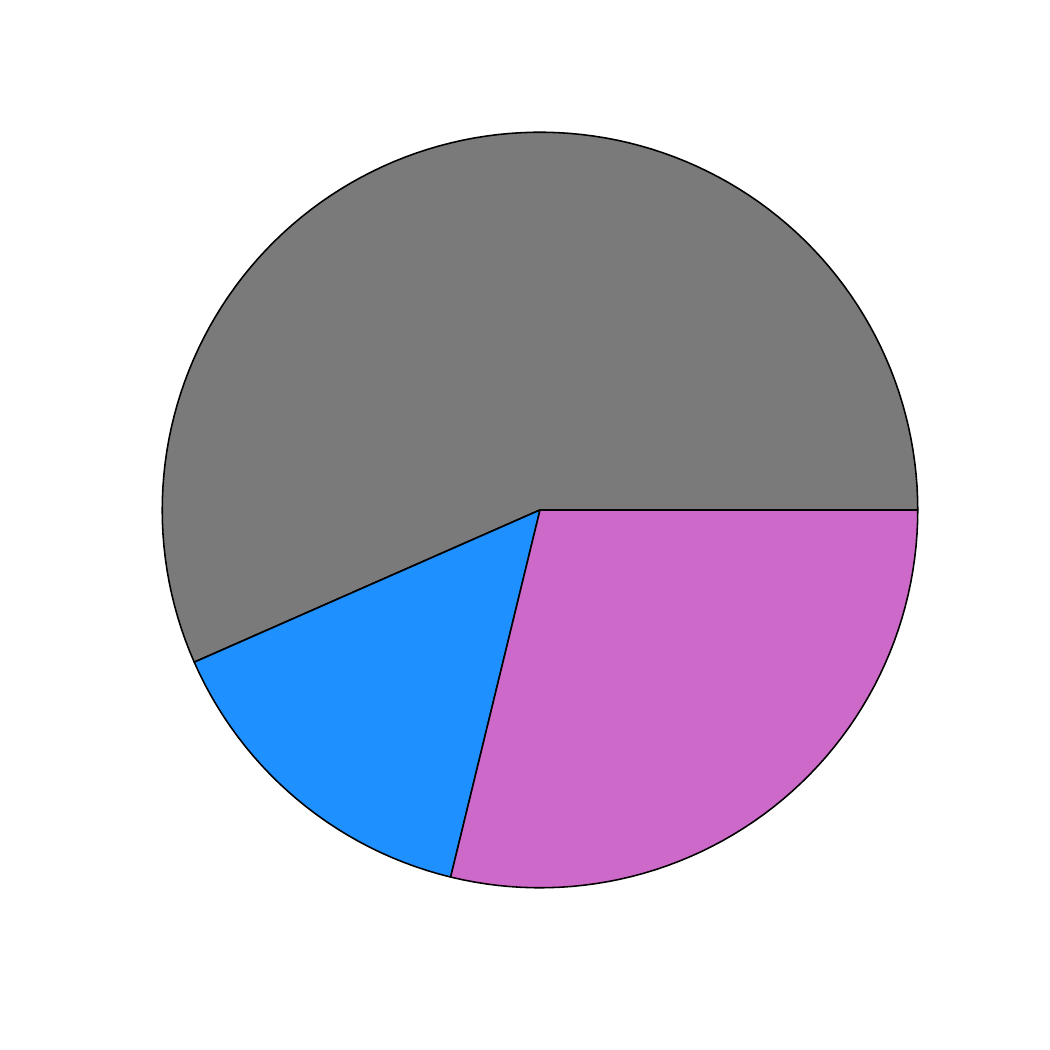} & 
\includegraphics[scale=0.2]{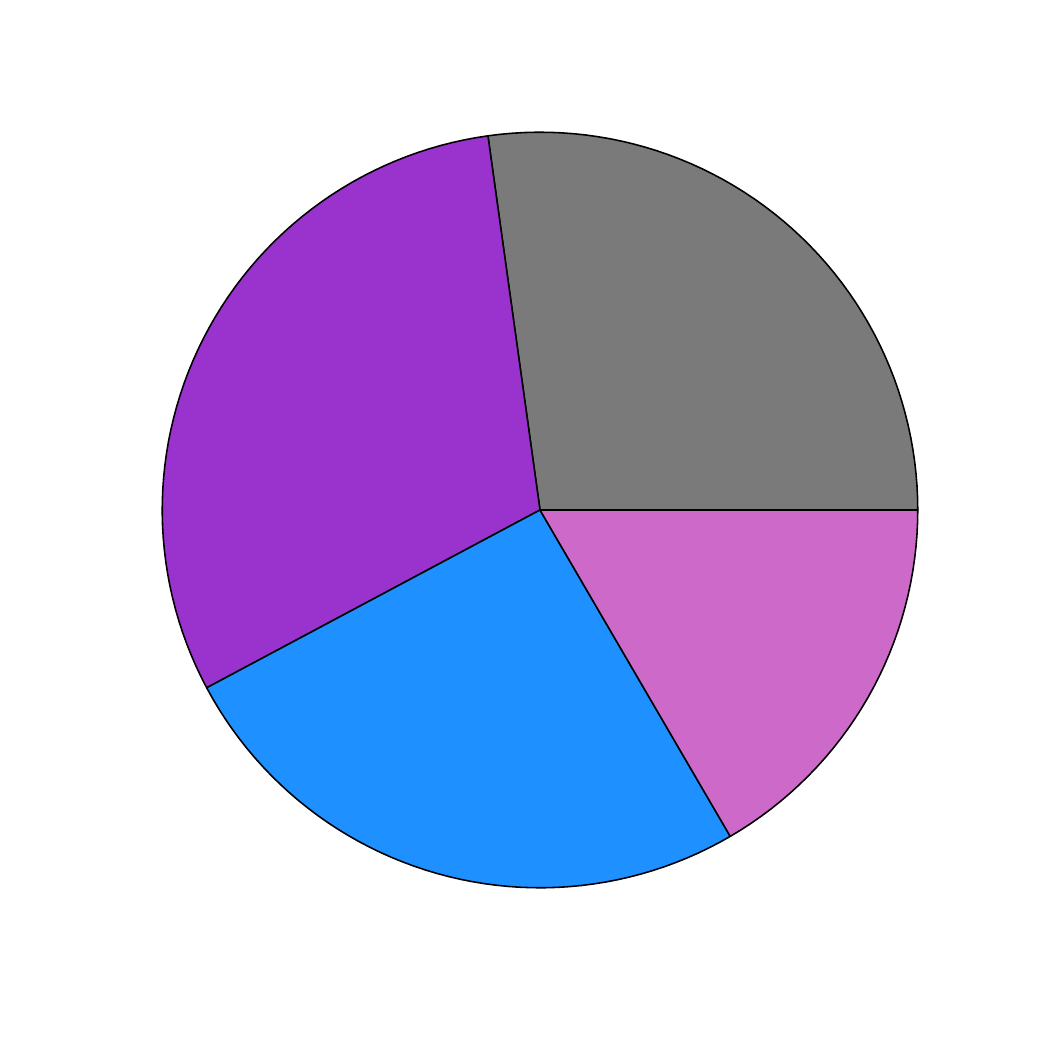}\\[-4ex]
$\lambda_4$ & 
\includegraphics[scale=0.2]{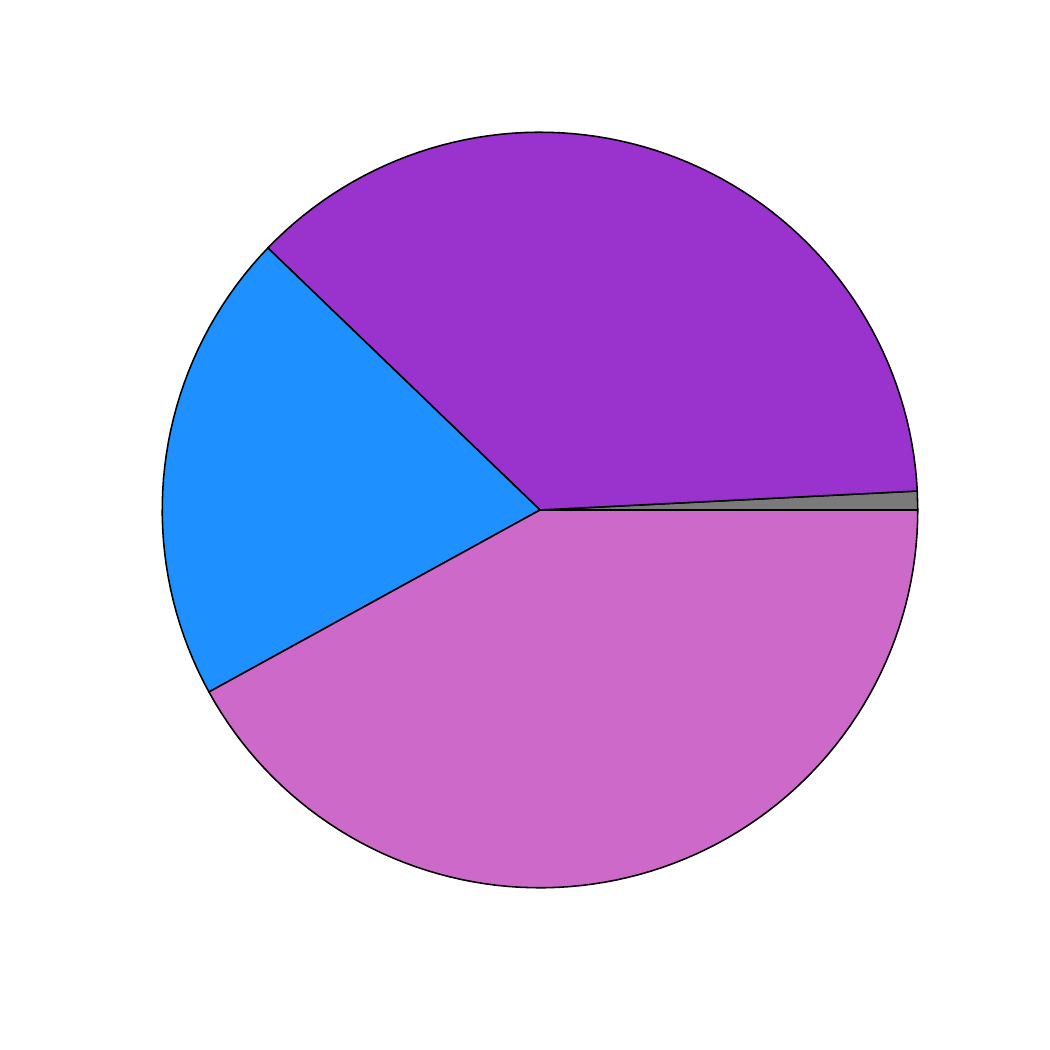} & 
\includegraphics[scale=0.2]{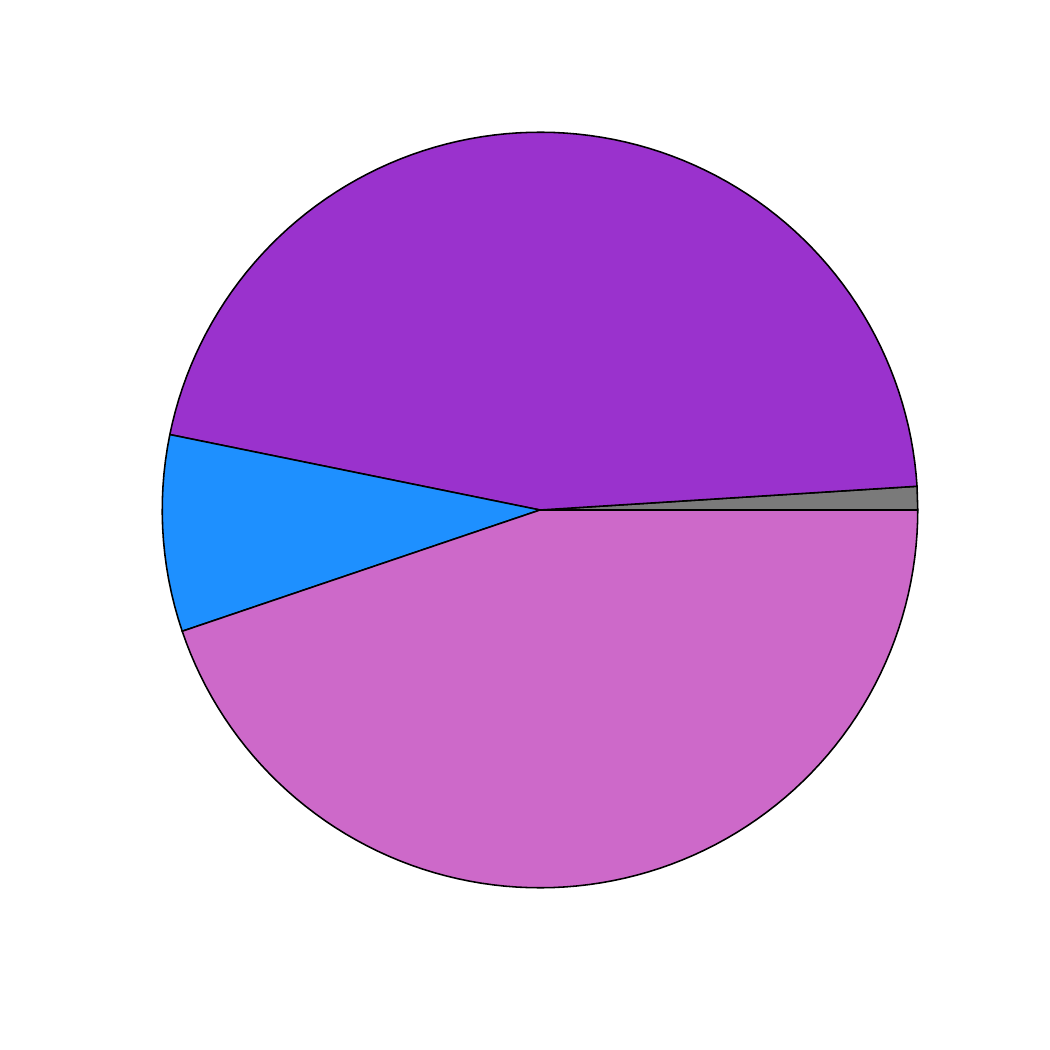} & 
\includegraphics[scale=0.2]{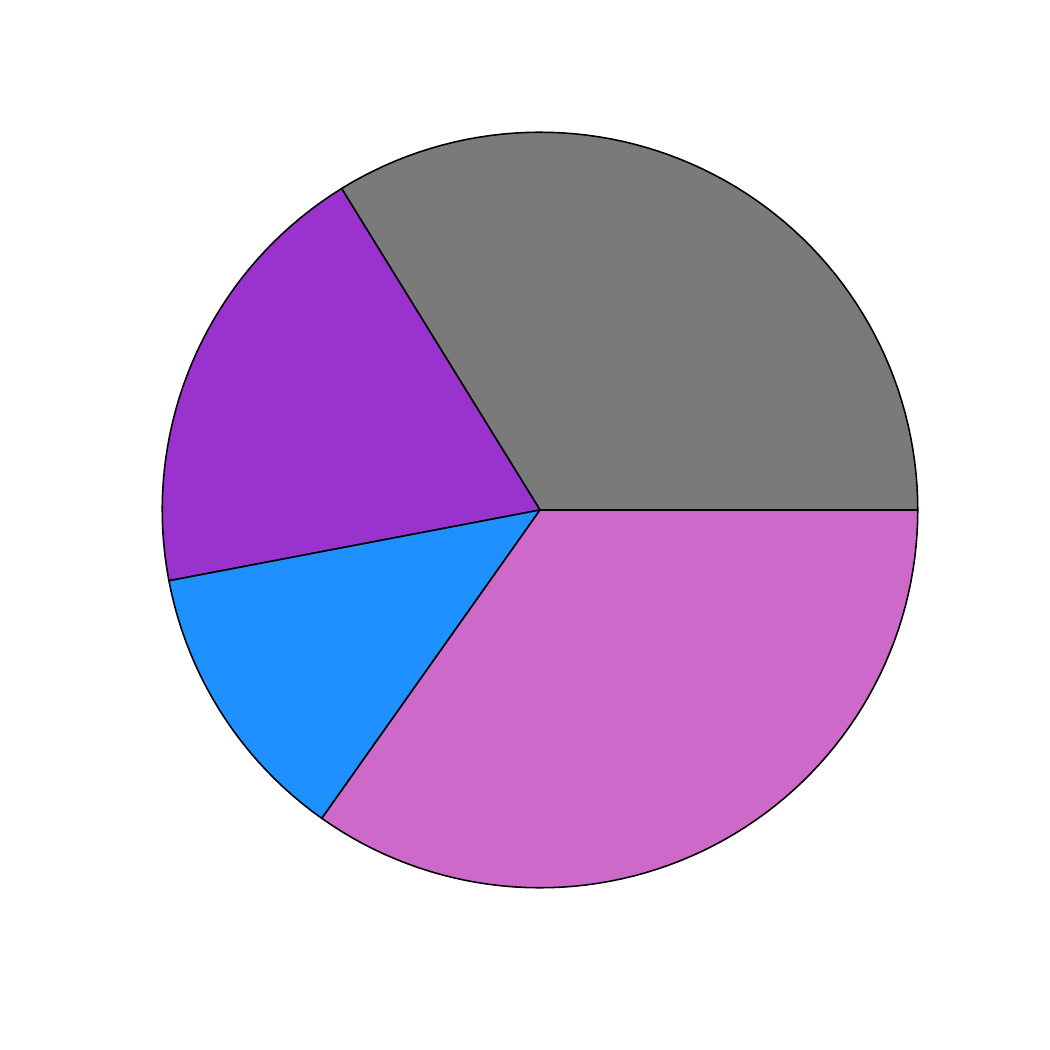} & 
\includegraphics[scale=0.2]{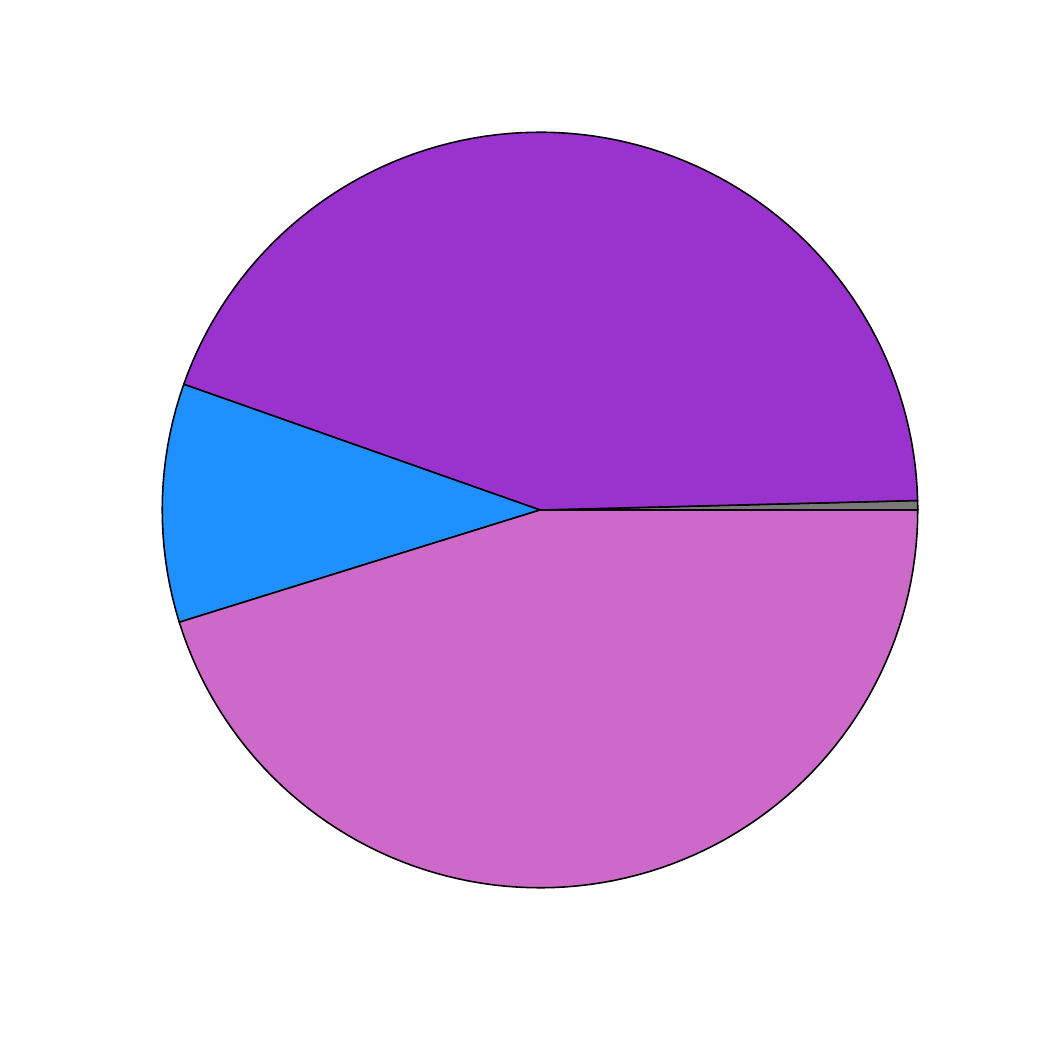}\\[-4ex]
$\lambda_5$ & 
\includegraphics[scale=0.2]{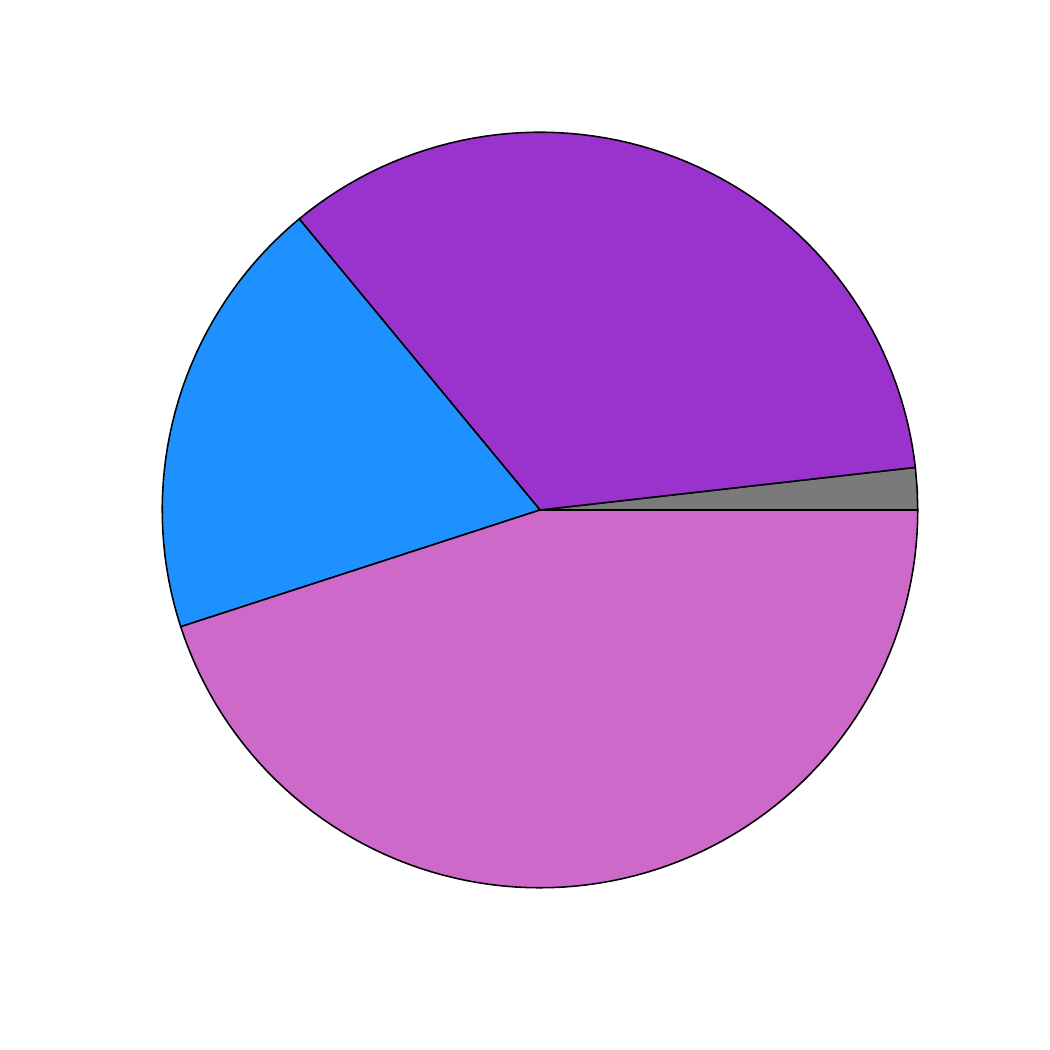} & 
\includegraphics[scale=0.2]{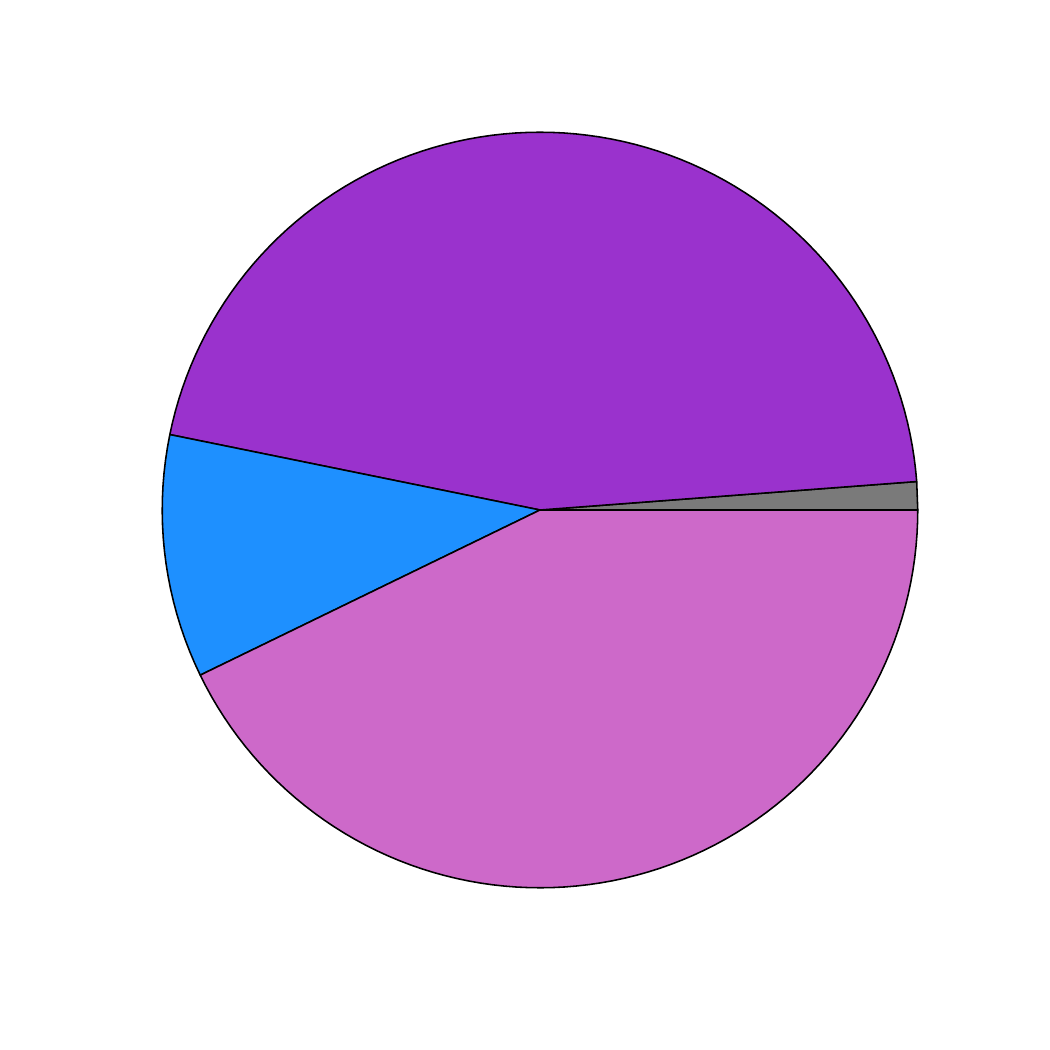} & 
\includegraphics[scale=0.2]{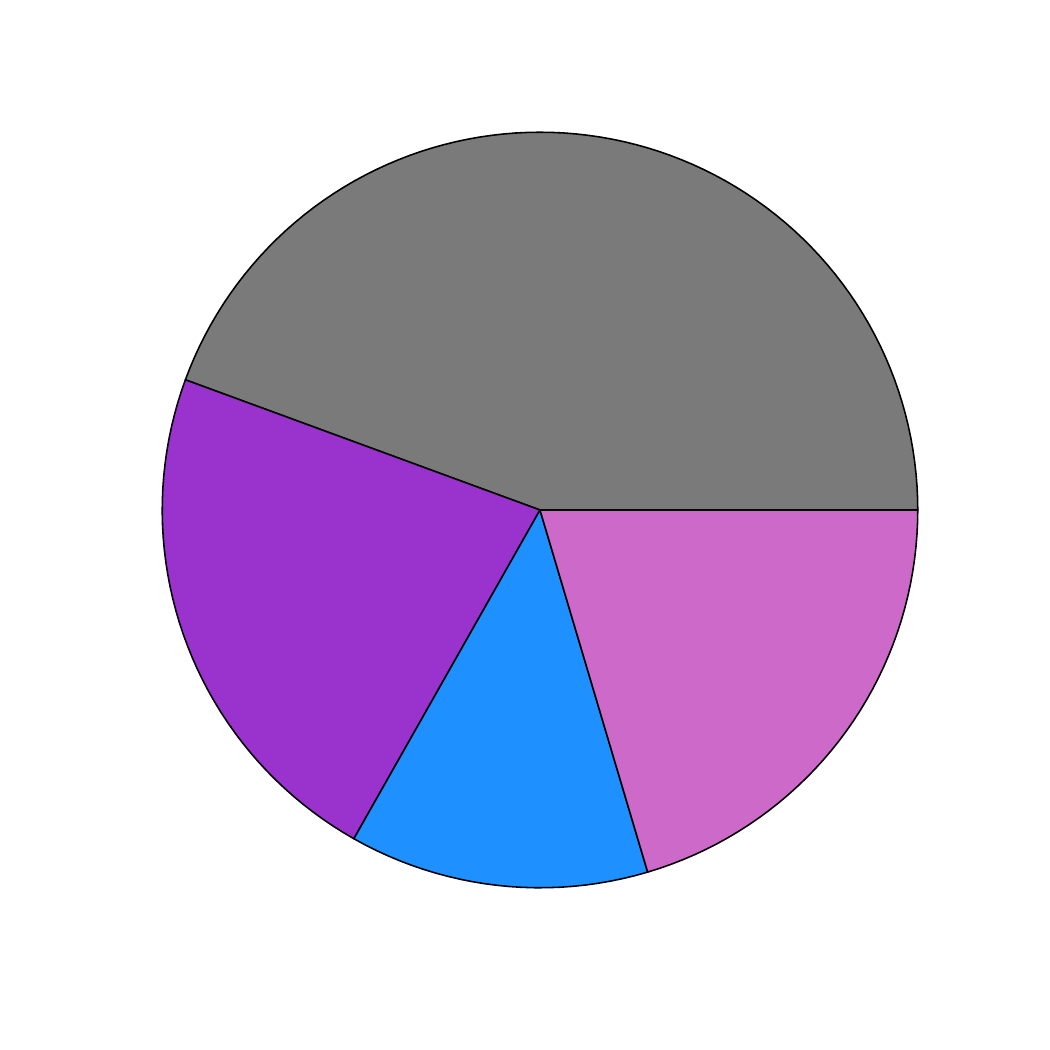} & 
\includegraphics[scale=0.2]{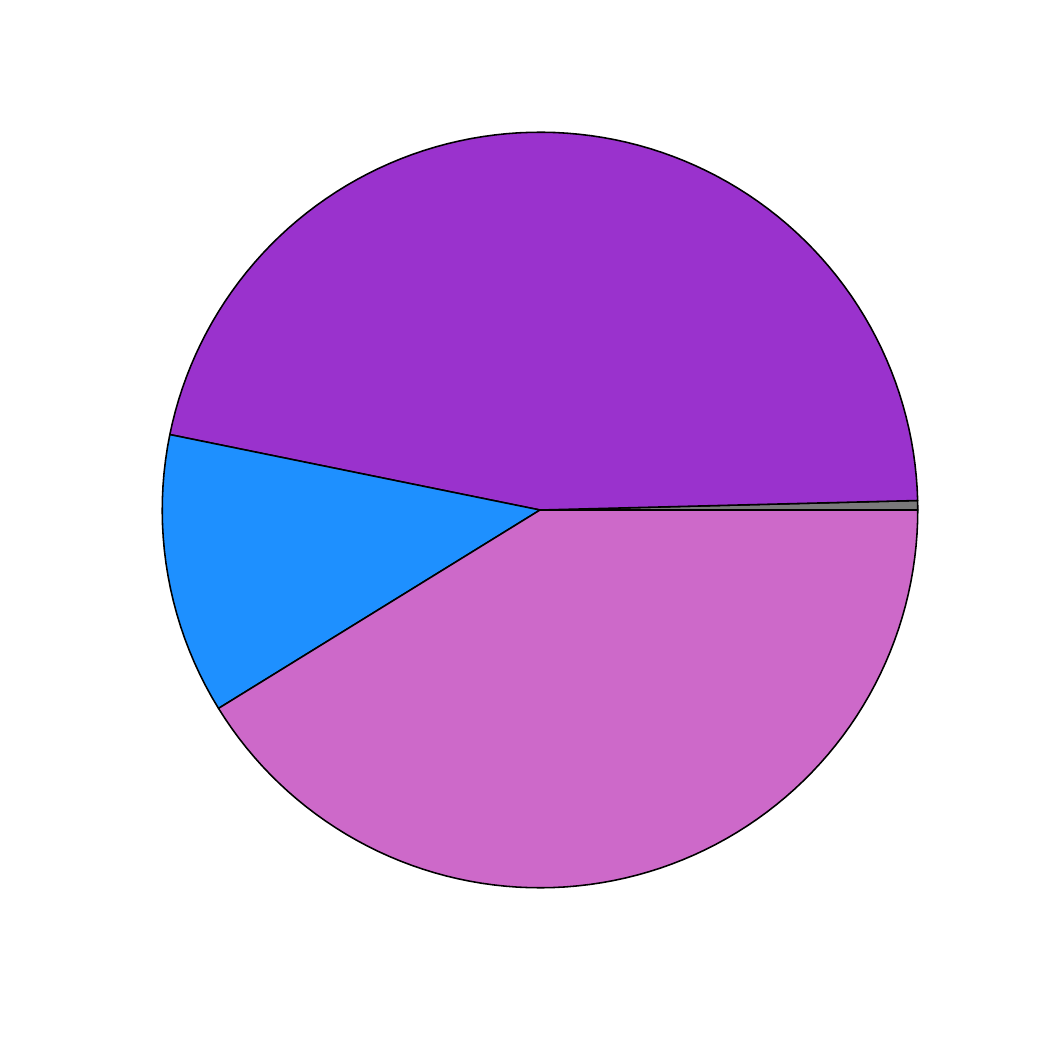}\\[-4ex]
$\lambda_6$ & 
\includegraphics[scale=0.2]{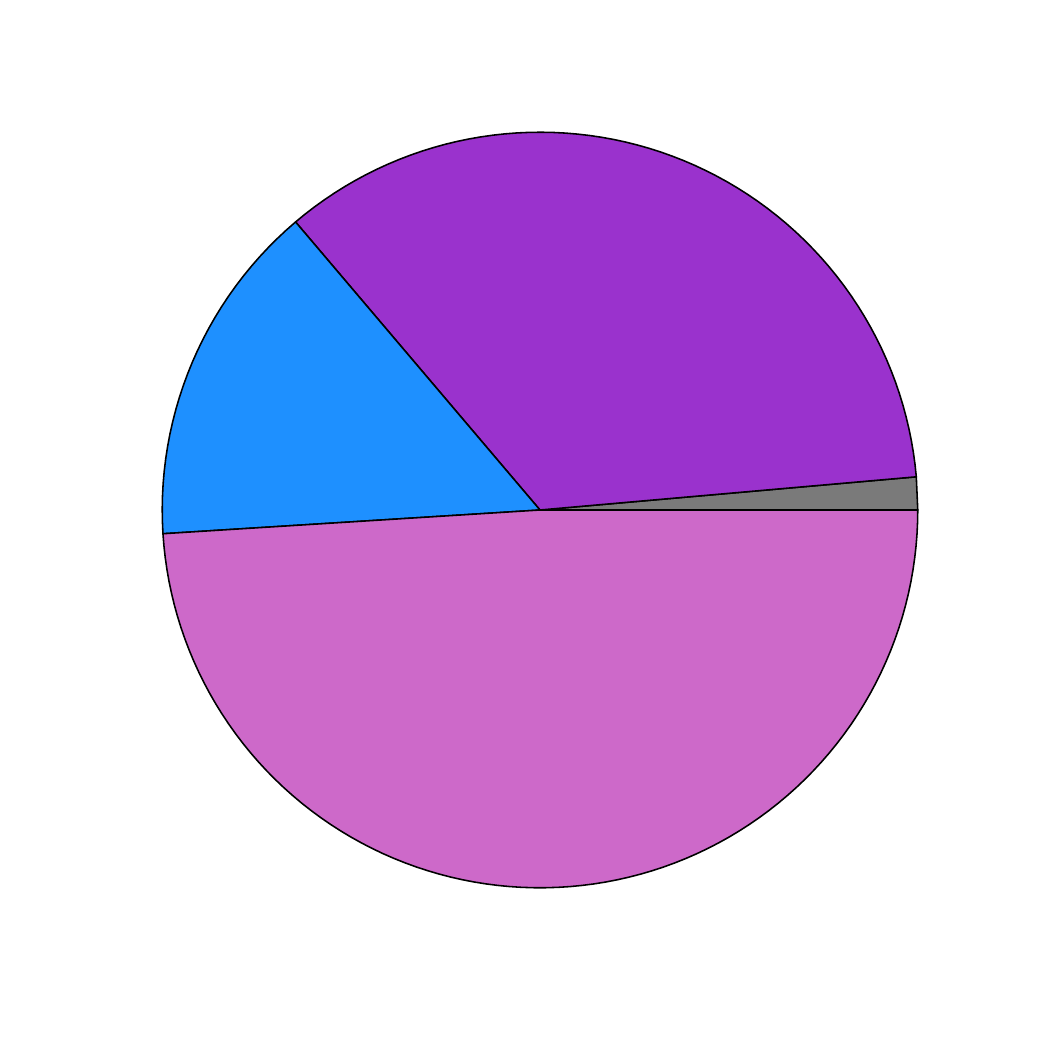} & 
\includegraphics[scale=0.2]{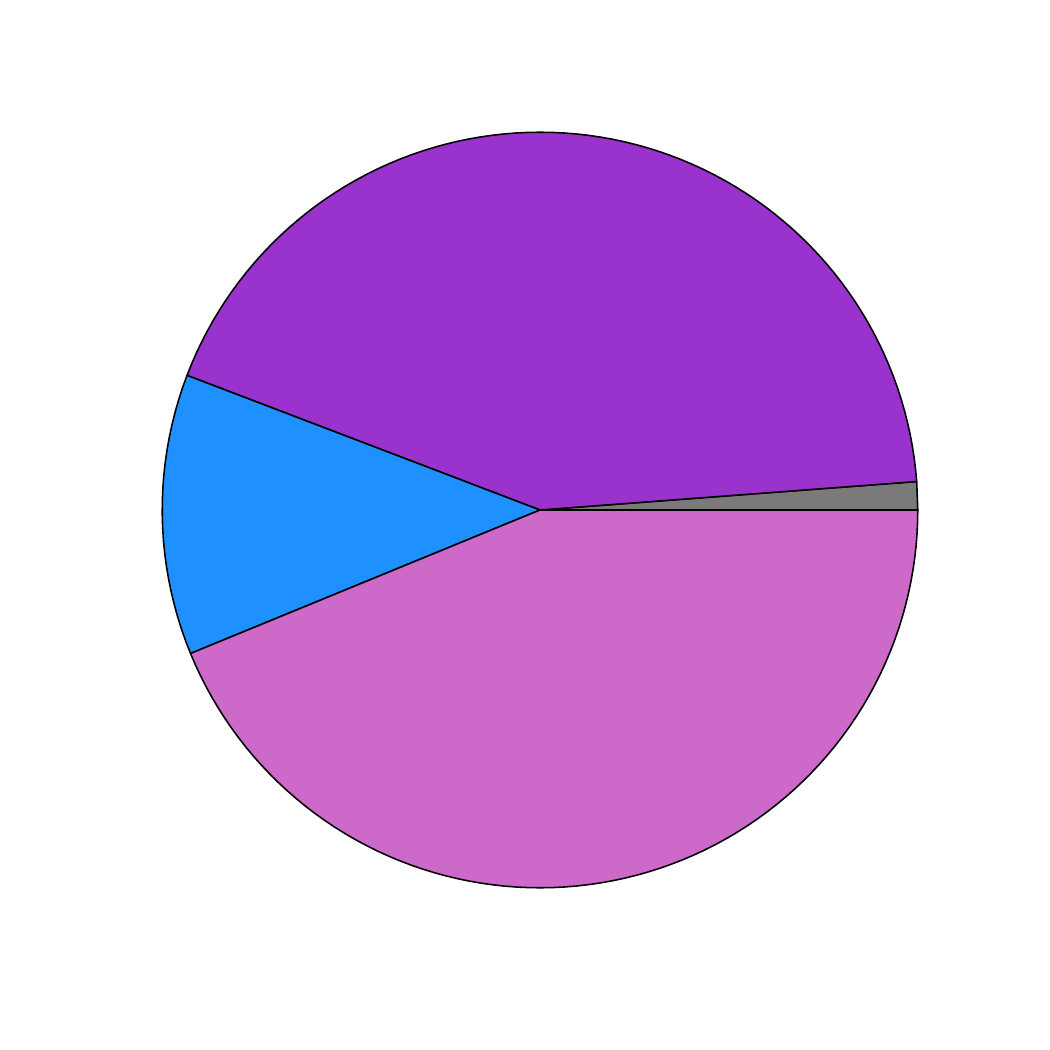} & 
\includegraphics[scale=0.2]{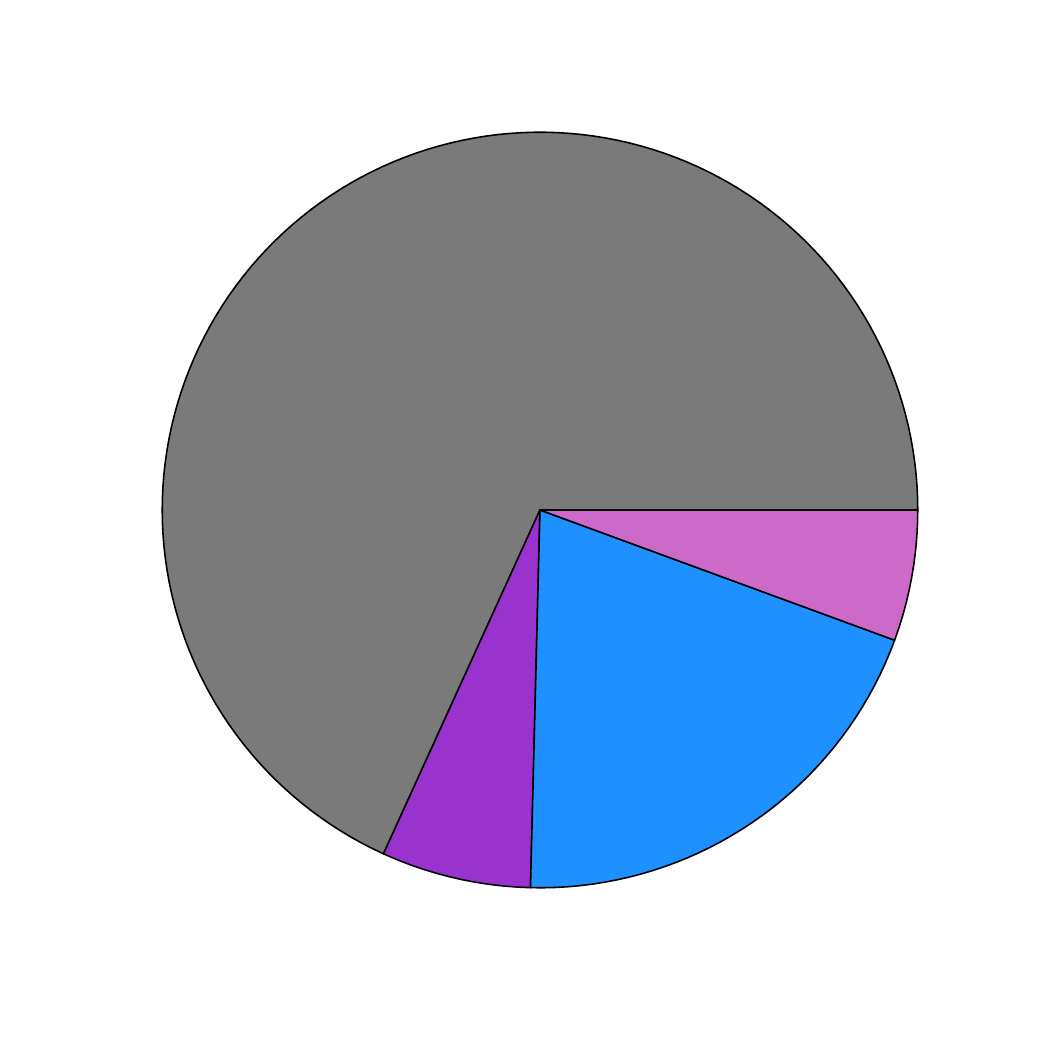} & 
\includegraphics[scale=0.2]{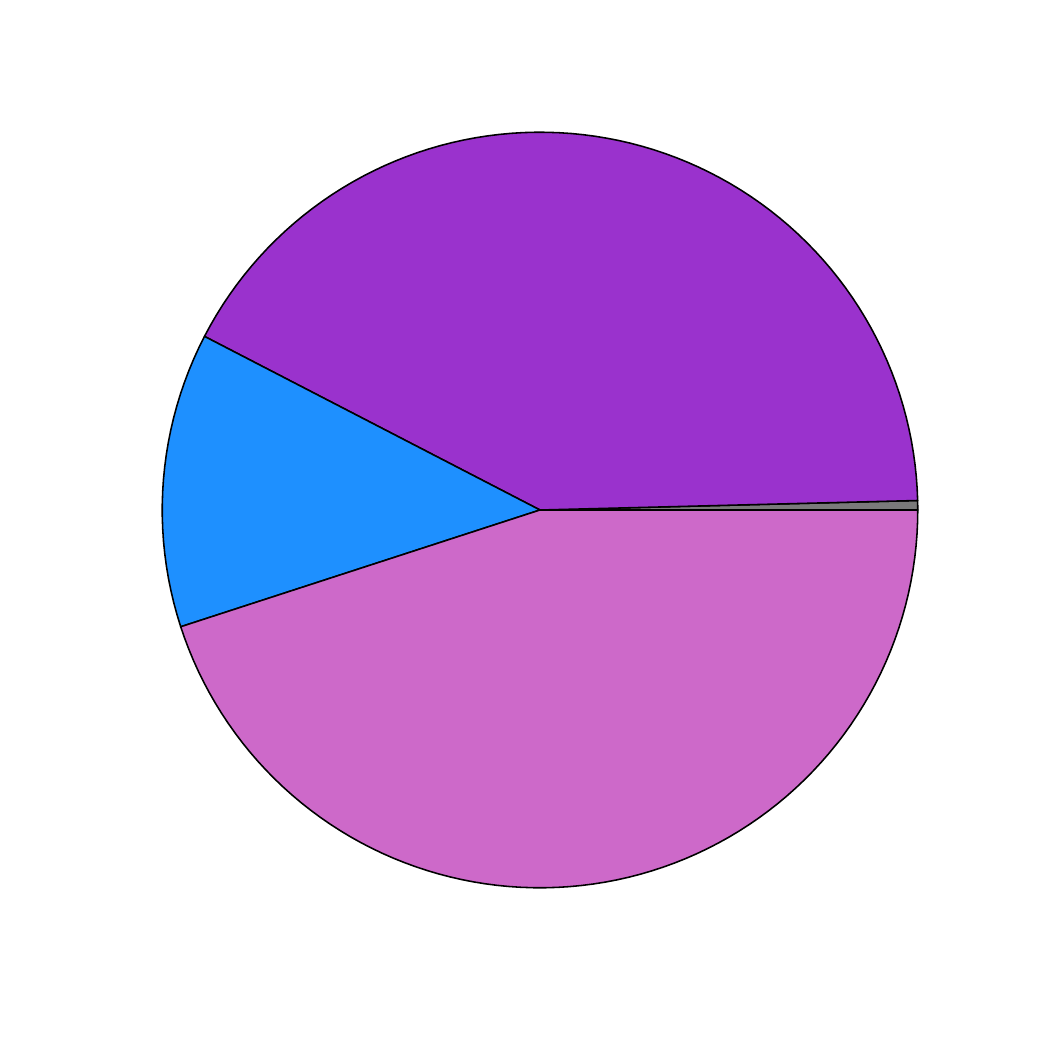}\\[-4ex]
   \end{tabular}
\vskip-0.1in 
\caption{\small \label{fig:lambdas-c0-c6} Pie charts with the proportion of times each value in the grid was selected for contaminations $C_0$ and $C_4$. The gray, purple, blue and pink  areas correspond to the values $0$, $0.2$, $0.4$ and   $0.6$, respectively.} 
\end{center}
\end{figure}

 The robust proposal performs similarly to the classical one for these last three parameters, but for the first three ones the gray zones are much smaller than those obtained for the least-square counterpart. Under $C_4$, the robust approach presents a stable and reliable behaviour   since the pie charts are almost similar to those obtained under $C_0$. In contrast, when looking at the behaviour of the classical selection procedure, one cannot avoid noticing that, even though for $\lambda_1$ to $\lambda_3$   the $0$ value was selected most of the times, the pie charts of the last three parameters are very different from those obtained for clean samples. More precisely, for the classical procedure the value $0$ is selected most of the times for the penalty parameters related to the last three components of $\bbe$, even when  these parameters correspond to  zero components. This fact explain the poor behaviour of the least squares estimator reported in Table \ref{tab:betas} under $C_4$, specially when considering the estimation of $\beta_6$, which is never estimated as $0$.

\section{Real data example}{\label{sec:realdata}}
In this section, we analyse  the plasma beta-carotene level data set, collected by \citet{Nierenberg:etal:1989}  which is also available  in \texttt{R} as the data set \texttt{plasma} of the library \texttt{gamlss.data}.  As mentioned in \citet{fairfield:fletcher:2002}, modelling the plasma concentrations of beta-carotene is of interest since low values might be associated with an increased risk of developing certain types of cancer such as lung, colon breast and prostate cancer.
This data set, that consists of 315 observations, was also considered in \citet{liu:etal:2011} who proposed a partially linear additive model  and estimated the  parameters using a least squares approach combined with  the SCAD penalty. More precisely, \citet{liu:etal:2011} considered as response the logarithm of BETAPLASMA which is modelled using a partially linear additive model taking as covariates associated to the linear component the gender labelled    SEX and the covariates BMI, CAL, FAT, FIBER, BETADIET,  ALCOHOL, SMOKE2 and SMOKE3, while  AGE and CHOL correspond to the predictors included in the model through additive nonparametric components, where the labelled covariates correspond to 
\begin{eqnarray*}
\mbox{SEX} &=& \mbox{1=male, 0=female}\\
\mbox{SMOK1}&=& \mbox{1=former smoker, 0=other}\\
\mbox{SMOK2} &=& \mbox{1=current smoker, 0=other}\\
\mbox{BMI} &=& \mbox{body mass index equal to (weight/(height)$^2$)}\\
\mbox{VIT1} &=& \mbox{1=fairly often, 0=other}\\
\mbox{VIT2} &=& \mbox{1=not often, 0=other}\\
\mbox{CAL} & =& \mbox{number of calories consumed per day}\\
\mbox{FAT} &=& \mbox{grams of fat consumed per day}\\
\mbox{ALCOHOL} &=& \mbox{number of alcoholic drinks consumed per week}\\
\mbox{BETADIET} &=& \mbox{dietary beta-carotene consumed (mcg/day)}\\
\mbox{CHOL} &=& \mbox{cholesterol consumed (mg/day)}\\
\mbox{FIBER} &=& \mbox{grams of fiber consumed per day}
\end{eqnarray*}
\citet{Guo:etal:2013}  proposed to model the BETAPLASMA using a \textsc{plam} with the same covariates as before but with  FIBER    entering the model in the additive component, instead of in the linear one. They considered an estimation procedure based on the composite quantile regression. The same model was  considered in \citet{lv:yang:guo:2017} who used a modal regression as estimation procedure. 

In this section, we consider  the model \textsc{plam} proposed by \citet{Guo:etal:2013}, that is, the response $Y$ is the plasma beta-carotene in ng/ml, named BETAPLASMA, which is modelled using the partially linear additive model
\begin{align}
Y = & \,\mu+\beta_1\mbox{SEX}+\beta_2\mbox{SMOK1}+\beta_3\mbox{SMOK2}+\beta_4\mbox{BMI}+\beta_5\mbox{VIT1} +\beta_6\mbox{VIT2} +\beta_7\mbox{CAL} 
\nonumber\\
& +  \beta_8\mbox{FAT}+\beta_9\mbox{ALCOHOL}+\beta_{10}\mbox{BETADIET}
 +\eta_1(\mbox{AGE}) +\eta_2(\mbox{CHOL})+\eta_3(\mbox{FIBER})+ \sigma \,\varepsilon
\nonumber\\
   = & \,\mu +\bbe \trasp \bZ+ \sum_{j=1}^3 \eta_j(X_j)+  \sigma \,\varepsilon\,.
\label{eq:betaplasma}
\end{align}
In \citet{liu:etal:2011} and \citet{Guo:etal:2013} one extremely high leverage point in alcohol consumption was observed and so the corresponding observation was deleted prior the analysis, then only 314 observations were used. Besides, in these papers and also in \citet{lv:yang:guo:2017} all variables except for the binary ones were standardized using the mean and the standard deviation. However, since the proposed method is resistant to outliers in the linear component, we considered the 315 observations. Furthermore,  taking into account that a robust procedure is used to estimate the unknown parameters, the non--binary variables are first standardized using as location the median instead of the mean and as dispersion the \textsc{mad} instead of the standard deviation.

In order to measure the performance of the estimators, we used a measure of the prediction capability of both methods. For that purpose, we split the sample in two groups.  A  sample of size $n_{\mbox{\footnotesize \sc test}}=100$  corresponding to the testing sample  was randomly selected, let $\itI$ the indices corresponding to this sample.  The remaining $n_{\mbox{\footnotesize \sc training}}=215$ observations were taken as the training sample to compute the estimators,  denoted $\wmu^{(-\itI)}$, $\wbbe^{(-\itI)}$ and $\weta_j^{(-\itI)}$.  Then, we calculated,   for $i\in \itI$, 
$$\wY_i= \wmu^{(-\itI)} +  \bZ_i\trasp \wbbe^{(-\itI)}+ \sum_{j=1}^3 \weta_j^{(-\itI)}(X_{ji})\,,$$
and  the prediction capability is measured through the median absolute prediction error, denoted \textsc{mape}, as
$\mbox{\textsc{mape}}=\median_{i \in \itI}\{|Y_i-\wY_i|\}\,.$
The selection of the testing sample   was repeated 50 times, leading to 50 values of the \textsc{mape}. Besides,  the   number of covariates selected at each  replication  was computed.  The average number of selected covariates  is  denoted as \textsc{av.size} in Table \ref{tab:mar-mape-av}  which also  reports the mean over the 50 replications of the \textsc{mape}.

\begin{table}[ht!]
\begin{center}
\begin{tabular}{|l|c|c|}
  \hline
\textsc{method} &   \textsc{mape} & \textsc{av.size} \\ 
  \hline
\textsc{penalized ls}   &   0.8850  &  7.70\\
\textsc{penalized rob}    &     0.6424   &   4.68 \\
\textsc{ls}  &  0.8920 &  10\\
\textsc{rob} &      0.6365  &   10 \\
   \hline 
\textsc{penalized ls$^{(-\textsc{out})}$} &  0.6326 & 5.18\\
   \hline 
\end{tabular}
\caption{\label{tab:mar-mape-av} Mean over replications of  the prediction errors (\textsc{mape}) and   average sizes of the resulting models (\textsc{av.size}) with  training samples of size $215$ and testing samples of size $100$, for the first four rows. The last row corresponds to the \textsc{mape} of the penalized least squares estimator computed without the detected vertical outliers and the  extremely high leverage point in alcohol consumption.}
\end{center}
\end{table}

All the measures were calculated for both the penalized robust proposal and its least squares counterpart (denoted \textsc{penalized} in the Table and Figure) and  also for the estimators with no penalization term, that is, for the robust approach of \citet{boente:martinez:2023} that do not select variables and  for the usual least squares approach which corresponds to $\rho(u)=u^2$. For the robust procedure, we use the the Tukey's loss function, as in the simulation study.
Figure  \ref{fig:MAPE-measure} displays the adjusted boxplots of the   \textsc{mape}   for the four estimators.

\begin{figure}[ht!]
\begin{center}
\includegraphics[scale=0.45]{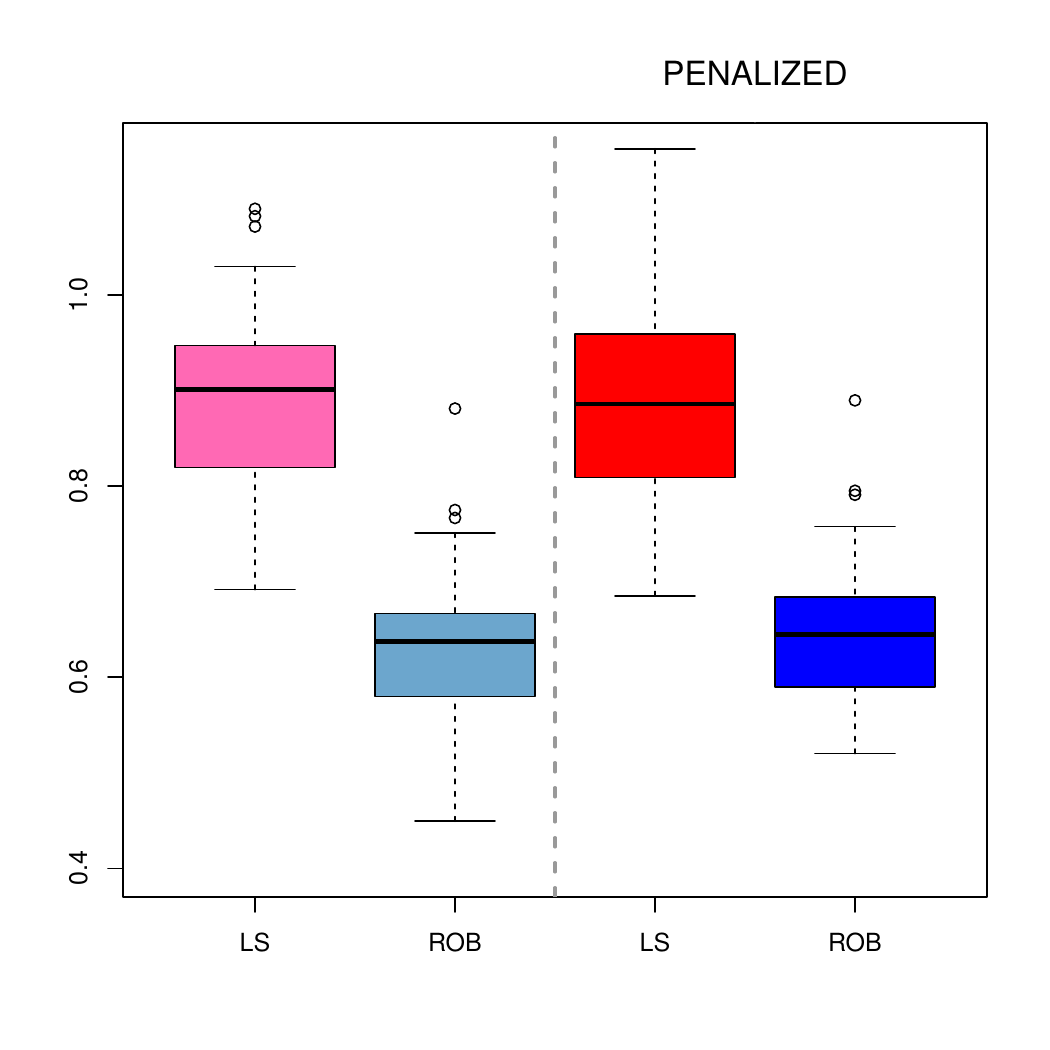}
\vskip-0.2in
\caption{\label{fig:MAPE-measure} Adjusted boxplots for the \textsc{mape} measures obtained for the estimators without penalization (on the left) and for the penalized (on the right) estimators.}
\end{center}
\end{figure}

As expected, the average size,  \textsc{av.size}, for the  estimators computed without a penalization term is equal to 10, i.e., to the number of covariates included in the linear regression component, since no variable selection is used. When considering the penalized estimators, the penalized robust proposal selects in average   4.68 covariates instead of the 10 original covariates $\bZ$ and the least squares approach leads to a larger number of components. The prediction measure considered is also increased for the least squares method both for the penalized and for the non--penalized estimators when compared with their robust counterparts. This behaviour may be explained by the effect that outliers have on the classical estimators. 

\begin{figure}[ht!]
\begin{center}
\includegraphics[scale=0.35]{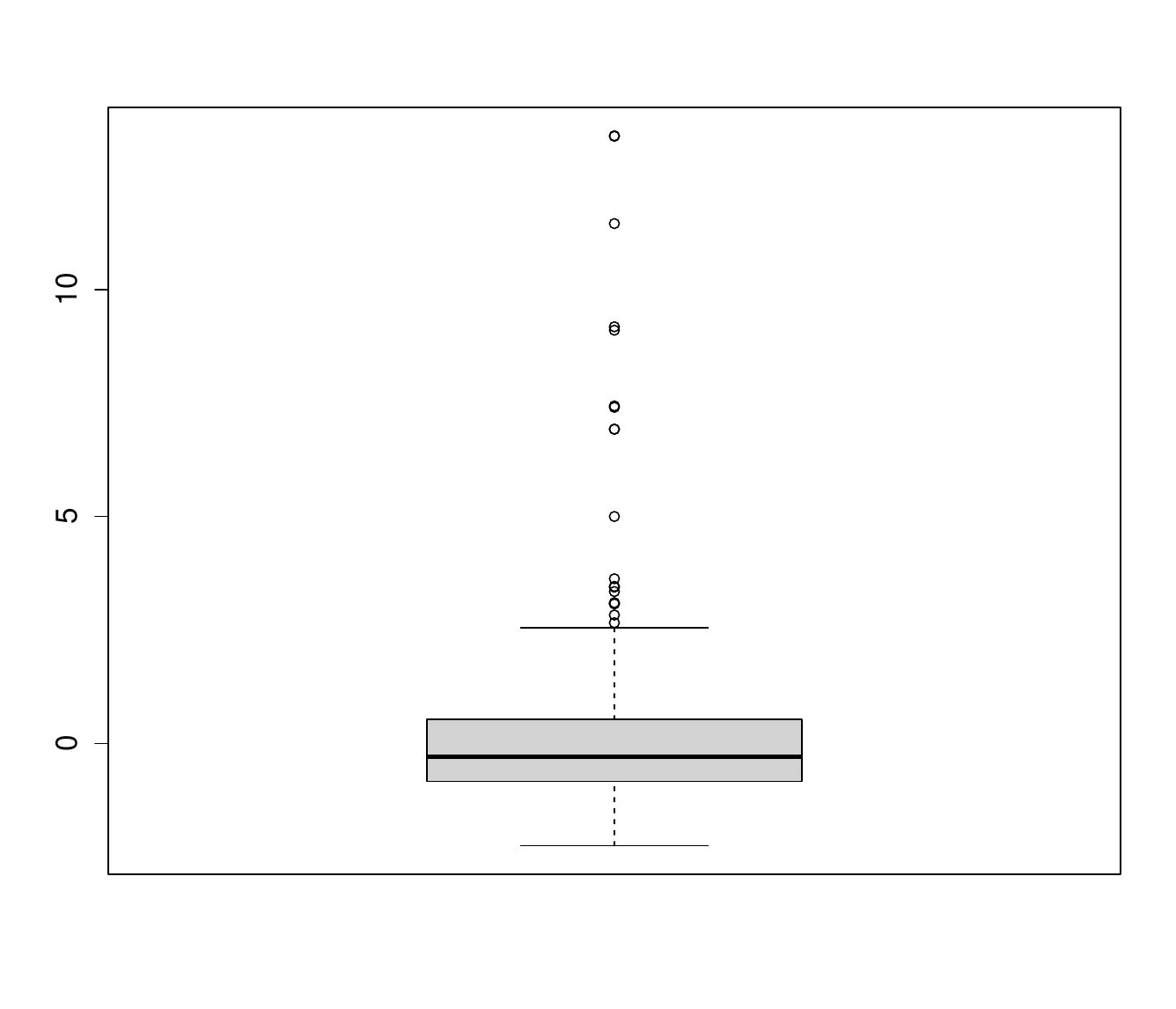}
\vskip-0.2in
\caption{\label{fig:boxplot-resid} Boxplot of the residuals obtained when fitting the complete data set using the penalized robust estimators.}
\end{center}
\end{figure}

To  identify the atypical observations that may have produced an increase on the the penalized least squares estimators  \textsc{mape},  we robustly estimate the parameters with the complete data set using the penalized approach introduced in Section \ref{sec:estimadores}.  The boxplot of the residuals  $r_i=Y_i-\wY_i$ displayed in Figure \ref{fig:boxplot-resid} shows the presence of 19 observations with large residuals, namely vertical outliers.

We repeat the prediction capability analysis of the penalized least squares approach after removing these vertical outliers and also the observation with high leverage  in alcohol consumption identified also  \citet{liu:etal:2011} and \citet{Guo:etal:2013}. The considered sample has then size 295 and, as above, we randomly chose a testing sample of size $n_{\mbox{\footnotesize \sc test}}=100$, while the remaining $n_{\mbox{\footnotesize \sc training}}=195$  correspond to the training sample to obtain the  \textsc{mape} for each of the replications.  The obtained results   are given in the last row of Table \ref{tab:mar-mape-av} and in Figure \ref{fig:MAPE-measure-lostres}, where the results for the  penalized least squares estimators computed without the atypical data are labelled \textsc{ls$^{(-\textsc{out})}$}. The obtained results and boxplot are quite similar to those corresponding to the penalized robust procedure using the whole data set, which confirms that the increase on the \textsc{mape} of the  least squares procedure,  previously described, is due by the effect of outliers.

\begin{figure}[ht!]
\begin{center}
\includegraphics[scale=0.45]{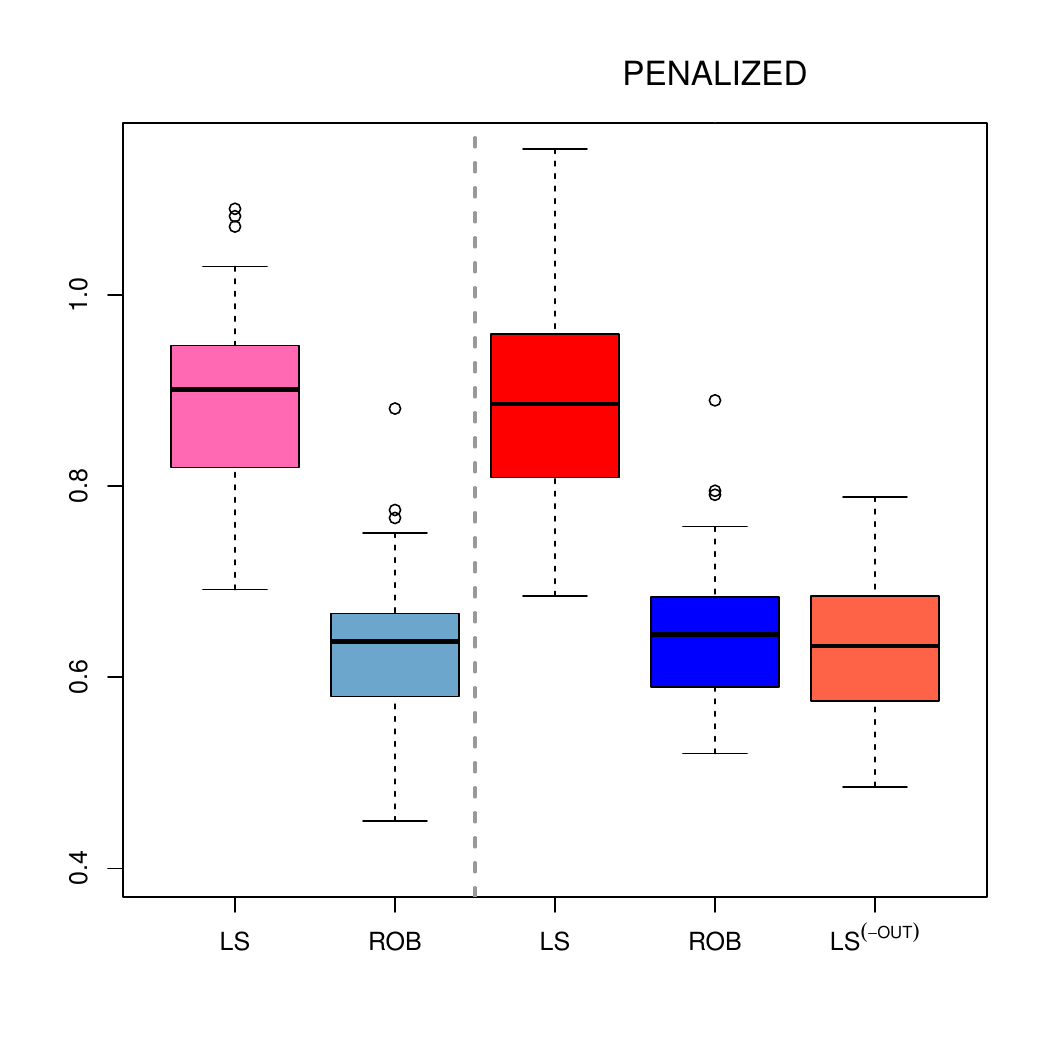}
\vskip-0.2in
\caption{\label{fig:MAPE-measure-lostres} Adjusted boxplots for the  \textsc{mape} measures obtained for the estimators without penalization (on the left) and for the penalized (on the right) estimators. The  \textsc{mape} corresponding to the penalized least squares estimator computed without the outliers is labelled \textsc{ls$^{(-\textsc{out})}$}.}
\end{center}
\end{figure}

\begin{table}[ht!]
\begin{center}
{\small{
\begin{tabular}{|l|c|c|c|c|c|c|c|c|c|c|}
  \hline
\textsc{penalized} &  \scriptsize{SEX}  &  \scriptsize{SMOK1}  &  \scriptsize{SMOK2}  &  \scriptsize{BMI}  &  \scriptsize{VIT1}  &  \scriptsize{VIT2}  &  \scriptsize{CAL}  & \scriptsize{FAT}  & \scriptsize{ALCOHOL}  &  \scriptsize{BETADIET}\\ 
  \hline
  \textsc{ls}      & \textbf{0.86} & 0.44 & \textbf{0.90} & {\textbf{1.00}} & {\textbf{1.00}} & {\textbf{0.86}} & 0.40 & \textbf{0.56} & \textbf{0.70} & {\textbf{0.98}}\\
\textsc{rob}     & {\textbf{0.72}} & 0.08 & 0.18 & {\textbf{1.00}} & {{0.40}} & 0.42 & 0.10 & {\textbf{0.64}} & 0.18 & {\textbf{0.96}} \\
   \hline
 \textsc{ls$^{(-\textsc{out})}$} & \textbf{0.70} & 0.08 & 0.40 & \textbf{1.00} & 0.44 & 0.32 & 0.06 & \textbf{0.74} & 0.44 & \textbf{1.00}
 \\
   \hline
\end{tabular}
}}
\caption{\label{tab:var-sel} Proportion of times each covariate is selected as active in the model for the penalized least squares and robust estimators. The last row corresponds to penalized least squares estimator computed after removing the detected vertical outliers and the   high leverage point in alcohol consumption.} 
\end{center}
\end{table}

Finally, in order to identify the covariates selected by each   penalized estimator    to be included in the model, Table \ref{tab:var-sel} presents the frequency of times that each variable was included in the model. We also include the results for the   penalized least squares estimator computed after removing the detected vertical outliers and the   high leverage point in alcohol consumption.  If one considers  a threshold of 0.5 to discard or include predictors, the \textsc{ls} approach with the whole sample selects   eight covariates   in the linear component, namely, SEX, SMOK2, BMI, VIT1, VIT2, FAT, ALCOHOL and BETADIET. In contrast, both the penalized robust procedure and penalized least squares approach after removing the atypical observations, \textsc{ls$^{(-\textsc{out})}$},   suggest  to include only the following four covariates: SEX, BMI, FAT and BETADIET. It is interesting to point out that SMOK2 and ALCOHOL variables are only chosen 18\%  of the times each of them by the robust proposal, while the least squares estimator with the complete sample selects them a 90\% and 70\% of the times, respectively. This may be due to the outlier present in alcohol consumption already discussed in \citet{liu:etal:2011} and \citet{Guo:etal:2013}.

Taking these observations into account,   the penalized robust proposal that selects variables in the linear regression component of the model seems an appropriate  choice when modelling this data set through the partially linear additive model \eqref{eq:betaplasma}.

\section{Concluding remarks}{\label{sec:coments}}
Partial additive linear regression models provide a useful tool to model a response when several covariables are present. The advantage over purely nonparametric ones is that they avoid the curse of dimensionality and make use of some preliminary information regarding the linear dependence on a subset of covariates.   When the linear regression coefficients are assumed to be sparse, i.e., when only a few explanatory variables included in the linear regression component are active,  the problem of joint estimation and automatic variable selection needs to be considered. In these circumstances, the statistical challenge of obtaining sparse and robust estimators  that are computationally feasible and provide variable selection should be complemented with the study of their asymptotic properties. 

In this paper, we have presented a family of estimators which are   reliable in the presence of atypical data and automatically selects variables. The regression coefficients are estimated through penalized $M-$regression estimators using preliminary estimators of the additive components and of the scale.
Consistency, rates of convergence and variable selection results are derived for   a broad family of penalty functions, which include  ADALASSO,  SCAD and MCP penalties.  The assumptions required to derived these results are very undemanding, which shows that these methods can be applied in very diverse contexts.    A robust    procedure to select the penalty parameter is also given.

The advantage of our proposal  over the classical one based on least squares is illustrated over a numerical study and the analysis of a real data set. In particular, the  results obtained in the simulation study illustrate that  robust methods   have a  performance similar to the classical ones   for clean samples and behave much better  in contaminated scenarios, showing greater reliability.   We exemplify our proposal on the   plasma beta-carotene level data set. The analysis shows that the robust  estimators automatically discard influential observations and select variables in a more reliable way.


\noi\textbf{\small Acknowledgements.} {\small  This research was partially supported by  the grant   \textsc{pict} 2021-I-A-00260 from \textsc{anpcyt} at  Argentina (Graciela Boente and Alejandra Mart\'{\i}nez). The research of Graciela Boente was also supported by grant 20020220200037\textsc{ba}    from the Universidad de Buenos Aires, Argentina  and the Spanish Project {MTM2016-76969P} from the Ministry of Economy, Industry and Competitiveness, Spain (MINECO/AEI/FEDER, UE). The research of Alejandra Mart\'{\i}nez was also supported by  the following Argentine grants  Proyecto  Interno CD-CBLUJ 204/19  from the Departamento de Ciencias B\'asicas, Universidad Nacional de Luj\'an and PICTO-2021-UNLU-00016 from \textsc{anpcyt} and Universidad Nacional de Luj\'an.}


\setcounter{equation}{0}
\def\theequation{A.\arabic{equation}}
  \setcounter{section}{0}
\renewcommand{\thesection}{\Alph{section}}
 
 \section{Appendix: Proofs}

\begin{proof}[Proof of \ref{teo:1}]
Note that by definition of $\wbbe=(\wbeta_1,\dots,\wbeta_q)\trasp$, we have that
\begin{align}
L_n\left(\wmu,\weta_1,\dots,\weta_p,\wsigma,\wbbe\right) &\leq L_n\left(\wmu,\weta_1,\dots,\weta_p,\wsigma,\wbbe\right) +\itJ_{\blach}(\wbbe)
\nonumber\\
&\leq L_n\left(\wmu,\weta_1,\dots,\weta_p,\wsigma,\bbe\right)+ \itJ_{\blach}(\bbe)\,.
\label{eq:ineq4.3}
\end{align}
 On the one hand, Lemma A.3 of \citet{boente:martinez:2023} implies that
\begin{equation}\label{eq:4.4}
\mathop{\sup_{ \varsigma >0, a\in \real, \bb \in \real^q}}_{g_1 \in \itS_{1}, \dots, g_p \in \itS_{p}}\left| L_n\left(a,g_1,\dots,g_p,\varsigma,\bb \right)-L\left(a,g_1,\dots,g_p,\varsigma,\bb \right)\right| \convpp 0\,,
\end{equation}
so $L_n\left(\wmu,\weta_1,\dots,\weta_p,\wsigma,\bbe\right)- L\left(\wmu,\weta_1,\dots,\weta_p,\wsigma,\bbe\right)\convpp 0$. On the other hand, assumptions \ref{ass:wsigma} and \ref{ass:weta} together with the Bounded Convergence Theorem  imply that $L\left(\wmu,\weta_1,\dots,\weta_p,\wsigma,\bbe\right)\convpp L\left(\mu,\eta_1,\dots,\eta_p,\sigma,\bbe\right)=\esp \rho_1\left(\varepsilon\right)$. Therefore,  the right hand of \eqref{eq:ineq4.3}  converges almost surely to $b_{\rho_1}=\esp \rho_1\left(\varepsilon\right) $, which implies that 
\begin{equation}
\limsup_{n\to\infty} L_n\left(\wmu,\weta_1,\dots,\weta_p,\wsigma,\wbbe\right) \leq b_{\rho_1} \qquad \mbox{ a.s. }
\label{eq:convergeLn}
\end{equation}
 
Fix $\delta>0$. It will be enough to show that  with probability one
\begin{equation}\label{eq:4.5}
\liminf_{n\to\infty} \inf_{\delta \leq \|\bb-\bbech\|} L_n\left(\wmu,\weta_1,\dots,\weta_p,\wsigma,\bb\right) > b_{\rho_1}\,,
\end{equation}
Indeed if \eqref{eq:4.5} holds, from \eqref{eq:convergeLn}, we get that for any $\delta>0$, $\prob(\exists n_0: \forall n\ge n_0 \; \|\wbbe-\bbe\|<\delta)=1 $, so $\wbbe\convpp\bbe$, as desired. 

The proof of \eqref{eq:4.5} follows the same arguments considered in the proof of Theorem 2.2.2 of \citet{smucler:tesis:2016}.
Note that $ L_n\left(\wmu,\weta_1,\dots,\weta_p,\wsigma,\bb\right)= \wA_n(\bb) + \wB_n(\bb)+\wC_n(\bb) + L\left(\mu,\eta_1,\dots,\eta_p,\sigma,\bb\right)$, where
\begin{align*}
\wA_n(\bb)& = L_n\left(\wmu,\weta_1,\dots,\weta_p,\wsigma,\bb\right)- L\left(\wmu,\weta_1,\dots,\weta_p,\wsigma,\bb\right)\,,\\
\wB_{n}(\bb)& = L\left(\wmu,\weta_1,\dots,\weta_p,\wsigma,\bb\right)- L\left(\mu,\eta_1,\dots,\eta_p,\wsigma,\bb\right)\,,\\
\wC_{n}(\bb)&= L\left(\mu,\eta_1,\dots,\eta_p,\wsigma,\bb\right)- L\left(\mu,\eta_1,\dots,\eta_p,\sigma,\bb\right)\,.
\end{align*}
Then, 
\begin{align}
\inf_{\delta \leq \|\bb-\bbech\|} L_n\left(\wmu,\weta_1,\dots,\weta_p,\wsigma,\bb\right) & \geq \inf_{\delta\leq\|\bb-\bbech\|}\wA_n(\bb) +\inf_{\delta\leq\|\bb-\bbech\|} \wB_n(\bb)+  \inf_{\delta\leq\|\bb-\bbech\|} \wC_n(\bb)
\nonumber\\
 & +\inf_{\delta\leq\|\bb-\bbech\|}L\left(\mu,\eta_1,\dots,\eta_p,\sigma,\bb\right)\,.
 \label{eq:cotainfLn}
\end{align}
Using \eqref{eq:4.4}, we have that $\sup_{\delta \leq \|\bb-\bbech\|} \left|\wA_n(\bb)\right| \convpp 0$, therefore 
\begin{equation}
\label{eq:wAnb}
\inf_{\delta\leq\|\bb-\bbech\|}\wA_n(\bb) \convpp 0\,.
\end{equation}
First note that for any $\varsigma$, 
$$L\left(\mu,\eta_1,\dots,\eta_p,\varsigma,\bb\right)= \esp \rho_1\left(\frac{\sigma \varepsilon + \bZ\trasp(\bbe-\bb)}{\varsigma}\right)\,.$$
Then, using a Taylor's expansion of order one and the fact that from assumption \ref{ass:rho_bounded_derivable}(b), $\zeta_1(t)=t\psi_1(t)$ is bounded, we get that
\begin{align*}
|\wC_{n}(\bb)| &\le  \esp  \left| \rho_1\left(\frac{\sigma \varepsilon + \bZ\trasp(\bbe-\bb)}{\wsigma}\right)-   \rho_1\left(\frac{\sigma \varepsilon + \bZ\trasp(\bbe-\bb)}{\sigma}\right)\right| \leq \|\zeta_1\|_{\infty}\left|\frac{\wsigma-\sigma}{\min(\sigma, \wsigma)}\right|\,.
\end{align*}
 Using that $\wsigma$ is a strong consistent estimator of $\sigma>0$, we obtain that  $\sup_{\delta \leq \|\bb-\bbech\|} |\wC_n(\bb)|\convpp 0$,  so  
\begin{equation}
\label{eq:wCnb} 
\inf_{\delta\leq\|\bb-\bbech\|} \wC_n(\bb) \convpp 0\,.
\end{equation}
Using again a Taylor's expansion of order one  and the fact that from assumption  \ref{ass:rho_bounded_derivable}(b)   $\psi_1$ is bounded, we obtain that
\begin{align*}
|\wB_{n}(\bb)| &\le  \esp \left|\rho_1\left(\frac{Y-\wmu-\sum_{j=1}^p \weta_j(X_{j})-\bZ\trasp\bb}{\wsigma}\right)-\rho_1\left(\frac{Y-\mu-\sum_{j=1}^p \eta_j(X_j)-\bZ\trasp\bb}{\wsigma}\right)\right|\\
&   \leq \|\psi_1\|_{\infty}\left|\frac{1}{\wsigma}\right|\esp \left|\wmu+\sum_{j=1}^p\weta_j(X_j)-\left(\mu+\sum_{j=1}^p \eta_j(X_j)\right)\right|  \leq \|\psi_1\|_{\infty} \frac{|\wmu-\mu| + \sum_{j=1}^p  \|\weta_j-\eta_j\|_{\infty}}{\wsigma}\,.
\end{align*}
Using again the consistency of $\wsigma$ given in assumption \ref{ass:wsigma} and the consistency of $\wmu$ and $\weta_j$ stated in \ref{ass:weta},   we get that  $\sup_{\delta \leq \|\bb-\bbech\|} |\wB_n(\bb)|$ converges almost surely to $0$,  that is, 
\begin{equation}
\label{eq:wBnb} 
\inf_{\delta\leq\|\bb-\bbech\|} \wB_n(\bb) \convpp 0\,.
\end{equation}
From \eqref{eq:cotainfLn}, \eqref{eq:wAnb}, \eqref{eq:wCnb} and \eqref{eq:wBnb}, we obtain that with probability 1
\begin{equation}\label{eq:cotainfLnlimite}
\liminf_{n\to\infty} \inf_{\delta \leq \|\bb-\bbech\|} L_n\left(\wmu,\weta_1,\dots,\weta_p,\wsigma,\bb\right) \ge \inf_{\delta\leq\|\bb-\bbech\|}L\left(\mu,\eta_1,\dots,\eta_p,\sigma,\bb\right)=D \,.
\end{equation}
The proof will be completed if we show that $D> b_{\rho_1}$. Suppose that  $D\leq  b_{\rho_1}$. Let $\{\bb_m\}_{m\ge 1}$ be a sequence  such that $\|\bb_m-\bbe\|\geq \delta$ for all $m$ and
$$\lim_{m\to \infty}L\left(\mu,\eta_1,\dots,\eta_p,\sigma,\bb_m\right)=D\,.$$
Assume that for some subsequence $m_k$, $\bb_{m_k}$ converges to a point  $\bbe^*$ such that $\|\bbe^*-\bbe\|\geq \delta$. The Bounded Convergence Theorem entails that 
$$D=\lim_{k\to \infty}L\left(\mu,\eta_1,\dots,\eta_p,\sigma,\bb_{m_k}\right)=L\left(\mu,\eta_1,\dots,\eta_p,\sigma,\bbe^*\right)\,.$$
 Therefore, using that $D\leq  b_{\rho_1}$, we conclude that
$$L\left(\mu,\eta_1,\dots,\eta_p,\sigma,\bbe^*\right) \leq  b_{\rho_1}=L\left(\mu,\eta_1,\dots,\eta_p,\sigma,\bbe\right)\,,$$
which contradicts the fact that $L\left(\mu,\eta_1,\dots,\eta_p,\sigma,\bb \right)  $ has a unique minimum at $\bb=\bbe$ as mentioned in Remark \ref{remark:comentarios}. Thus,   $\|\bb_m\|\to +\infty$. 

Denote $\bb_m^*= {\bb_m}/{\|\bb_m\|} $, then there exists a subsequence $\{m_k\}_{k\ge 1}$ such that $\bb_{m_k}^*\to \bbe^*$ with $\|\bbe^*\|=1$. It follows that
\begin{align}
 b_{\rho_1} &\geq D= \lim_{k\to \infty} L\left(\mu,\eta_1,\dots,\eta_p,\sigma,\bb_{m_k}\right)=\lim_{k\to \infty}  \esp\rho_1\left(\frac{\sigma\,\epsilon+  \bZ\trasp\left(\bbe-\bb_{m_k}^* \|\bb_{m_k}\|\right)}{\sigma}\right)\nonumber \\
&\geq  \liminf_{k\to \infty} \esp\rho_1\left(\epsilon +\frac{ \bZ\trasp\left(\bbe-\bb_{m_k}^* \|\bb_{m_k}\|\right)}{\sigma}\right)\,\buno_{\{\bZ\trasp\bb_{m_k}^*\neq 0\}} 
\nonumber
\\
& \geq  \esp  \liminf_{k\to \infty} \rho_1\left(\epsilon +\frac{ \bZ\trasp\left(\bbe-\bb_{m_k}^* \|\bb_{m_k}\|\right)}{\sigma}\right)\,\buno_{\{\bZ\trasp\bb_{m_k}^*\neq 0\}} \,, 
\label{eq:fatou}
\end{align}
where, in the last inequality, we have used Fatou's lemma. Recall that \ref{ass:rho_bounded_derivable}(a)  states that $\rho_1$ is an even function and $\lim_{t\to \infty} \rho_1(t)=\|\rho_1\|_\infty=1$, then the right hand side in \eqref{eq:fatou} equals $\prob(\bZ\trasp\bbe^*\neq 0)$, so we have
$$ b_{\rho_1} \geq D\ge \prob(\bZ\trasp\bbe^*\neq 0)\,,$$ 
which contradicts assumption \ref{ass:probaZ} which states that $\prob(\bZ\trasp\bbe^*\neq 0)> 1-c\ge  b_{\rho_1}$.  Then, we have that $D> b_{\rho_1}$, so from  \eqref{eq:cotainfLnlimite},  for any $\delta>0$,   \eqref{eq:4.5} holds, concluding the proof.
\end{proof}

Henceforth, we denote as $N(\epsilon, \itF, L_s(\qu))$ and $N_{[\;]}(\epsilon, \itF, L_s(\qu))$   the covering  and bracketing  numbers  of the class $\itF $ with respect to the distance in $ L_s(\qu)$ and as  $\|f\|_{\qu,2}=\left(\esp_{\qu}(f^2)\right)^{\frac 12}$.  For a class of functions $\itF$ with envelope $F$, define  the bracketing integral as 
$$J_{[\;]}(\delta, \itF, L_2(P)) =\int_0^\delta \sqrt{1+ \log N_{[\;]}(\delta, \itF, L_2(P)) } d\delta\,.$$
Recall that $\itL_1=\itC^{1}[0,1]$ corresponds to the space of continuously differentiable functions on $[0,1]$ with  norm $\|\eta\|_{\itL_1}=\max(\|\eta\|_{\infty},\|\eta^{\prime}\|_{\infty})$. 
From now on, we denote $\itV_{\itL_{1}, M}=\{\eta \in\itL_1: \|\eta\|_{\itL_1}\le M \}$   the ball of radius $M$. Theorem 2.7.1 in \citet{vanderVaart:wellner:1996} entails that  $\log N\left(\delta, \itV_{\itL_{1}, 1}, \|\cdot\|_{\infty} \right) \le   K\, ( {1}/{\delta})\,,$ 
where the constant $K$  is independent of $\delta$, so 
\begin{equation}
\label{eq:cubroVM}
\log N\left(\delta, \itV_{\itL_{1}, M}, \|\cdot\|_{\infty}  \right) \le   K\, \frac{M}{\delta}\,.
\end{equation}

In order to prove Theorem \ref{teo:rate} which derives consistency rates, we will need the following Lemmas.

\begin{lemma}\label{lema:5.1} Let $(Y_i,\bZ_i\trasp,\bX_i\trasp)\trasp$ be i.i.d. observations satisfying \eqref{eq:plam} with the errors $\varepsilon_i$ independent from the vector of covariates $(\bZ_i\trasp,\bX_i\trasp)\trasp$. Let $\rho_1$ be a function satisfying \ref{ass:rho_bounded_derivable} and \ref{ass:psi-dos-der}.   Assume \ref{ass:densidadepsilon} to \ref{ass:weta},  \ref{ass:zmom2} and  \ref{ass:wetatasa} hold.
Then, we have that
\begin{equation}{\label{eq:aprobar2}}
\wbW=\frac 1{\sqrt{n}} \sum_{i=1}^n \left\{\psi_1\left(\frac{Y_i-\wmu-\sum_{j=1}^p \weta_j(X_{ij})-\bZ_i\trasp \bbe}{\wsigma}\right)\,  \bZ_{i}- \bV(\wmu, \weta_1, \dots, \weta_p,\wsigma)\right\} =O_{\prob}(1)\,,
\end{equation}
where 
$$\bV(a, g_1, \dots, g_p, \varsigma)=\esp \psi_1\left(\frac{Y-a-\sum_{j=1}^p g_j(X_{j})-\bZ\trasp \bbe}{\varsigma}\right)\,   \bZ\,.$$
If, in addition \ref{ass:zmean0} holds, we have that 
\begin{equation}{\label{eq:aprobar}}
\frac 1{\sqrt{n}} \sum_{i=1}^n \psi_1\left(\frac{Y_i-\wmu-\sum_{j=1}^p \weta_j(X_{ij})-\bZ_i\trasp \bbe}{\wsigma}\right)\,  \bZ_i =O_{\prob}(1)\,.
\end{equation}
\end{lemma}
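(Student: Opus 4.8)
The plan is to prove \eqref{eq:aprobar2} by an empirical process argument, controlling the centred sum over a suitable class of functions indexed by the nuisance parameters $(a,g_1,\dots,g_p,\varsigma)$, and then to deduce \eqref{eq:aprobar} from \eqref{eq:aprobar2} by showing that $\sqrt{n}\,\bV(\wmu,\weta_1,\dots,\weta_p,\wsigma)=O_{\prob}(1)$ using \ref{ass:zmean0}. First I would introduce, for a fixed constant $M>0$ (the one appearing in \ref{ass:wetatasa}) and a neighbourhood $[\sigma/2,2\sigma]$ of $\sigma$ (legitimate by \ref{ass:wsigma}), the class of $\real^q$-valued functions
$$
\itF_M=\left\{(y,\bz,\bx)\mapsto \psi_1\!\left(\frac{y-a-\sum_{j=1}^p g_j(x_j)-\bz\trasp\bbe}{\varsigma}\right)\bz :\ |a-\mu|\le M,\ g_j-\eta_j\in\itV_{\itL_1,M},\ \varsigma\in[\sigma/2,2\sigma]\right\},
$$
and write $\wbW=\sqrt{n}\,(\prob_n-\prob)f_{\widehat\theta}$ with $f_{\widehat\theta}\in\itF_M$ on an event of probability tending to one (here I use \ref{ass:wsigma}, \ref{ass:weta} and \ref{ass:wetatasa} to guarantee that the estimated nuisance parameters fall in the index set eventually). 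Since $\psi_1$ is bounded by \ref{ass:rho_bounded_derivable}(b), the class $\itF_M$ has the integrable envelope $F(y,\bz,\bx)=\|\psi_1\|_\infty\|\bz\|$, which is in $L_2(\prob)$ by \ref{ass:zmom2}.

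The key step is to verify that $\itF_M$ is a $\prob$-Donsker class, for then $\{\sqrt{n}(\prob_n-\prob)f : f\in\itF_M\}$ is asymptotically tight in $\ell^\infty(\itF_M)$ and in particular $\sup_{f\in\itF_M}|\sqrt{n}(\prob_n-\prob)f|=O_{\prob}(1)$, which gives \eqref{eq:aprobar2}. To get the Donsker property I would build a bracketing bound: the map $t\mapsto\psi_1(t)$ is Lipschitz (by \ref{ass:psi-dos-der}, $\psi_1'$ is bounded), and the argument $t(a,g_1,\dots,g_p,\varsigma)=(y-a-\sum_j g_j(x_j)-\bz\trasp\bbe)/\varsigma$ depends Lipschitz-continuously on $(a,\sum_j g_j,\varsigma)$ uniformly over the compact scale range, so a bracket for the index set translates into a bracket (coordinatewise, in $L_2(\prob)$, with the envelope as multiplier) for $\itF_M$. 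The index set is a product of a bounded interval for $a$, a bounded interval for $\varsigma$, and the balls $\itV_{\itL_1,M}$ for $g_j-\eta_j$; combining the elementary $\delta$-net bounds for the intervals with the metric-entropy bound \eqref{eq:cubroVM}, namely $\log N(\delta,\itV_{\itL_1,M},\|\cdot\|_\infty)\le K M/\delta$, yields $\log N_{[\,]}(\delta,\itF_M,L_2(\prob))\lesssim 1/\delta$. Hence the bracketing integral $J_{[\,]}(1,\itF_M,L_2(\prob))=\int_0^1\sqrt{1+\log N_{[\,]}(\delta,\itF_M,L_2(\prob))}\,d\delta\lesssim\int_0^1\delta^{-1/2}d\delta<\infty$, which is exactly the condition for $\itF_M$ to be Donsker (Theorem 19.5 in van der Vaart, or the version in van der Vaart--Wellner cited in the paper). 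I would also note that, since the statement only needs $O_{\prob}(1)$ and not a limit in distribution, a weaker Glivenko--Cantelli-plus-rate argument would also suffice; but the Donsker route is cleanest given that \eqref{eq:cubroVM} is already recorded.

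For \eqref{eq:aprobar}, I would rewrite its left-hand side as $\wbW/1 + \sqrt{n}\,\bV(\wmu,\weta_1,\dots,\weta_p,\wsigma)$ using \eqref{eq:aprobar2}, so it remains to show $\sqrt{n}\,\bV(\wmu,\weta_1,\dots,\weta_p,\wsigma)=O_{\prob}(1)$. Here I would expand $\bV$ around the true nuisance parameters: by \ref{ass:zmean0}, $\bV(\mu,\eta_1,\dots,\eta_p,\sigma)=\esp[\psi_1(\varepsilon\sigma/\sigma)\,\esp(\bZ\mid\bX)]=\esp[\psi_1(\varepsilon)]\,\esp\,\esp(\bZ\mid\bX)=\bcero_q$ (using also the independence of $\varepsilon$ from $(\bZ,\bX)$ and that $\psi_1$ is odd so $\esp\psi_1(\varepsilon)$ is at least well-defined; in fact one need not evaluate this constant since the conditional expectation vanishes). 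Then a first-order Taylor expansion of $\bV$ in $(a,g_1,\dots,g_p,\varsigma)$, valid because $\psi_1'$ and $\varphi_1(t)=t\psi_1'(t)$ are bounded (\ref{ass:psi-dos-der}), gives
$$
\|\bV(\wmu,\weta_1,\dots,\weta_p,\wsigma)\|\le C\left(|\wmu-\mu|+\sum_{j=1}^p\|\weta_j-\eta_j\|_\infty+|\wsigma-\sigma|\right)\cdot\esp\|\bZ\|,
$$
plus the term coming from the $\esp(\bZ\mid\bX)=\bcero$ identity, which actually makes the leading contribution vanish; multiplying by $\sqrt n$ and invoking the assumed $\sqrt n$-rates of $\wmu,\weta_j,\wsigma$ (these would be supplied where Lemma \ref{lema:5.1} is applied, or are the standing hypotheses for Theorem \ref{teo:rate}) gives the bound. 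The main obstacle is the bracketing-entropy computation for $\itF_M$: one has to be careful that the Lipschitz-in-argument reduction is uniform over the scale range and that the envelope $\|\psi_1\|_\infty\|\bz\|$ enters multiplicatively without destroying the $L_2(\prob)$ bracket sizes, which is where \ref{ass:zmom2} is essential; everything else is routine. I would also remark that, more simply, \eqref{eq:aprobar} may be obtained directly from \eqref{eq:aprobar2} by observing that under \ref{ass:zmean0} the centring term $\bV(\wmu,\weta_1,\dots,\weta_p,\wsigma)$ is of smaller order than $1/\sqrt n$ after the Taylor expansion above, so no separate empirical-process argument is needed for the second display.
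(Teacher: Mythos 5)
Your treatment of \eqref{eq:aprobar2} matches the paper's: both arguments place the summand in a class indexed by $(a,g_1,\dots,g_p,\varsigma)$, combine finite nets for $a$ and $\varsigma$ with the entropy bound \eqref{eq:cubroVM} for the balls $\itV_{\itL_1,M}$ to get $\log N_{[\,]}(\delta,\cdot,L_2(P))\lesssim 1/\delta$, and conclude via the maximal inequality of Theorem 2.14.2 in van der Vaart and Wellner (the paper works componentwise and finishes with Markov's inequality rather than asserting the Donsker property, but that is cosmetic). One point of care: the argument of $\psi_1$ is \emph{not} uniformly Lipschitz in $\varsigma$, since perturbing the scale changes it by an amount proportional to the unbounded numerator; the paper controls this using the boundedness of $\varphi_1(t)=t\,\psi_1^{\prime}(t)$ from \ref{ass:psi-dos-der}, which you cite but should invoke explicitly at that step rather than the blanket claim of Lipschitz dependence.

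The genuine gap is in your derivation of \eqref{eq:aprobar}. Your main route bounds $\|\bV(\wmu,\weta_1,\dots,\weta_p,\wsigma)\|$ by a constant times $|\wmu-\mu|+\sum_{j=1}^p\|\weta_j-\eta_j\|_{\infty}+|\wsigma-\sigma|$ and then multiplies by $\sqrt{n}$, ``invoking the assumed $\sqrt{n}$-rates'' of the nuisance estimators. No such rates are among the hypotheses: \ref{ass:wsigma} and \ref{ass:weta} give only consistency, \ref{ass:wetatasa} gives only containment in a $\itC^{1}$-ball, and for spline estimators of the $\eta_j$ a $\sqrt{n}$-rate in $\|\cdot\|_{\infty}$ is in general false; so this step does not close. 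The correct observation---which your hedging remark about the $\esp(\bZ\mid\bX)=\bcero_q$ identity gestures at but never carries through---is that $\bV$ vanishes \emph{identically} on the whole index class, not merely to leading order at the truth. Indeed $Y-\bZ\trasp\bbe=\mu+\sum_{j=1}^p\eta_j(X_j)+\sigma\varepsilon$, so the argument of $\psi_1$ in the definition of $\bV(a,g_1,\dots,g_p,\varsigma)$ is a function of $(\bX,\varepsilon)$ alone; conditioning on $(\bX,\varepsilon)$ and using the independence of $\varepsilon$ from the covariates yields $V_{\ell}(a,g_1,\dots,g_p,\varsigma)=\esp\left\{\psi_1(\cdot)\,\esp(Z_{\ell}\mid\bX)\right\}=0$ for every admissible $(a,g_1,\dots,g_p,\varsigma)$ under \ref{ass:zmean0}. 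Hence $\sqrt{n}\,\bV(\wmu,\weta_1,\dots,\weta_p,\wsigma)=\bcero_q$ exactly, and \eqref{eq:aprobar} follows from \eqref{eq:aprobar2} with no expansion and no rates; you should replace the Taylor-expansion step with this direct computation.
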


  \begin{proof} 
To prove \eqref{eq:aprobar2}, it is enough to show that, for each $1\le \ell\le q$, $\wW_\ell =O_{\prob}(1)$, where  $\wbW=(\wW_1, \dots, \wW_q)\trasp$.
For that purpose,  define the class  of functions
  \begin{align*}
  \itF_\ell &=\left\{f_{a,\bg, \varsigma}(y, \bx, \bz)= \psi_1\left(\frac{y-a-\sum_{j=1}^p g_j(x_{j})-\bz\trasp \bbe}{\varsigma}\right) \,  z_{\ell} :\right. \\
  &  \hskip.3in |\varsigma-\sigma|< \sigma/2 , |\mu-a|\le 1/2, g_j\in \itL_1, \|g_j-\eta_j\|_{\itL_1}\le M \Bigg	\}\,,
   \end{align*}
  where $ z_{\ell}$ is the $\ell-$th component of $\bz$, $\bx=(x_1,\dots,x_p)$, $\bg=(g_1,\dots, g_p)$ and   $M$ is the constant given in assumption \ref{ass:wetatasa}, that is, the constant such that 
\begin{equation}
\label{eq:wetaacotada}
\lim_{n\to \infty} \prob\left(\max_{1\le j\le p} \|\weta_j-\eta_j\|_{\itL_1}\le M\right)=1\,.
\end{equation}  
Note that $P f_{a,\bg, \varsigma} = V_{\ell}(a, g_1, \dots, g_p, \varsigma)$. The class $\itF_\ell$ has envelope $F(y, \bx, \bz)=  \|\psi_1\|_{\infty} \|\bz\|$ and $\|F\|_{L_2(P)}<\infty$ by  \ref{ass:zmom2}. Write for simplicity $f_{a,\bg, \varsigma}$ instead of $f_{a,\bg, \varsigma}(y, \bx, \bz)$. Then, given $f_{a,\bg, \varsigma}, f_{a_0,\bg_0, \varsigma_0}  \in \itF_\ell $, we have
\begin{align*}
\left|f_{a,\bg, \varsigma} -f_{a_0,\bg_0, \varsigma_0} \right| &  \le \left|f_{a,\bg, \varsigma}-f_{a,\bg, \varsigma_0}\right| + \left|f_{a,\bg, \varsigma_0}-f_{a_0,\bg_0, \varsigma_0}\right|\\
& \le | z_{\ell}|\;  \left|\psi_1\left(\frac{y-a-\sum_{j=1}^p g_j(x_{j})-\bz\trasp \bbe}{\varsigma}\right)- \psi_1\left(\frac{y-a-\sum_{j=1}^p g_{j}(x_{j})-\bz\trasp \bbe}{\varsigma_0}\right)\right|\\
& + | z_{\ell}|\; \left|\psi_1\left(\frac{y-a-\sum_{j=1}^p g_j(x_{j})-\bz\trasp \bbe}{\varsigma_0}\right)-\psi_1\left(\frac{y-a_0-\sum_{j=1}^p g_{0,j}(x_{j})-\bz\trasp \bbe}{\varsigma_0}\right)\right| \\
& \le | z_{\ell}|\; \left\{2\frac{\|\varphi_1\|_{\infty}}{\sigma} |\varsigma-\varsigma_0|+ 2\frac{\|\psi_1^{\prime}\|_{\infty}}{\sigma}\left(|a-a_0|+\sum_{j=1}^p \|g_j-g_{0,j}\|_{\infty}\right)\right\} \\
& \le   2\, \frac{\|\varphi_1\|_{\infty}+\|\psi_1^{\prime}\|_{\infty} }{\sigma} \; | z_{\ell}|\left(|\varsigma-\varsigma_0|+|a-a_0|+\sum_{j=1}^p \|g_j-g_{0,j}\|_{\infty}\right)
\end{align*} 
Then, if we denote $B_1= 2\, (\|\varphi_1\|_{\infty}+\|\psi_1^{\prime}\|_{\infty} )/{\sigma}$, we have that 
\begin{equation}
\label{eq:cotaf}
\left|f_{a,\bg, \varsigma} -f_{a_0,\bg_0, \varsigma_0} \right| \le B_1  \|\bz\| \left(|\varsigma-\varsigma_0|+|a-a_0|+\sum_{j=1}^p \|g_j-g_{0,j}\|_{\infty}\right)\,.
\end{equation}
Given $\delta>0$, take $\nu= \delta /B_2$  with $B_2=2 B_1  (p+2) \left(\esp\|\bZ\|^2\right)^{1/2}$ and denote $N_1=N\left(2\,\nu, \itV_{\itL_{1}, M}, \|\;\|_{\infty}  \right) $, $ N_2, N_3\in \natu$    be the integer part of $5/\nu$ and $5\sigma/(2\nu)$. Then, the sets  $\itI_2=\{ a \in \real:   |a-\mu|\le 1\}$ and $\itI_3=\{ \varsigma \in \real: |\varsigma-\sigma|\le \sigma/2\}$ can be     covered by $N_2$ and $N_3$ intervals of length at most $\nu$.

 Denote as $\itG_j=\{ g_j\in \itL_1, \|g_j-\eta_j\|_{\itL_1}\le M\}$. Clearly, $g\in \itG_j$ if and only if $g_j-\eta_j\in \itV_{\itL_{1}, M}$. Hence, we can   take $ g_{j,1}, \dots, g_{j,N_1}$ in  $\itG_j$  such that $ \itG_j\subset \cup_{s=1}^{N_1} \{g\in \itL_1: \|g-g_{j,s}\|_{\infty}\le \nu\}$. Similarly, choose $a_1,\dots, a_{N_2}\in \itI_2$ and $\varsigma_1,\dots, \varsigma_{N_2}\in \itI_3$ such that $\itI_2\subset  \cup_{s=1}^{N_2} \{ a \in \real:   |a-a_s|\le \nu\}$ and  $\itI_3\subset  \cup_{s=1}^{N_3} \{ \varsigma>0:   |\varsigma-\varsigma_s|\le \nu\}$. Then, for any  $f_{a,\bg, \varsigma}\in \itF_{\ell}$, there exist $1\le s_a\le N_2$, $1\le s_j\le N_1$, for $1\le j\le p$ and $1\le s_{\varsigma}\le N_3$ such that 
$|a-a_{s_a}|\le \nu$, $\|g_j-g_{j, s_j}\|_{\infty}\le \nu$ and $|\varsigma- \varsigma_{s_{\varsigma}}|\le \nu$, so if we denote as
$\wtf= f_{a_{s_a},\bg_{\bese}, \varsigma_{s_{\varsigma}}}$ with $\bg_{\bese}=(g_{1, s_1}, \dots, g_{p, s_p})$, from \eqref{eq:cotaf} we have that
$$\left|f_{a,\bg, \varsigma} -\wtf \right| \le B_1  (p+2) \nu \;  \|\bz\| \,. $$
Therefore, if we define $\wtf_{\mbox{\sc \scriptsize u}}= \wtf +  B_1  (p+2) \nu \;  \|\bz\| $ and $\wtf_{\mbox{\sc \scriptsize l}}= \wtf -  B_1  (p+2) \nu \;  \|\bz\| $, we have that $\wtf_{\mbox{\sc \scriptsize l}}\le f_{a,\bg, \varsigma} \le \wtf_{\mbox{\sc \scriptsize u}}$ and 
$$\| \wtf_{\mbox{\sc \scriptsize u}}- \wtf_{\mbox{\sc \scriptsize l}}\|_{L_2(P)}= 2 B_1  (p+2) \nu \;  \left(\esp\|\bZ\|^2\right)^{1/2}=\delta\,.$$
Therefore,
$$N_{[\;\;]}\left(\delta,\itF_\ell,L^2(P)\right)\le N_1^p \, N_2 \, N_3\le  N\left(\frac{2\,\delta}{B_2}, \itV_{\itL_{1}, M}, \|\cdot\|_{\infty}  \right)^p \frac{25\sigma\, B_2^2}{2\,\delta^2}\,,$$
which together with \eqref{eq:cubroVM} imply that 
$$\log N_{[\;\;]}\left(\delta,\itF_\ell,L^2(P)\right)\le  K\,p\, \frac{M\, B_2}{2\, \delta}+ \log\left(\frac{25\sigma\, B_2^2}{2}\right)+2\log\left(\frac{1}{\delta}\right)\,.$$
Using that $\log(p)\le p$ for $p\ge 1$, we get that for $\delta<1$,
$$\log N_{[\;\;]}\left(\delta,\itF_\ell,L^2(P)\right)\le B_3 \frac{1}{\delta}  $$
with $B_3= K\,p\, {M\, B_2}/2+2  + \log\left( {25\sigma\, B_2^2}/{2}\right)$ which implies that $ J_{[\;\;]}\left(1,\itF_\ell,L^2(P)\right)<\infty$.
Theorem 2.14.2 in \citet{vanderVaart:wellner:1996} implies that for some universal constant $A$,
$$\esp\left(\sqrt{n}\sup_{f\in\itF_{\ell}}|(P_n-P)f|\right)\le A \; J_{[\;\;]}\left(1,\itF_\ell,L^2(P)\right)\|F\|_{L_2(P)}= A_0<\infty\,.$$
Denote $\itA_n=\{\max_{1\le j\le p} \|\weta_j-\eta_j\|_{\itL_1}< M\;,\; |\wsigma-\sigma|\le \sigma/2\;,\; |\wmu-\mu|\le 1\}$. Given $\delta>0$ from \eqref{eq:wetaacotada} the consistency of $\wsigma$ and of $\wmu$, we get that there exists $n_0\in \natu$ such that,   for any $n\ge n_0$,
$\prob\left(\itA_n\right)\ge 1-\delta/2$. Hence, taking into account that in $\itA_n$, $|\wW_\ell|\le \sqrt{n} \sup_{f\in\itF_{\ell}}|(P_n-P)f|$, we obtain that if $C>  2\, A_0/\delta$
\begin{align*}
\prob\left(|\wW_\ell| > C \right) & \le \prob\left(\sqrt{n} \sup_{f\in\itF_{\ell}}|(P_n-P)f|>C \cap \itA_n\right)+ \frac{\delta}{2}\\
& \le \prob\left(\sqrt{n} \sup_{f\in\itF_{\ell}}|(P_n-P)f|>C \right)+ \frac{\delta}{2}\\
& \le \frac{1}{C}\esp\left(\sqrt{n} \sup_{f\in\itF_{\ell}}|(P_n-P)f| \right)+ \frac{\delta}{2}\\
& \le A_0 \frac{1}{C} + \frac{\delta}{2}\le \delta\,,
\end{align*}
which concludes the proof of \eqref{eq:aprobar2}.

Hence, to prove \eqref{eq:aprobar} it is enough to show that
\begin{equation}{\label{eq:aprobar3}}
  {\sqrt{n}}  V(\wmu, \weta_1, \dots, \weta_p,\wsigma) =O_{\prob}(1)\,.
\end{equation}
Denote $\bh(\bX)=\esp(  \bZ| \bX )$, the independence between the errors and the covariates imply that $\esp(  \bZ| (\bX, \varepsilon) )=\esp(  \bZ| \bX )$. Then,  we have that 
\begin{align*}
V_{\ell}(a, g_1, \dots, g_p, \varsigma)& = \esp  \psi_1\left(\frac{\sigma \varepsilon +\mu-a+ \sum_{j=1}^p \left(\eta_j(X_j)- g_j(X_{j})\right)}{\varsigma}\right)\,   Z_{\ell} \\
& =\esp \left\{ \psi_1\left(\frac{\sigma \varepsilon +\mu-a+ \sum_{j=1}^p \left(\eta_j(X_j)- g_j(X_{j})\right)}{\varsigma}\right)\, \esp(  Z_{\ell}|(\bX,\varepsilon))\right\}\\
& =\esp \left\{ \psi_1\left(\frac{\sigma \varepsilon +\mu-a+ \sum_{j=1}^p \left(\eta_j(X_j)- g_j(X_{j})\right)}{\varsigma}\right)\, h_{\ell}(\bX)\right\} \,.
\end{align*}
From assumption \ref{ass:zmean0},  $\bh(\bX)= \bcero_q$, so \eqref{eq:aprobar3} holds  concluding the proof.
\end{proof}

 \begin{lemma}\label{lema:An} Let $(Y_i,\bZ_i\trasp,\bX_i\trasp)\trasp$ be i.i.d. observations satisfying \eqref{eq:plam} where the errors $\varepsilon_i$ are independent of the covariates $(\bZ_i\trasp,\bX_i\trasp)\trasp$. Let $\rho_1$ be a function satisfying \ref{ass:rho_bounded_derivable} and \ref{ass:psi-dos-der} and assume that \ref{ass:wsigma},   \ref{ass:weta} and \ref{ass:zmom2} hold. Then, for any random sequence  $\wtbbe_n\convprob \bbe$, we have that $\bA_n(\wtbbe_n)\convprob \bA$, where
 \begin{align}
\label{eq:matrizA}
 \bA  & =   \frac{1}{\sigma^2}\esp \psi_1^\prime\left(\varepsilon\right)  \esp\left(\bZ\bZ\trasp\right)=\frac{1}{\sigma^2}\esp \psi_1^\prime\left(\varepsilon\right)  \bV_{\bz} \\
  \label{eq:matrizAn}
 \bA_n(\bb) & =   \frac{1}{\wsigma^2}\frac{1}{n} \sum_{i=1}^n\psi_1^\prime\left(\frac{Y_i-\wmu-\sum_{j=1}^p \weta_j(X_{ij})-\bZ_i\trasp \bb}{\wsigma}\right)  \,\bZ_i \bZ_i \trasp\,.
 \end{align}
\end{lemma}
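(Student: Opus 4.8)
The plan is to compare $\bA_n(\wtbbe_n)$ with the i.i.d.\ average
$$\wtbA_n=\frac{1}{\sigma^2}\,\frac{1}{n}\sum_{i=1}^n\psi_1^{\prime}(\varepsilon_i)\,\bZ_i\bZ_i\trasp\,,$$
and to reduce the statement to an ordinary law of large numbers together with the uniform continuity of $\psi_1^{\prime}$. First I would record that \ref{ass:psi-dos-der} forces $\psi_1^{\prime}$ to be bounded on $\real$: since $\rho_1$ is twice continuously differentiable, $\psi_1^{\prime}$ is continuous, hence bounded on $[-1,1]$, while $|\psi_1^{\prime}(t)|\le|\varphi_1(t)|\le\|\varphi_1\|_\infty$ for $|t|\ge1$. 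Because $\varepsilon$ is independent of $\bZ$ and $\esp\|\bZ\|^2<\infty$ by \ref{ass:zmom2}, the matrices $\psi_1^{\prime}(\varepsilon_i)\,\bZ_i\bZ_i\trasp$ are i.i.d.\ and integrable, so the strong law and independence give $\wtbA_n\convpp\sigma^{-2}\,\esp\psi_1^{\prime}(\varepsilon)\,\esp(\bZ\bZ\trasp)=\bA$; the strong law also gives $n^{-1}\sum_i\|\bZ_i\|^2\convpp\esp\|\bZ\|^2$, a fact I will use repeatedly. It then suffices to show $\bA_n(\wtbbe_n)-\wtbA_n\convprob\bcero$ (the zero matrix), which I would obtain from the matrix-norm bound
$$\bigl\|\bA_n(\wtbbe_n)-\wtbA_n\bigr\|\le\Bigl|\tfrac{1}{\wsigma^2}-\tfrac{1}{\sigma^2}\Bigr|\,\|\psi_1^{\prime}\|_\infty\,\frac1n\sum_{i=1}^n\|\bZ_i\|^2+\frac{1}{\sigma^2}\,\frac1n\sum_{i=1}^n\Bigl|\psi_1^{\prime}\!\Bigl(\tfrac{r_i(\wtbbe_n)}{\wsigma}\Bigr)-\psi_1^{\prime}(\varepsilon_i)\Bigr|\,\|\bZ_i\|^2\,,$$
where $r_i(\bb)=Y_i-\wmu-\sum_{j=1}^p\weta_j(X_{ij})-\bZ_i\trasp\bb$. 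The first term is $o_\prob(1)$ because $\wsigma\convpp\sigma>0$ by \ref{ass:wsigma} and $n^{-1}\sum_i\|\bZ_i\|^2$ converges, so everything reduces to showing that the second term tends to $0$ in probability.

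For that, the elementary starting point comes from writing $Y_i=\mu+\bbe\trasp\bZ_i+\sum_j\eta_j(X_{ij})+\sigma\varepsilon_i$, which yields $|r_i(\wtbbe_n)-\sigma\varepsilon_i|\le\delta_n\,(1+\|\bZ_i\|)$ with
$$\delta_n=|\wmu-\mu|+\sum_{j=1}^p\|\weta_j-\eta_j\|_\infty+\|\wtbbe_n-\bbe\|\,,$$
and $\delta_n\convprob0$ by \ref{ass:weta} and the hypothesis $\wtbbe_n\convprob\bbe$. Dividing by $\wsigma$ and adding and subtracting $\varepsilon_i$ gives
$$\Bigl|\tfrac{r_i(\wtbbe_n)}{\wsigma}-\varepsilon_i\Bigr|\le|\varepsilon_i|\,\frac{|\sigma-\wsigma|}{\wsigma}+\frac{\delta_n\,(1+\|\bZ_i\|)}{\wsigma}\,.$$

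The genuinely delicate point — the step I expect to be the main obstacle — is that this bound carries the unbounded factor $|\varepsilon_i|$, so a plain Lipschitz estimate of the increment of $\psi_1^{\prime}$ is not available, since no moment of $\varepsilon$ is assumed. I would resolve it by a double truncation. Fix $\eta>0$; using $\esp\|\bZ\|^2<\infty$ and $\varepsilon\perp\bZ$, pick $T,R$ so large that $\esp\bigl[\|\bZ\|^2(\buno_{\{|\varepsilon|>T\}}+\buno_{\{\|\bZ\|>R\}})\bigr]<\eta$, and using that $\psi_1^{\prime}$ is Lipschitz, hence uniformly continuous on $\real$, pick $\tau>0$ with $\sup_{|s-t|\le\tau}|\psi_1^{\prime}(s)-\psi_1^{\prime}(t)|<\eta$. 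For the indices with $|\varepsilon_i|>T$ or $\|\bZ_i\|>R$, I bound the increment crudely by $2\|\psi_1^{\prime}\|_\infty$ and control $n^{-1}\sum_i\|\bZ_i\|^2(\buno_{\{|\varepsilon_i|>T\}}+\buno_{\{\|\bZ_i\|>R\}})$ by the strong law, so this contribution is at most $2\|\psi_1^{\prime}\|_\infty\,\eta$ with probability tending to one. For the remaining indices, on the event $\itA_n=\{\sigma/2\le\wsigma\le2\sigma\}\cap\{|\sigma-\wsigma|<\tau^{\prime}\}\cap\{\delta_n<\tau^{\prime}\}$ — which has probability tending to one by \ref{ass:wsigma}, \ref{ass:weta} and $\wtbbe_n\convprob\bbe$ — the last display yields $|r_i(\wtbbe_n)/\wsigma-\varepsilon_i|\le 2(T+R+1)\tau^{\prime}/\sigma<\tau$ once $\tau^{\prime}$ is chosen small relative to the now-fixed constants $T,R,\sigma,\tau$, so each such increment is $<\eta$ and this contribution is at most $\eta\,n^{-1}\sum_i\|\bZ_i\|^2\le\eta(\esp\|\bZ\|^2+1)$ with probability tending to one. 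Combining the pieces, $n^{-1}\sum_i|\psi_1^{\prime}(r_i(\wtbbe_n)/\wsigma)-\psi_1^{\prime}(\varepsilon_i)|\,\|\bZ_i\|^2\le\eta\bigl(\esp\|\bZ\|^2+1+2\|\psi_1^{\prime}\|_\infty\bigr)$ with probability tending to one, and letting $\eta\downarrow0$ closes the argument. I would also remark that an alternative, heavier route is to run a bracketing/Glivenko--Cantelli argument over the class $\{(y,\bx,\bz)\mapsto\varsigma^{-2}\psi_1^{\prime}((y-a-\sum_jg_j(x_j)-\bz\trasp\bb)/\varsigma)\,z_kz_\ell\}$ exactly as in the proof of Lemma \ref{lema:5.1}, but handling the dependence on $\varsigma$ there would additionally require $\sup_t|t\,\psi_1^{\prime\prime}(t)|<\infty$, so the direct truncation argument is preferable.
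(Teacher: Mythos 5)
Your proof is correct, but it takes a genuinely different route from the paper's. The paper first strips off the $1/\wsigma^2$ factor by consistency of $\wsigma$, then splits each entry of the sum into a piece evaluated at the true nuisance functions $(\mu,\eta_1,\dots,\eta_p)$ plus a correction term controlled by the Lipschitz constant of $\psi_1^{\prime}$ and $|\wmu-\mu|+\sum_{j}\|\weta_j-\eta_j\|_{\infty}$; the main piece is then handled by a uniform law of large numbers (Lemma 3.10 in van de Geer, 2000) over the class \eqref{eq:claseFnpsi} indexed by $(\varsigma,\bb)$ ranging in a compact set, followed by dominated convergence to identify the limit $M(\bcero,\sigma)=\bB_{\ell,s}$. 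You instead compare $\bA_n(\wtbbe_n)$ directly with the i.i.d.\ average $\sigma^{-2}n^{-1}\sum_{i}\psi_1^{\prime}(\varepsilon_i)\bZ_i\bZ_i\trasp$ and control the increment of $\psi_1^{\prime}$ pointwise, using a double truncation in $|\varepsilon_i|$ and $\|\bZ_i\|$ to neutralize the unbounded factor $|\varepsilon_i|\,|\sigma-\wsigma|/\wsigma$ produced by the scale perturbation --- the step you rightly flag as the delicate one, and which the paper's empirical-process route sidesteps because there the scale enters only through the compact index set of the function class. Your argument buys self-containedness (only the strong law, boundedness and uniform continuity of $\psi_1^{\prime}$ are needed, no covering or bracketing machinery), at the price of the more delicate truncation bookkeeping; the paper's version is more modular, reusing the empirical-process setup already in place for Lemma \ref{lema:5.1}. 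One cosmetic remark: your closing aside that a uniform-LLN route would additionally require $\sup_t|t\,\psi_1^{\prime\prime}(t)|<\infty$ does not apply to the paper's actual argument, which rests on continuity in the compact parameter plus an integrable envelope rather than on Lipschitz-in-parameter brackets.
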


\begin{proof}
By the consistency of $\wsigma$, it will be enough to show that $\bB_n(\wtbbe)\convprob \bB$, where $\bB=\esp \psi_1^\prime\left(\varepsilon\right)  \bV_{\bz} $ and
$$\bB_n(\bb)=\frac{1}{n} \sum_{i=1}^n\psi_1^\prime\left(\frac{Y_i-\wmu-\sum_{j=1}^p \weta_j(X_{ij})-\bZ_i\trasp \bb}{\wsigma}\right)  \,\bZ_i \bZ_i \trasp\;.$$
We will show the convergence component--wise, for that reason, given $1\le \ell, s\le q$, denote $\bB_{n,\ell,s}(\bb)$ and $\bB_{\ell,s}$ the $(\ell,s)-$components of $\bB_n(\bb)$ and $\bB$, respectively.

Note that $\bB_{n,\ell,s}(\wtbbe)= \bB_{n,\ell,s}^{(1)}+ \bB_{n,\ell,s}^{(2)} $
\begin{align*}
 \bB_{n,\ell,s}^{(1)}& =  \frac{1}{n} \sum_{i=1}^n\psi_1^\prime\left(\frac{Y_i-\mu-\sum_{j=1}^p \eta_j(X_{ij})-\bZ_i\trasp \wtbbe}{\wsigma}\right)  \, Z_{i,\ell}  Z_{i,s}=\frac{1}{n} \sum_{i=1}^n\psi_1^\prime\left(\frac{\sigma \varepsilon_i+ \bZ_i\trasp(\bbe- \wtbbe)}{\wsigma}\right)  \, Z_{i,\ell}  Z_{i,s}\\
\bB_{n,\ell,s}^{(2)} & = \frac{1}{n} \sum_{i=1}^n\left\{\psi_1^\prime\left(\frac{Y_i-\wmu-\sum_{j=1}^p \weta_j(X_{ij})-\bZ_i\trasp \wtbbe}{\wsigma}\right)-\psi_1^\prime\left(\frac{Y_i-\mu-\sum_{j=1}^p \eta_j(X_{ij})-\bZ_i\trasp \wtbbe}{\wsigma}\right) \right\}  \, Z_{i,\ell}  Z_{i,s} 
\end{align*}
Assumption \ref{ass:psi-dos-der} entails that
$$\left|\bB_{n,\ell,s}^{(2)}\right|\le K_{\psi^{\prime}} \frac{1}{\wsigma}\left\{|\wmu-\mu|+\sum_{j=1}^p \|\weta_j-\eta_j\|_{\infty}\right\} \frac{1}{n} \sum_{i=1}^n \left| Z_{i,\ell}  Z_{i,s} \right |\,,$$
where $ K_{\psi^{\prime}}$ stands for the Lipschitz constant of $\psi^{\prime}$. Then, using assumptions \ref{ass:wsigma} and \ref{ass:weta} and the fact that $\esp\|\bZ\|^2<\infty$, we get that $\bB_{n,\ell,s}^{(2)}\convpp 0$.

To show that $ \bB_{n,\ell,s}^{(1)} \convprob \bB_{\ell,s}$   consider the class of functions
\begin{equation}{\label{eq:claseFnpsi}}
\itF=\left\{ f(\varepsilon, \bz)=\psi_1^\prime\left( \frac{\sigma \varepsilon+ \bz\trasp \bb}{\varsigma}\right) \, z_{\ell}  z_{s}\;,\;  \bb\in\real^q\;,\;  \|\bb\|\le 1\;,\; \varsigma\in  \left[\frac{\sigma}2, 2\sigma\right]\right\} \;.
\end{equation}
Using that from \ref{ass:zmom2} $\esp \|\bZ\|^2 <\infty$, the continuity of $\psi_1^\prime$ and the fact that $\Theta=\{(\varsigma,  \bb):  \bb\in\real^q, \|\bb\|\le 1, \varsigma\in [ {\sigma}/2, 2\sigma]\}$ is compact, we immediately obtain from Lemma 3.10 in \citet{sara2000} that 
$$ \sup_{f\in \itF } \left|P_n f - P f\right|\convpp 0 \,,$$
where     $P_n$ the empirical distribution of $(\varepsilon_i,\bZ_i\trasp)\trasp$. Therefore, taking into account that $\wsigma \convpp \sigma$ and $\bbe- \wtbbe\convprob 0$, we obtain that
\begin{equation}\label{eq:convMn}
 \frac{1}{n} \sum_{i=1}^n\psi_1^\prime\left(\frac{\sigma \varepsilon_i+ \bZ_i\trasp(\bbe- \wtbbe)}{\wsigma}\right)  \, Z_{i,\ell}  Z_{i,s}- M(\bbe- \wtbbe,\wsigma)\convprob 0\,,
 \end{equation}
where $M(\bb, \varsigma)=\esp\left\{\psi_1^\prime\left(\left({\sigma \varepsilon+ \bZ\trasp\bb)}\right)/{\varsigma}\right)  \, Z_{ \ell}  Z_{s}\right\}$.

The Dominated Convergence Theorem together with the independence between the errors and the covariates imply that $\lim_{\bb \to \bcero, \varsigma\to \sigma}M(\bb, \varsigma)= \bB_{\ell,s}$, thus  using the fact that $\bbe- \wtbbe\convprob 0$ and $\wsigma\convpp \sigma$ and \eqref{eq:convMn}, we conclude the proof.
\end{proof}

\vskip0.1in
\begin{proof}[Proof of Theorem \ref{teo:rate}]
We will begin the proof considering both the situation of twice differentiable and  ADALASSO and then, when needed, we will indicate the different approaches to be taken into account for both type of penalties.

 For the sake of simplicity, we denote
\begin{equation}
\label{eq:wELEn}
\wELE_n(\bb)=L_n\left(\wmu,\weta_1,\dots,\weta_p,\wsigma, \bb\right)\,,
\end{equation}
where with $L_n$ is defined in \eqref{eq:funcionLn}. Then,  
$$PL_{n,\blach_n}(\bb)=L_n\left(\wmu,\weta_1,\dots,\weta_p,\wsigma, \bb\right)+\itJ_{\blach_n}(\bb)=\wELE_n(\bb)+\itJ_{\blach_n}(\bb)\,,$$ 
and we have strength the dependence of $\bla$ on $n$. Using a Taylor's expansion of order 2 of  $\wELE_n(\bb)$ around $\bbe$, we get
\begin{equation*}
\wELE_n(\wbbe) = \wELE_n(\bbe) + (\wbbe - \bbe)\trasp \nabla \wELE_n(\bbe) + \frac{1}{2} (\wbbe - \bbe)\trasp \bA_n(\wtbbe_n)(\wbbe - \bbe) \,,
\end{equation*}
 where $\wtbbe_n=\bbe + \tau_n(\wbbe - \bbe)$ is an intermediate point between $\bbe$ and $\wbbe$,  $\tau_n \in [0,1]$, $\nabla \wELE_n(\bb)$ is the gradient of the function $\wELE_n(\bb)$  given by 
$$ \nabla \wELE_n(\bb)= - \frac{1}{\wsigma}\frac 1n \sum_{i=1}^n \psi_1\left(\frac{Y_i-\wmu-\sum_{j=1}^p \weta_j(X_{ij})-\bZ_i\trasp \bb}{\wsigma}\right)\,  \bZ_i $$
 and   $\bA_n(\bb)$  defined in \eqref{eq:matrizAn} corresponds to  the  Hessian of $\wELE_n(\bb)$. 
 
Let   $\delta$ be a fixed positive constant and note that $\wtbbe_n\convpp \bbe$, since $\wbbe$ is a consistent estimator of $\bbe$.
Then, 
 Lemma \ref{lema:An} entails that $\bA_n(\wtbbe_n)\convprob \bA$, while assumption \ref{ass:zmom2} implies that the smallest eigenvalue $\xi_1$ of $\bA$ is strictly positive. Therefore, if we denote as  $\itA_n = \left \{\|\bA_n(\wtbbe) - \bA\| < \xi_1 / 2 \right \}$, we get that  there exists $n_1\in \natu$ such that for every $n \geq n_1$, $\prob (\itA_n) > 1 - \delta/4$. Hence, in $\itA_n$, we have the lower bound
 \begin{align*}
 (\wbbe - \bbe)\trasp \bA_n(\wtbbe_n)(\wbbe - \bbe) & =  (\wbbe - \bbe)\trasp \bA (\wbbe - \bbe) +  (\wbbe - \bbe)\trasp\left( \bA_n(\wtbbe_n)-\bA\right)(\wbbe - \bbe) 
\\
 & \ge  \xi_1\, \|\wbbe - \bbe\|^2 -  \|\wbbe - \bbe\|^2 \|\bA_n(\wtbbe)   - \bA\| \ge \frac{1}{2} \xi_1\, \|\wbbe - \bbe\|^2\,,
 \end{align*}
which, together with the fact that $PL_{n,\blach_n}(\wbbe)\le PL_{n,\blach_n}(\bbe)$, leads to
\begin{align}
0 &\geq PL_{n,\blach_n}( \wbbe) -PL_{n,\blach_n}(\bbe)=\wELE_n(\wbbe)+\itJ_{\blach_n}(\wbbe)-\wELE_n(\bbe) -\itJ_{\blach_n}(\bbe) \nonumber \\
&\ge  (\wbbe  - \bbe )\trasp \nabla \wELE_n(\bbe) +   \frac{\xi_1}{2} \|\wbbe - \bbe\|_2^2 +\itJ_{\blach_n}(\wbbe) -  \itJ_{\blach_n}(\bbe)\,. \label{eq:rate_ineq}
\end{align}
Note that $|(\wbbe  - \bbe )\trasp \nabla \wELE_n(\bbe)|\le   \|\wbbe_n - \bbe_0\| \|  \nabla \wELE_n(\bbe) \|$, so $ (\wbbe  - \bbe )\trasp \nabla \wELE_n(\bbe) \ge - \,\|\wbbe - \bbe\| \|  \nabla \wELE_n(\bbe) \|$ and from \eqref{eq:rate_ineq},  we get
\begin{align}
0 &\geq PL_{n,\blach_n}( \wbbe) -PL_{n,\blach_n}(\bbe) \ge - \,\|\wbbe - \bbe\|\; \|  \nabla \wELE_n(\bbe) \| +   \frac{\xi_1}{2} \|\wbbe - \bbe\|_2^2 +\itJ_{\blach_n}(\wbbe) -  \itJ_{\blach_n}(\bbe) 
\label{eq:rate_ineq2}
\end{align}

Lemma \ref{lema:5.1} entails that $\sqrt{n} \;\nabla \wELE_n(\bbe)  = O_\prob(1)$, so there exists a constant $M_1$ such that, for all $n$, 
$\prob(\itB_n) > 1- \delta/4$, where $\itB_n = \{\|\sqrt{n}\; \nabla \wELE_n(\bbe) \| < M_1\}$. 
Therefore, using \eqref{eq:rate_ineq2}, we get that in $\itA_n\cap \itB_n$, 
\begin{align}
0 &\geq PL_{n,\blach_n}( \wbbe) -PL_{n,\blach_n}(\bbe)\ge  -  \|\wbbe  - \bbe \| \frac{1}{\sqrt{n}} \|\sqrt{n}\; \nabla \wELE_n(\bbe) \| +   \frac{\xi_1}{2} \|\wbbe - \bbe\|_2^2 +\itJ_{\blach_n}(\wbbe) -  \itJ_{\blach_n}(\bbe)
\nonumber\\
& \ge  -  \|\wbbe - \bbe\| \frac{M_1}{\sqrt{n}} +   \frac{\xi_1}{2} \|\wbbe - \bbe\|_2^2 +\itJ_{\blach_n}(\wbbe) -  \itJ_{\blach_n}(\bbe)\,.
\label{eq:rate_ineq3}
\end{align}
Without loss of generality we assume that the first $k$ components of $\bbe$ are non--null and the remaining ones are $0$, that is, $\bbe = (\bbe_{\act}\trasp, \bcero_{q-k}\trasp)\trasp$ and $\bbe_{\act} \in \real^k$ corresponds to the  vector with active coordinates of $\bbe$. Taking into account that $\itJ_{\blach_n}(\bb)=\sum_{s=1}^q p_{\lambda_{n,s}}(|b_s|)$. and $p_{\lambda}(0)=0$  and $p_{\lambda}(t)\ge 0$, for $t\ge 0$,  we get that 
$$\itJ_{\blach_n}(\wbbe) - \itJ_{\blach_n}(\bbe)= \sum_{s = 1}^k p_{\lambda_{n,s}}(|\wbeta_{s}|) - p_{\lambda_{n,s}}(|\beta_{s}|) + \sum_{s = k+1}^q p_{\lambda_{n,s}}(|\wbeta_{s}|) \ge \sum_{s = 1}^k p_{\lambda_{n,s}}(|\wbeta_{s}|) - p_{\lambda_{n,s}}(|\beta_{s}|)\,,$$
which together with \eqref{eq:rate_ineq3} leads to
\begin{equation}
0 \geq - \|\wbbe  - \bbe \|  \frac{1}{\sqrt{n}} M_1 + \frac{\xi_1}{2} \|\wbbe_n - \bbe\|^2 + \sum_{s = 1}^k p_{\lambda_{n,s}}(|\wbeta_{s}|) - p_{\lambda_{n,s}}(|\beta_{s}|).
\label{eq:rate_ineq4} 
\end{equation} 
\underline{Let us proceed to derive (a).}
 Let $\nu$ be such that $b_n(\nu)\convprob 0$ and define the sets  $\itC_{n,1}$ and $\itC_{n,2}$  as $\itC_{n,2} =  \{\|\wbbe  - \bbe \|  \leq \nu \}$    and
$$\itC_{n,1}=\left\{b_n(\nu)=\sup\{|p_{\lambda_n,s}^{\prime\,\prime}(|\beta_{s}| + \tau \nu)| : \tau \in [-1,1] \;, \;  1 \leq s \leq q \;\; \text{and} \;\; \beta_{s} \neq 0 \} \leq \frac{\xi_1}{2}\right\}.$$

Using that $\wbbe \convprob \bbe$ and $b_n(\nu)\convprob 0$,  we can choose $n_2\in \natu$ such that for every $n \geq n_2$, $\prob(\itC_{n,1})>1-\delta/4$ and $\prob(\itC_{n,2}) \geq 1 - \delta/4$. Let $\itC_n=\itC_{n,1} \cap\itC_{n,2}$, then $\prob(\itC_n)\ge 1-\delta/2$.

Using a second order Taylor's  expansion, we have
\begin{equation*}
 p_{\lambda_{n,s}}(|\wbeta_{s}|) - p_{\lambda_{n,s}}(|\beta_{s}|) = p_{\lambda_{n,s}}^{\prime} (|\beta_{s}|)  ( |\wbeta_{s}|  - | \beta_{s}|) + \frac{1}{2}  p_{\lambda_{n,s}}^{\prime\,\prime}(\theta_{n,s}) ( |\wbeta_{s} | -  |\beta_{s}| )^2\,,
\end{equation*} 
where $\theta_{n,s}$ lies between $|\wbeta_{s}| $ and $|\beta_{s}|$.

 Using that $|\, |a|-|b|\,|\le |a-b|$, $ p_{\lambda_{n,s}}^{\prime} (|\beta_{s}|) \ge 0$ and that in the event $\itA_n \cap \itB_n \cap \itC_n$, $| p_{\lambda_{n,s}}^{\prime\,\prime} (\theta_{n,s})|\le \xi_1/2$, since $\max(0, |\beta_{s}|-\nu) <\theta_{n,s} \le |\beta_{s}|+\nu$, we get that 
\begin{eqnarray}
\itJ_{\blach_n}(\wbbe) - \itJ_{\blach_n}(\bbe) &\ge & \sum_{s=1}^k p_{\lambda_{n,s}}(|\wbeta_{s}|) - p_{\lambda_{n,s}}(|\beta_{s}|)  
\nonumber
\\
 & \ge &  - \sum_{s=1}^k p_{\lambda_{n,s}}^{\prime} (|\beta_{s}|)  |\wbeta_{s}   -   \beta_{s}| -
 \frac{1}{2} \sum_{s=1}^k |p_{\lambda_{n,s}}^{\prime\,\prime}(\theta_{n,s})| (  \wbeta_{s}  -   \beta_{s}  )^2 
 \nonumber
\\
 & \ge & - a_n \sum_{s=1}^k |\wbeta_{s}   -   \beta_{s}|- \frac{\xi_1}{4}  \sum_{s=1}^k (  \wbeta_{s}  -   \beta_{s}  )^2
\nonumber
\\
  & \ge & - a_n \sqrt{k}\; \|\wbbe - \bbe\| - \frac{\xi_1}{4} \|\wbbe - \bbe\|^2\,. 
  \label{eq:cotaJinf} 
\end{eqnarray}
Hence, \eqref{eq:rate_ineq4}  and \eqref{eq:cotaJinf} imply that
\begin{equation*}
0 \geq - \|\wbbe  - \bbe \|  \frac{1}{\sqrt{n}} M_1 + \frac{\xi_1}{2} \|\wbbe  - \bbe \| ^2 - a_n \sqrt{k}\; \|\wbbe - \bbe\| - \frac{\xi_1}{4} \|\wbbe - \bbe\|^2= - \|\wbbe  - \bbe \|  \left(\frac{1}{\sqrt{n}} M_1 +a_n \sqrt{k}\right) + \frac{\xi_1}{4} \|\wbbe  - \bbe \| ^2  \;, 
\end{equation*} 
that is,
\begin{equation*}
0 \geq  \frac{\xi_1}{4} \|\wbbe  - \bbe \|  -  \left(  \frac{1}{\sqrt{n}} M_1 +  a_n \sqrt{k}\right)\ge \frac{\xi_1}{4} \|\wbbe  - \bbe \|  -  \left(  \frac{1}{\sqrt{n}}+  a_n \right)\left( M_1 +\sqrt{k}\right)\;   \;, 
\end{equation*} 
which implies that in $\itA_n \cap \itB_n \cap \itC_n$, we have $   \|\wbbe_n - \bbe\|\le 4\alpha_n (M_1 + \sqrt{k})/\xi_1 $, where $\alpha_n= a_n + n^{-1/2}$.  The result follows now from the fact that, for $n \geq \max( n_1,n_2)$, 
$\prob(\itA_n\cap \itB_n \cap \itC_n ) \geq 1 - \delta$.

\underline{Let us proceed to derive (b).} Taking into account that $\beta_s\ne 0$, for $1\le s\le k$, we have that $ A_{\bbech}=\min_{1\le s\le k} |\beta_s|/2$ is positive. Let $0<\nu<  A_{\bbech}/2$,  using that $\wbbe \convprob \bbe$ and $\wbbe_{\ini} \convprob \bbe$,  we can choose $n_3\in \natu$ such that for every $n \geq n_3$, $\prob(\itC_{n})>1-\delta/4$ where $\itC_n=  \{\|\wbbe  - \bbe \| + \|\wbbe_{\ini}  - \bbe \|\leq \nu \}$. Then, in $\itC_n$,  for any  $1\le s\le k$, we have that $ |\wbeta_{\ini,s}|\ge A_{\bbech}/2$ and 
\begin{equation*}
 p_{\lambda_{n,s}}(|\wbeta_{s}|) - p_{\lambda_{n,s}}(|\beta_{s}|) = \iota_n  \frac{|\wbeta_{s}|-|\beta_{s}|}{ |\wbeta_{\ini,s}|} \ge - \iota_n  \frac{|\wbeta_{s} -\beta_{s}|}{ |\wbeta_{\ini,s}|}\ge - 2\,\frac{\iota_n}{ A_{\bbech}} \|\wbbe- \bbe\|
\end{equation*}  
where we have used again that  $|\, |a|-|b|\,|\le |a-b|$. Hence, we obtain that,  in $\itC_n$, 
\begin{eqnarray}
\sum_{s=1}^k p_{\lambda_{n,s}}(|\wbeta_{s}|) - p_{\lambda_{n,s}}(|\beta_{s}|)  \ge - 2\, \frac{\iota_n}{ A_{\bbech}} \,k\, \|\wbbe- \bbe\|
  \label{eq:cotaadaptiveinf} 
\end{eqnarray}
Hence, using that in $\itA_n \cap \itB_n \cap \itC_n$, \eqref{eq:rate_ineq4}  and \eqref{eq:cotaadaptiveinf} hold, we get   that, in $\itA_n \cap \itB_n \cap \itC_n$,
\begin{equation*}
0 \geq - \|\wbbe  - \bbe \|  \frac{1}{\sqrt{n}} M_1 + \frac{\xi_1}{2} \|\wbbe  - \bbe \| ^2 - 2\,  \frac{\iota_n}{ A_{\bbech}} \,k\, \|\wbbe- \bbe\|= - \|\wbbe  - \bbe \|  \left(\frac{1}{\sqrt{n}} M_1 + 2\, \frac{\iota_n}{ A_{\bbech}} \,k\,\right) + \frac{\xi_1}{4} \|\wbbe  - \bbe \| ^2  \;, 
\end{equation*} 
that is,
\begin{equation*}
0 \geq  \frac{\xi_1}{4} \|\wbbe  - \bbe \|  -  \left(  \frac{1}{\sqrt{n}} M_1 + 2\,  \frac{\iota_n}{ A_{\bbech}} \,k\,\right)=   \frac{\xi_1}{4} \|\wbbe  - \bbe \|  -    \frac{1}{\sqrt{n}}\left( M_1  + 2\,  \frac{\sqrt{n}\, \iota_n}{ A_{\bbech}} \,k\right)\;   \;, 
\end{equation*}
which implies that
$$ \sqrt{n} \|\wbbe  - \bbe \|  \le \frac{4}{\xi_1}   \left(M_1 + 2\,  \frac{\sqrt{n}\, \iota_n}{ A_{\bbech}} \,k\right) $$
Taking into account that $\sqrt{n}\, \iota_n=O_{\prob}(1)$, we conclude that there exists $M_{\iota}>0$ such that $\prob(\itD_n)>1-\delta/4$, for all $n$, where $\itD_n=\{\sqrt{n}\, \iota_n\le\; M_{\iota} \} $. Hence, the set $\itE_n=\itA_n \cap \itB_n \cap \itC_n\cap \itD_n$ has probability larger than $1-\delta$, for $n\ge  \max( n_1,n_3)$ and in $\itE_n$,   $   \sqrt{n} \|\wbbe_n - \bbe_0\|\le ({4}/{\xi_1})  (M_1 + 2\, k M_{\iota}/ A_{\bbech})  $ which concludes the proof.
\end{proof}

\begin{proof}[Proof of Theorem \ref{teo:rate2}]
The proof follows similar arguments to those considered in the proof of Theorem 3 in \citet{Bianco:Boente:Chebi:2022}, but adapted to the model we are considering.  
Given $\tau > 0$, we will show that $\prob\left(\wbbe_{\noact} = \bcero_{q-k}\right) > 1 - \tau$ for $n$ large enough. As in the proof of Theorem \ref{teo:rate}, we give the common steps and then differentiate according to the penalty used.

Define $V_n: \real^k \times \real^{q-k} \to \real$ as
$$
V_n(\bu_1, \bu_2) = \wELE_n\left (\bbe_{\act} + \frac{\bu_1}{\sqrt{n}}  ,  \frac{\bu_2}{\sqrt{n}}\right ) + \itJ_{\blach}\left (\bbe_{\act} + \frac{\bu_1}{\sqrt{n}}, \frac{\bu_2}{\sqrt{n}}\right ),
$$ 
where $\wELE_n(\bb)$ is defined in \eqref{eq:wELEn}. Taking into account that $\sqrt{n}\|\wbbe  - \bbe\|=O_{\prob}(1)$, we have that there exists  $C > 0$   such that $\prob(\itC_n) \geq 1 - \tau/4$, for all $n\in \natu$, where $\itC_n = \{\sqrt{n}\|\wbbe  - \bbe\| \leq C\} $.
Then taking into account that  $\bbe=(\bbe_{\act}\trasp,\cero_{q-k}\trasp)\trasp$, for each $\omega \in \itC_n$, we have that $\wbbe$ can be written as 
\begin{equation}
\label{eq:wbbeenCn} 
\wbbe  = \left(\bbe_{\act}\trasp + \frac{\wbu_{1}\trasp}{\sqrt{n}},  \frac{\wbu_{2}\trasp}{\sqrt{n}}\right)\trasp \,,
\end{equation}
where $\|\wbu\|\le C$ and $\wbu = (\wbu_{1}\trasp, \wbu_{2}\trasp)\trasp$, $\wbu_1=\wbu_{1,n} \in \real^{k}$, $\wbu_2=\wbu_{2,n} \in \real^{q-k}$. Using that   \eqref{eq:opt-plam} implies that $\wbbe=\argmin_{\bb\in\real^q}PL_{n,\blach}(\bb)=\argmin_{\bb\in\real^q} \{\wELE_n(\bb)+\itJ_{\blach}(\bb)\}$ and that for $\omega \in \itC_n$, we have  the representation \eqref{eq:wbbeenCn}, we get that
\begin{equation}
\label{eq:un}
(\wbu_{1}\trasp, \wbu_{2}\trasp)\trasp=\argmin_{\|\bu_1\|_2^2 + \|\bu_2\|_2^2\le C^2}V_n(\bu_1, \bu_2) \,.
\end{equation} 
Our goal is to prove that, with high probability, ${V_n(\bu_1, \bu_2) - V_n(\bu_1,\bcero_{q-k})> 0}$  for all $\|\bu_1\|_2^2+\|\bu_2\|_2^2\le C^2$ with $\bu_2\ne \bcero_{q-k}$. 

Take $\bu_1 \in \real^{k}$ and $\bu_2 \neq \bcero_{q-k}$ such that $\|\bu_1\|_2^2+\|\bu_2\|_2^2\le C^2$. Note that $V_n(\bu_1, \bu_2) - V_n(\bu_1,\bcero_{q-k}) = S_{1,n}(\bu)+ S_{2,n}(\bu)$, where $ \bu=(\bu_1\trasp, \bu_2\trasp)\trasp$ and
\begin{align*}
S_{1,n}(\bu) &= \wELE_n\left (\bbe_{\act} + \frac{\bu_1}{\sqrt{n}}, \frac{\bu_2}{\sqrt{n}}\right ) - \wELE_n\left (\bbe_{\act} + \frac{\bu_1}{\sqrt{n}}, \bcero_{q-k}\right ),\\
S_{2,n}(\bu) &=  \itJ_{\blach}\left (\bbe_{\act} + \frac{\bu_1}{\sqrt{n}}, \frac{\bu_2}{\sqrt{n}}\right ) -\itJ_{\blach}\left (\bbe_{\act} + \frac{\bu_1}{\sqrt{n}}, \bcero_{q-k}\right )\,.
\end{align*}
\underline{Let us begin by deriving (a).} For that purpose, we will first provide a lower bound for $S_{2,n}(\bu)$.  
Using \eqref{eq:condicionpl}, we obtain that there exist   $n_1=n_{1,C}\in \natu$ and  $K=K_C>0$ such that for any $n \geq n_{1}$, $\prob(\itA_n)> 1-\tau/4$, where 
$$\itA_n= \left\{p_{\lambda_s}\left(\frac{|u|}{\sqrt{n}}\right)\ge K \lambda_s \frac{|u|}{\sqrt{n}}\; \mbox{for any $|u|\le C$, $k+1\le s\le q$}\right\}\,.$$ 
Then,  if we denote $u_s$ the $s-$th component of the vector $\bu$, using that $|u_s|\le \|\bu\|\le C$,  we have that in $\itA_n$
\begin{align*}
S_{2,n}(\bu)&= \sum_{s=1}^k p_{\lambda_{n,s}}\left(\left|\beta_s+ \frac{u_{s}}{\sqrt n}\right|\right)+\sum_{s=k+1}^{q} p_{\lambda_{n,s}}\left(\left| \frac{u_{s}}{\sqrt n}\right|\right)- \sum_{s=1}^k p_{\lambda_{n,s}}\left(\left|\beta_s+ \frac{u_{s}}{\sqrt n}\right|\right)\\
& = \sum_{s=k+1}^{q} p_{\lambda_{n,s}}\left(\frac{|u_{s}|}{\sqrt n} \right) \\
& \geq K  \sum_{s=k+1}^q  \lambda_s \frac{|u_s|}{\sqrt{n}} \geq K \frac{\min_{k+1\le s\le q} \lambda_s}{\sqrt{n}}  \sum_{s=k+1}^q     |u_s|  \ge K \frac{\min_{k+1\le s\le q} \lambda_s}{\sqrt{n}} \|\bu_2\|\,,
\end{align*} 
 where the last inequality follows from the fact that for any $\bu\in \real^d$, $\|\bu\|\le \sum_{j=1}^d |u_j|$.

We will now bound $S_{1,n}(\bu)$. Let $\bu_{n}^{(0)}=(1/\sqrt{n})\left (\bcero_{k}\trasp,  \bu_2\trasp\right )\trasp=(1/\sqrt{n}) \bu_0$. As in the proof of Theorem \ref{teo:rate}, using a Taylor's expansion of order two, we obtain that
$$
S_{1,n}(\bu) = \nabla\wELE_n\left (\bbe_{\act} + \frac{\bu_1}{\sqrt{n}}, \bcero_{q-k}\right )\trasp \bu_{n}^{(0)}   
 + \frac{1}{2} (\bu_{n}^{(0)})\trasp \bA_n(\wtbb_{n})\bu_{n}^{(0)}  \,,$$
 where $\wtbb_{n}  =(\wtbb_{1,n}\trasp , \wtbb_{2,n}\trasp )\trasp$ with $\wtbb_{1,n}=\bbe_{\act} +    {\bu_1}/{\sqrt{n}}$ and $\wtbb_{2,n}= \alpha_{n,1}  {\bu_2}/{\sqrt{n}}$, 
for some $\alpha_{n,1} \in [0,1]$ is an intermediate point, $\nabla \wELE_n(\bb)$ is the gradient of the function $\wELE_n(\bb)$  given by 
$$ \nabla \wELE_n(\bb)= - \frac{1}{\wsigma}\frac 1n \sum_{i=1}^n \psi_1\left(\frac{Y_i-\wmu-\sum_{j=1}^p \weta_j(X_{ij})-\bZ_i\trasp \bb}{\wsigma}\right)\,  \bZ_i $$
 and   $\bA_n(\bb)$ is defined in \eqref{eq:matrizAn} and corresponds to  the  Hessian of $\wELE_n(\bb)$. 
  
  Note that the Mean Value Theorem entails that
$$\nabla\wELE_n\left (\bbe_{\act} + \frac{\bu_1}{\sqrt{n}}, \bcero_{q-k}\right )\trasp \bu_{n}^{(0)} = \nabla\wELE_n\left (\bbe\right )\trasp \bu_{n}^{(0)}   
  + \left(\begin{array}{c}
\dst  \frac{\bu_1}{\sqrt{n}}\\
  \bcero_{q-k}
  \end{array}\right)\trasp \bA_n(\wtbb_{n}^{\star}) \bu_{n}^{(0)}$$
with $\wtbb_{n}^{\star}= \alpha_{n,2}  ( {\bu_1}\trasp ,  \bcero_{q-k}\trasp )\trasp/{\sqrt{n}}$ with $\alpha_{n,2} \in [0,1]$.

 Thus, we can write $S_{1,n}(\bu)=S_{11,n}(\bu)+S_{12,n}(\bu)+S_{13,n}(\bu)$ where
\begin{align*}
S_{11,n}(\bu) &=  \nabla\wELE_n\left (\bbe\right )\trasp \bu_{n}^{(0)}    = - \frac{1}{\wsigma}\frac 1n \frac{1}{\sqrt{n}}\sum_{i=1}^n \psi_1\left(\frac{Y_i-\wmu-\sum_{j=1}^p \weta_j(X_{ij})-\bZ_i\trasp \bbe}{\wsigma}\right)\,  \bZ_i\trasp   \bu_0\,,\\
S_{12,n}(\bu) &= \frac{1}{n}\left(\bu_1\trasp, \bcero_{q-k}\trasp\right) \bA_n(\wtbb_{n}^{\star}) \bu_{0} \,,\\
S_{13,n}(\bu)& = \frac{1}{2} \frac{1}{n}  \bu_{0} \trasp \bA_n(\wtbb_{n})\bu_{0} \,.
\end{align*}
Then, 
\begin{align*}
 n |S_{11,n}(\bu)|& \le \left\|\sqrt{n} \;\nabla\wELE_n\left (\bbe\right ) \right\| \| \bu_0\|= \left\|\sqrt{n} \;\nabla\wELE_n\left (\bbe\right ) \right\| \| \bu_2\| \,,\\ 
n |S_{12,n}(\bu)|& \le \|\bu_1\|\left\|\bA_n(\wtbb_{n}^{\star})\right\| \| \bu_0\|\le C\, \left\|\bA_n(\wtbb_{n}^{\star})\right\| \| \bu_2\|\,,\\ 
n |S_{13,n}(\bu)|& \le \frac{1}{2}\|\bu_0\|\left\|\bA_n(\wtbb_{n})\right\| \| \bu_0\|\le \frac{C}{2}\, \left\|\bA_n(\wtbb_{n})\right\| \| \bu_2\|\,.
\end{align*}
 Lemma \ref{lema:An} entails that $\bA_n(\wtbb_{n}^{\star})\convprob \bA$ and $\bA_n(\wtbb_{n})\convprob \bA$, then, $n  (|S_{12,n}(\bu)+S_{13,n}(\bu)|)\le C  \, \|\bu_2\|\, W_{n,0}$ with $W_{n,0}= O_{\prob}(1)$. Besides,   Lemma \ref{lema:5.1} entails that $\sqrt{n} \;\nabla \wELE_n(\bbe)  = O_\prob(1)$, so  $n    |S_{11,n}(\bu)| =  C\, \|\bu_2\|\, W_{n,1}$ with $W_{n,1}= O_{\prob}(1)$ leading to $ S_{1,n}(\bu)= C\, W_n \|\bu_2\|/n$, where $W_n=O_{\prob}(1)$.
  
Let $M=M_C > 0$ be such that $ \prob\left(n\,|S_{1,n}(\bu)| >  M    \|\bu_2\|\right)<\tau/4$. Then, the set $\itB_n=\{n\, S_{1,n}(\bu) > - M  \|\bu_2\|\}$ is such that $ \prob(\itB_n) \geq 1 - \tau / 4 $.

Therefore, in $\itA_n\cap \itB_n$, we get that 
$$S_{1,n}(\bu) + S_{2,n} (\bu) \geq  - \frac{1}{n}\, M    \|\bu_2\| + K \frac{\min_{k+1\le s\le q} \lambda_s}{\sqrt{n}} \|\bu_2\| = 
 \|\bu_2\| \frac{1}{n} \left(K\sqrt{n}  \min_{k+1\le s\le q} \lambda_s  -   \, M\right)$$
Taking into account that $ \sqrt{n}  \min_{k+1\le s\le q} \lambda_s \convprob \infty$, if we define  $\itL_n=\{ \sqrt{n}  \min_{k+1\le s\le q} \lambda_s > (M+1)/K \}$ we have that $\lim_{n\to \infty}\prob(\itL_n)= 1$. Thus, there exists $n_2\in \natu$, such that for $n\ge n_2$, $\prob(\itL_n)>1-\tau/4$.

Take  $\itD_n=\itA_n\cap \itB_n\cap \itL_n\cap \itC_n$, then,  for $n>\max(n_1,n_2)$, $\prob(\itD_n)\ge 1-\tau$. Furthermore, for any $\omega \in \itD_n$, 
we have that
$$n\left(V_n(\bu_1, \bu_2) - V_n(\bu_1,\bcero_{q-k})\right)=n \left(S_{1,n}(\bu) + S_{2,n} (\bu) \right) \geq  \|\bu_2\|>0\,,$$
for any $\bu=(\bu_1\trasp, \bu_2\trasp)\trasp$, such that $\|\bu\|\le C$  and  $\bu_2\ne \bcero_{q-k}$, so $V_n(\bu_1, \bu_2)> V_n(\bu_1,\bcero_{q-k})$. 
Besides, from \eqref{eq:wbbeenCn} and \eqref{eq:un}, we also have that $\wbbe  = \bbe+ \wbu/{\sqrt{n}}$ with   $\|\wbu\|\le C$ and  
$$ \wbu =\argmin_{\|\bu\|\le C: \bu=(\bu_1\trasp, \bu_2\trasp)\trasp}V_n(\bu_1, \bu_2) \,,$$ 
implying that $\wbu_{2}= \bcero_{q-k}$ in $\itD_n$ and concluding the proof. 

\underline{Let us proceed to derive (b).} As in the proof of (a), we give a lower bound for $S_{2,n}(\bu)$.  
Using that $\sqrt{n}\|\wbbe_{\ini}-\bbe\|=O_{\prob}(1)$, we obtain that there exists     $A>0$ such that for any $n\in \natu$, $\prob(\itA_n)> 1-\tau/4$, where $\itA_n= \left\{ \sqrt{n} \max_{k+1\le j\le q}|\wbeta_{\ini,s}|\le A\right\}$. 
Then,  if as above we denote $u_s$ the $s-$th component of the vector $\bu$, using that $|u_s|\le \|\bu\|\le C$,  we have that in $\itA_n$
\begin{align*}
S_{2,n}(\bu)
& = \sum_{s=k+1}^{q} p_{\lambda_{n,s}}\left(\frac{|u_{s}|}{\sqrt n} \right) = \iota_n \sum_{s=k+1}^{q} \frac{|u_{s}|}{\sqrt n} \frac{1}{|\wbeta_{\ini,s}|}  \geq \frac{1}{A} \iota_n \sum_{s=k+1}^{q}  {|u_{s}|} \ge    \frac{1}{  A} \iota_n   \|\bu_2\| \,,
\end{align*} 
where we have used again that $ \sum_{s=k+1}^{q}  {|u_{s}|}\ge \|\bu_2\|$.

As in (a), we have that $ S_{1,n}(\bu)= C\, W_n \|\bu_2\|/n$, where $W_n=O_{\prob}(1)$, so let  $M=M_C > 0$ be such that $ \prob\left(n\,|S_{1,n}(\bu)| >  M    \|\bu_2\|\right)<\tau/4$. Then, the set $\itB_n=\{n\, S_{1,n}(\bu) > - M  \|\bu_2\|\}$ is such that $ \prob(\itB_n) \geq 1 - \tau / 4 $.

Therefore, in $\itA_n\cap \itB_n$, we get that 
$$S_{1,n}(\bu) + S_{2,n} (\bu) \geq  - \frac{1}{n}\, M    \|\bu_2\| +\frac{1}{  A} \iota_n   \|\bu_2\|  = 
 \|\bu_2\| \frac{1}{n} \left( \frac{1}{  A} \,n\,\iota_n   -   \, M\right)$$
Taking into account that $  n\,\iota_n \convprob \infty$, if we define  $\itL_n=\{  n\,\iota_n >   A\, (M+1)\}$ we have that $\lim_{n\to \infty}\prob(\itL_n)= 1$. Thus, there exists $n_0\in \natu$, such that for $n\ge n_0$, $\prob(\itL_n)>1-\tau/4$.

The proof follows now as in (a) defining  $\itD_n=\itA_n\cap \itB_n\cap \itL_n\cap \itC_n$, and taking into account that,  for $n>n_0$, $\prob(\itD_n)\ge 1-\tau$ and for any $\omega \in \itD_n$, we have that for any $\|\bu\|\le C$  such that $\bu_2\ne \bcero_{q-k}$,
$n\left(V_n(\bu_1, \bu_2) - V_n(\bu_1,\bcero_{q-k})\right)  \geq  \|\bu_2\| >0$.
\end{proof}

\small

\bibliographystyle{apalike}
 
\bibliography{referencias2}

\end{document}